\documentclass[a4paper,UKenglish,cleveref,thm-restate]{lipics-v2021}

\hideLIPIcs

\title{Solvability of Approximate Agreement on Graphs and Simplicial Complexes}

\titlerunning{Solvability of Approximate Agreement on Graphs and Simplicial Complexes}

\author{Joel Rybicki}{Humboldt University of Berlin}{joel.rybicki@hu-berlin.de}{}{}
\author{Yaroslav Verbitsky}{Humboldt University of Berlin}{yaroslav.verbitsky@student.hu-berlin.de}{}{}

\authorrunning{J. Rybicki and Y. Verbitsky}

\Copyright{Joel Rybicki and Yaroslav Verbitsky}

\ccsdesc[500]{Theory of computation~Distributed algorithms}

\keywords{approximate agreement, impossibility results, algebraic topology}

\category{}

\relatedversion{}

\acknowledgements{Parts of this work are based on the Master's thesis of the second author \cite{verbitsky2026cliqueagreement}.}

\nolinenumbers

\newcommand{\K}{\mathcal{K}}
\newcommand{\A}{\mathcal{A}}

\DeclareMathOperator{\Bary}{Bary}

\DeclareMathOperator{\conv}{conv}

\DeclareMathOperator{\id}{id}

\DeclareMathOperator{\skel}{skel}

\newcommand{\R}{\mathds{R}}

\newcommand{\I}{\mathcal{I}}
\newcommand{\Ocal}{\mathcal{O}}

\newcommand{\B}{\mathcal{B}}
\newcommand{\C}{\mathcal{C}}

\newcommand{\he}{\simeq}

\newcommand{\Set}[2]{\{ #1 : #2 \}}
\newcommand{\set}[1]{\{ #1 \}}

\newcommand{\Deltava}{\Delta^{V(\A)}}

\newcommand{\skelalphA}{\skel^t\A}
\newcommand{\skelalphDeltava}{\skel^t\Deltava}
\newcommand{\DeltaPhisig}{\Delta^{V(\PhiC(\sigma))}}
\newcommand{\skelalphPhisig}{\skel^t\PhiC(\sigma)}
\newcommand{\skelalphDeltaPhisig}{\skel^t\DeltaPhisig}

\newcommand{\PhiC}{\Phi}

\usepackage{dsfont}
\usepackage{tikz-cd}
\usepackage{tikz}

\newcommand{\GraphAlignedCorner}{1.06066}
\newcommand{\GraphAlignedInner}{0.53033}
\newcommand{\GraphAlignedMid}{0.37712}

\newlength{\GraphVertexRadius}
\newlength{\GraphEdgeWidth}
\newlength{\GraphRowWidth}
\newlength{\AdditionalGraphColumnSep}
\newcommand{\GraphBoxHalf}{1.60}

\colorlet{TriangleFillColor}{blue!15}
\newcommand{\TriangleFillOpacity}{0.45}
\definecolor{GraphOuterOrange}{rgb}{0.99,0.78,0.07}
\colorlet{OctHiddenEdge}{black!50}

\newcommand{\blackvertex}[1]{\fill[black] (#1) circle[radius=\GraphVertexRadius];}

\newcommand{\orangehighlightvertex}[1]{\filldraw[fill=GraphOuterOrange, draw=black, line width=1.0pt]
(#1) circle[radius=\GraphVertexRadius];}

\newcommand{\filltriangle}[3]{\fill[TriangleFillColor, opacity=\TriangleFillOpacity]
(#1) -- (#2) -- (#3) -- cycle;}

\newcommand{\GraphBoundingBox}{\path[use as bounding box]
  (-\GraphBoxHalf,-\GraphBoxHalf)
  rectangle
(\GraphBoxHalf,\GraphBoxHalf);}

\newcommand{\GraphA}{%
  \begin{tikzpicture}[
      baseline={(0,0)},
      x=1cm,
      y=1cm,
      edge/.style={
        draw=black,
        line width=\GraphEdgeWidth,
        line cap=round,
        line join=round
      },
      every node/.style={font=\small}
    ]
    \GraphBoundingBox

    \coordinate (A) at (-\GraphAlignedCorner, \GraphAlignedCorner);
    \coordinate (B) at ( \GraphAlignedCorner, \GraphAlignedCorner);
    \coordinate (C) at ( \GraphAlignedCorner,-\GraphAlignedCorner);
    \coordinate (D) at (-\GraphAlignedCorner,-\GraphAlignedCorner);

    \coordinate (p) at (-\GraphAlignedInner, \GraphAlignedInner);
    \coordinate (q) at ( \GraphAlignedInner, \GraphAlignedInner);
    \coordinate (r) at ( \GraphAlignedInner,-\GraphAlignedInner);
    \coordinate (s) at (-\GraphAlignedInner,-\GraphAlignedInner);
    \coordinate (o) at (0,0);

    \filltriangle{A}{B}{o}
    \filltriangle{B}{C}{o}
    \filltriangle{C}{D}{o}
    \filltriangle{D}{A}{o}

    \draw[edge] (p) -- (q) -- (r) -- (s) -- cycle;
    \draw[edge] (A) -- (o) -- (C);
    \draw[edge] (D) -- (o) -- (B);
    \draw[edge] (p) -- (B);
    \draw[edge] (q) -- (C);
    \draw[edge] (r) -- (D);
    \draw[edge] (s) -- (A);

    \draw[edge] (A) -- (B) -- (C) -- (D) -- cycle;

    \foreach \v in {A,B,C,D} {
      \blackvertex{\v}
    }
    \foreach \v in {p,q,r,s,o} {
      \blackvertex{\v}
    }

    \node[above left=-1pt,  overlay] at (A) {$a$};
    \node[above right=-1pt, overlay] at (B) {$b$};
    \node[below right=-1pt, overlay] at (C) {$c$};
    \node[below left=-1pt,  overlay] at (D) {$d$};
  \end{tikzpicture}%
}

\newcommand{\GraphSubdividedTriangle}{
  \begin{tikzpicture}[
      baseline={(0,0)},
      x=0.9cm,
      y=0.9cm,
      edge/.style={
        draw=black,
        line width=\GraphEdgeWidth,
        line cap=round,
        line join=round
      },
      every node/.style={font=\small}
    ]
    \GraphBoundingBox

    \coordinate (a) at ( 0.00,  1.50);
    \coordinate (b) at (-1.45, -0.92);
    \coordinate (c) at ( 1.45, -0.92);

    \coordinate (x) at (-0.72,  0.29);
    \coordinate (y) at ( 0.00, -0.92);
    \coordinate (z) at ( 0.72,  0.29);

    \filltriangle{a}{x}{z}
    \filltriangle{b}{x}{y}
    \filltriangle{c}{y}{z}
    \filltriangle{x}{y}{z}

    \draw[edge] (a) -- (x) -- (b);
    \draw[edge] (b) -- (y) -- (c);
    \draw[edge] (c) -- (z) -- (a);
    \draw[edge] (x) -- (y) -- (z) -- cycle;

    \foreach \v in {a,b,c,x,y,z} {
      \blackvertex{\v}
    }
\end{tikzpicture}}

\newcommand{\GraphSubdividedTriangleNoYZ}{
  \begin{tikzpicture}[
      baseline={(0,0)},
      x=0.9cm,
      y=0.9cm,
      edge/.style={
        draw=black,
        line width=\GraphEdgeWidth,
        line cap=round,
        line join=round
      },
      every node/.style={font=\small}
    ]
    \GraphBoundingBox

    \coordinate (a) at ( 0.00,  1.50);
    \coordinate (b) at (-1.45, -0.92);
    \coordinate (c) at ( 1.45, -0.92);

    \coordinate (x) at (-0.72,  0.29);
    \coordinate (y) at ( 0.00, -0.92);
    \coordinate (z) at ( 0.72,  0.29);

    \filltriangle{a}{x}{z}
    \filltriangle{b}{x}{y}

    \draw[edge] (a) -- (x) -- (b);
    \draw[edge] (b) -- (y) -- (c);
    \draw[edge] (c) -- (z) -- (a);
    \draw[edge] (x) -- (y);
    \draw[edge] (x) -- (z);

    \foreach \v in {a,b,c,x,y,z} {
      \blackvertex{\v}
    }
\end{tikzpicture}}

\newcommand{\GraphOctahedronNoBottom}{
  \begin{tikzpicture}[
      baseline={(0,0)},
      x=0.9cm,
      y=0.9cm,
      edge/.style={
        draw=black,
        line width=\GraphEdgeWidth,
        line cap=round,
        line join=round
      },
      hiddenedge/.style={
        draw=OctHiddenEdge,
        line width=\GraphEdgeWidth,
        line cap=round,
        line join=round
      },
      every node/.style={font=\small}
    ]
    \GraphBoundingBox

    \coordinate (T) at ( 0.00,  1.50);
    \coordinate (L) at (-1.50, -0.45);
    \coordinate (R) at ( 1.50, -0.45);
    \coordinate (F) at (-0.36, -0.92);
    \coordinate (H) at ( 0.36,  0.02);

    \filltriangle{T}{L}{H}
    \filltriangle{T}{H}{R}
    \filltriangle{T}{L}{F}
    \filltriangle{T}{F}{R}

    \draw[hiddenedge] (T) -- (H);
    \draw[hiddenedge] (L) -- (H);
    \draw[hiddenedge] (H) -- (R);

    \draw[edge] (T) -- (L);
    \draw[edge] (T) -- (R);
    \draw[edge] (T) -- (F);
    \draw[edge] (L) -- (F);
    \draw[edge] (F) -- (R);

    \foreach \v in {T,L,R,F,H} {
      \blackvertex{\v}
    }
\end{tikzpicture}}

\newcommand{\GraphD}{%
  \begin{tikzpicture}[
      baseline={(0,0)},
      x=1cm,
      y=1cm,
      edge/.style={
        draw=black,
        line width=\GraphEdgeWidth,
        line cap=round,
        line join=round
      },
      every node/.style={font=\small}
    ]
    \GraphBoundingBox

    \coordinate (a) at (-\GraphAlignedCorner, \GraphAlignedCorner);
    \coordinate (b) at ( \GraphAlignedCorner, \GraphAlignedCorner);
    \coordinate (c) at ( \GraphAlignedCorner,-\GraphAlignedCorner);
    \coordinate (d) at (-\GraphAlignedCorner,-\GraphAlignedCorner);

    \coordinate (y) at (-\GraphAlignedMid,0);
    \coordinate (x) at ( \GraphAlignedMid,0);

    \filltriangle{a}{b}{x}
    \filltriangle{a}{x}{y}
    \filltriangle{a}{d}{y}
    \filltriangle{x}{b}{c}
    \filltriangle{x}{y}{c}
    \filltriangle{c}{d}{y}

    \draw[edge] (a) -- (b) -- (c) -- (d) -- cycle;
    \draw[edge] (y) -- (x);
    \draw[edge] (a) -- (y);
    \draw[edge] (d) -- (y);
    \draw[edge] (y) -- (c);
    \draw[edge] (a) -- (x);
    \draw[edge] (x) -- (b);
    \draw[edge] (x) -- (c);

    \foreach \v in {a,b,c,d,x,y} {
      \blackvertex{\v}
    }
  \end{tikzpicture}%
}

\newcommand{\GraphBRotated}{
  \begin{tikzpicture}[
      baseline={(0,0)},
      x=1cm,
      y=1cm,
      edge/.style={
        draw=black,
        line width=\GraphEdgeWidth,
        line cap=round,
        line join=round
      },
      every node/.style={font=\small}
    ]
    \GraphBoundingBox

    \begin{scope}[rotate=45]
      \coordinate (N) at ( 0.00, 1.55);
      \coordinate (S) at ( 0.00,-1.55);
      \coordinate (W) at (-1.55, 0.00);
      \coordinate (E) at ( 1.55, 0.00);

      \coordinate (Ndot) at ( 0.00, 1.50);
      \coordinate (Sdot) at ( 0.00,-1.50);
      \coordinate (Wdot) at (-1.50, 0.00);
      \coordinate (Edot) at ( 1.50, 0.00);

      \coordinate (O) at (0,0);

      \coordinate (NW) at (-0.9, 0.9);
      \coordinate (N1) at (-0.3, 0.9);
      \coordinate (N2) at ( 0.3, 0.9);
      \coordinate (NE) at ( 0.9, 0.9);

      \coordinate (E1) at (0.9, 0.3);
      \coordinate (E2) at (0.9,-0.3);

      \coordinate (SE) at ( 0.9,-0.9);
      \coordinate (S2) at ( 0.3,-0.9);
      \coordinate (S1) at (-0.3,-0.9);
      \coordinate (SW) at (-0.9,-0.9);

      \coordinate (W2) at (-0.9,-0.3);
      \coordinate (W1) at (-0.9, 0.3);

      \filltriangle{O}{NW}{N1}
      \filltriangle{O}{N1}{N2}
      \filltriangle{O}{N2}{NE}
      \filltriangle{O}{NE}{E1}
      \filltriangle{O}{E1}{E2}
      \filltriangle{O}{E2}{SE}
      \filltriangle{O}{SE}{S2}
      \filltriangle{O}{S2}{S1}
      \filltriangle{O}{S1}{SW}
      \filltriangle{O}{SW}{W2}
      \filltriangle{O}{W2}{W1}
      \filltriangle{O}{W1}{NW}

      \filltriangle{N}{NW}{N1}
      \filltriangle{N}{N1}{N2}
      \filltriangle{N}{N2}{NE}

      \filltriangle{S}{SW}{S1}
      \filltriangle{S}{S1}{S2}
      \filltriangle{S}{S2}{SE}

      \filltriangle{W}{NW}{W1}
      \filltriangle{W}{W1}{W2}
      \filltriangle{W}{W2}{SW}

      \filltriangle{E}{NE}{E1}
      \filltriangle{E}{E1}{E2}
      \filltriangle{E}{E2}{SE}

      \draw[edge]
      (NW) -- (N1) -- (N2) -- (NE)
      -- (E1) -- (E2) -- (SE)
      -- (S2) -- (S1) -- (SW)
      -- (W2) -- (W1) -- cycle;

      \foreach \v in {NW,N1,N2,NE} {
        \draw[edge] (N) -- (\v);
      }
      \foreach \v in {SW,S1,S2,SE} {
        \draw[edge] (S) -- (\v);
      }
      \foreach \v in {NW,W1,W2,SW} {
        \draw[edge] (W) -- (\v);
      }
      \foreach \v in {NE,E1,E2,SE} {
        \draw[edge] (E) -- (\v);
      }

      \foreach \v in {NW,N1,N2,NE,E1,E2,SE,S2,S1,SW,W2,W1} {
        \draw[edge] (O) -- (\v);
      }

      \foreach \v in {O,NW,N1,N2,NE,E1,E2,SE,S2,S1,SW,W2,W1} {
        \blackvertex{\v}
      }
      \foreach \v in {Ndot,Sdot,Wdot,Edot} {
        \blackvertex{\v}
      }
    \end{scope}
\end{tikzpicture}}

\newcommand{\GraphOctahedronFull}{%
  \begin{tikzpicture}[
      baseline={(0,0)},
      x=0.9cm,
      y=0.9cm,
      edge/.style={
        draw=black,
        line width=\GraphEdgeWidth,
        line cap=round,
        line join=round
      },
      hiddenedge/.style={
        draw=OctHiddenEdge,
        line width=\GraphEdgeWidth,
        line cap=round,
        line join=round
      },
      every node/.style={font=\small}
    ]
    \GraphBoundingBox

    \coordinate (T) at ( 0.00,  1.50);
    \coordinate (B) at ( 0.00, -0.92);
    \coordinate (L) at (-1.50,  0.29);
    \coordinate (R) at ( 1.50,  0.29);
    \coordinate (F) at (-0.36,  0.00);
    \coordinate (H) at ( 0.36,  0.58);

    \filltriangle{T}{L}{H}
    \filltriangle{T}{H}{R}
    \filltriangle{B}{H}{L}
    \filltriangle{B}{R}{H}

    \filltriangle{T}{L}{F}
    \filltriangle{T}{F}{R}
    \filltriangle{B}{L}{F}
    \filltriangle{B}{F}{R}

    \draw[hiddenedge] (T) -- (H);
    \draw[hiddenedge] (H) -- (R);
    \draw[hiddenedge] (B) -- (H);
    \draw[hiddenedge] (L) -- (H);

    \draw[edge] (T) -- (L);
    \draw[edge] (L) -- (B);
    \draw[edge] (B) -- (R);
    \draw[edge] (R) -- (T);

    \draw[edge] (T) -- (F);
    \draw[edge] (F) -- (B);
    \draw[edge] (L) -- (F);
    \draw[edge] (F) -- (R);

    \foreach \v in {T,B,L,R,F,H} {
      \blackvertex{\v}
    }
  \end{tikzpicture}%
}

\newcommand{\GraphAHighlighted}{%
  \begin{tikzpicture}[
      baseline={(0,0)},
      x=1cm,
      y=1cm,
      outeredge/.style={
        draw=black,
        line width=1.0pt,
        line cap=round,
        line join=round
      },
      grayedge/.style={
        draw=black!60,
        line width=\GraphEdgeWidth,
        line cap=round,
        line join=round
      },
      every node/.style={font=\small}
    ]
    \GraphBoundingBox

    \coordinate (A) at (-\GraphAlignedCorner, \GraphAlignedCorner);
    \coordinate (B) at ( \GraphAlignedCorner, \GraphAlignedCorner);
    \coordinate (C) at ( \GraphAlignedCorner,-\GraphAlignedCorner);
    \coordinate (D) at (-\GraphAlignedCorner,-\GraphAlignedCorner);

    \coordinate (p) at (-\GraphAlignedInner, \GraphAlignedInner);
    \coordinate (q) at ( \GraphAlignedInner, \GraphAlignedInner);
    \coordinate (r) at ( \GraphAlignedInner,-\GraphAlignedInner);
    \coordinate (s) at (-\GraphAlignedInner,-\GraphAlignedInner);
    \coordinate (o) at (0,0);

    \draw[grayedge] (p) -- (q) -- (r) -- (s) -- cycle;
    \draw[grayedge] (A) -- (o) -- (C);
    \draw[grayedge] (D) -- (o) -- (B);
    \draw[grayedge] (p) -- (B);
    \draw[grayedge] (q) -- (C);
    \draw[grayedge] (r) -- (D);
    \draw[grayedge] (s) -- (A);

    \draw[outeredge] (A) -- (B) -- (C) -- (D) -- cycle;

    \foreach \v in {A,B,C,D} {
      \orangehighlightvertex{\v}
    }
    \foreach \v in {p,q,r,s,o} {
      \blackvertex{\v}
    }

    \node[above left=-1pt,  overlay] at (A) {$a$};
    \node[above right=-1pt, overlay] at (B) {$b$};
    \node[below right=-1pt, overlay] at (C) {$c$};
    \node[below left=-1pt,  overlay] at (D) {$d$};
  \end{tikzpicture}%
}

\begin{document}

\maketitle

\begin{abstract}
  Approximate agreement tasks on graphs are discrete relaxations of consensus,
  where each process in a distributed system is given as input a vertex on a graph $G$, and processes have to output vertices that lie on a clique of $G$ contained in the convex hull of the input vertices.
  Although such tasks have been widely studied in a variety of models, graph classes and notions of convexity, it remains largely open for which classes of graphs these problems are solvable in asynchronous systems.

  In this work, we give a complete topological characterisation of the $t$-resilient solvability of approximate agreement on graphs and simplicial complexes in asynchronous shared-memory systems with read-write registers.
  As a result, we answer several open problems related to different variants of approximate agreement on graphs.
  For example, we give the first proof of Ledent's conjecture [PODC 2021] about the wait-free solvability of clique agreement.

  In fact, we show a more general result: clique agreement is $t$-resilient solvable on a graph $G$ if and only if its clique complex is $(t-1)$-connected in the homotopical sense.
  We also show that clique and monophonic agreement are solvable on the same class of graphs, but there exists a separation between monophonic and geodesic agreement, answering a question by Alistarh et al.~[TCS 2023].
  In the message-passing setting, our results imply new resilience bounds for asynchronous approximate agreement and round lower bounds for synchronous approximate agreement on graphs.
\end{abstract}

\section{Introduction}

Approximate agreement tasks are relaxations of consensus, which are often (efficiently) solvable in distributed systems in which consensus is either impossible or expensive to solve~\cite{fischer1985impossibility,pease80reaching,chor1987processor,herlihy1998unifying}.
In this work, we study the solvability of approximate agreement tasks defined on graphs and simplicial complexes in systems where processes can fail by~crashing.

\subsection{Approximate agreement tasks}

In an approximate agreement task, each process $i$  receives a local input $x_i \in V$ from some fixed set $V$ of values and needs to irrevocably decide on an output value $y_i \in V$ subject to task-specific constraints.
Instead of requiring exact agreement (i.e., that all processes output the same value) as in consensus, approximate agreement tasks require that
\begin{enumerate}[(i)]
  \item the decided outputs are \emph{close} to one another under some distance metric, and
  \item the decided outputs are contained in some problem-specific \emph{closure} of the inputs.
\end{enumerate}
Approximate agreement tasks have been studied in both continuous and discrete settings.
\vspace{-0.5em}
\subparagraph{Continuous approximate agreement.}
In the non-discrete case, the prototypical example is $\varepsilon$-approximate agreement on real values~\cite{ddolev1986reaching,abraham2005optimal,coan1988compiler,ghinea2025sok,schenk1995faster,attiya1994wait}: each process $i$ receives as input a real value $x_i$ and needs to decide on an output value $y_i$ such that (i) the distance between any two decided outputs is at most $\varepsilon>0$ and (ii) the outputs reside in the Euclidean convex hull of the  processes' inputs.
This problem naturally extends from real values to any $m$-dimensional Euclidean space $\R^m$, and the problem has subsequently been studied in various communication and fault models~\cite{mendes2015multidimensional,ghinea2026network,attiya2023step,fugger2021tight} and under different convexities~\cite{xiang2017relaxed,cambus2025centroid}.

\subparagraph{Our focus: discrete approximate agreement.}
Another line of research has investigated approximate agreement tasks on purely \emph{discrete}, combinatorial spaces, such as graphs~\cite{alcantara2019topology,nowak2019byzantine,alistarh2023wait,castaneda2018convergence,constantinescu2024convex,attiya2023multi,ledent2021brief,erbes2026asynchronous} and lattices~\cite{attiya1995atomic,faleiro2012generalized,nowak2019byzantine,di2020synchronous,zheng2020asynchronous}.
Particularly for graphs, it has remained open on which graph classes such tasks are solvable in the asynchronous setting. Indeed, this has been raised as an open problem in several papers~\cite{alcantara2019topology,nowak2019byzantine,alistarh2023wait,ghinea2025sok,ledent2021brief,liu2023impossibility,fuchs2026optimal}.

In this work, we answer this question by characterising the class of graphs which admit $t$-resilient asynchronous approximate agreement protocols.

\subparagraph{Model of computing.}
We mainly consider the standard asynchronous shared-memory model with $n \ge 1$ processes $P = \{1, \ldots, n\}$ that communicate by reading and writing to shared memory using registers with atomic read and write operations.

We say that a process \emph{terminates} if it performs finitely many read and write operations.
We say that a process \emph{crashes} if it terminates without deciding on an output value.
For $0 \le t < n$, a protocol is \emph{$t$-resilient} if in any execution with at most $t$ crashes all non-crashed processes decide on an output value. A protocol is \emph{wait-free} if it is $(n-1)$-resilient.

\subsection{Approximate agreement on graphs}

Let $G = (V,E)$ be a graph.
There are three commonly studied variants of approximate agreement on $G$: clique, monophonic, and geodesic agreement.
In each variant,
every  process $i \in P$ gets as input a vertex $x_i \in V$ and must irrevocably decide an output vertex $y_i \in V$.
If a process $i$ crashes before deciding its output, we write $y_i = \bot$.
Let $X = \{ x_i : i \in P \}$ be the set of inputs given to the processes and
$Y = \{y_i : y_i \neq \bot, i \in P\}$ be the set of outputs.

Typical variants of approximate agreement on a graph $G$ relax the  \emph{agreement condition} of consensus as follows:
The set $Y$ of outputs must be a clique of $G$.
That is, the outputs of non-faulty processes have to be within distance 1 under the natural distance metric on $G$.

\subparagraph{Different validity conditions.}
For the \emph{validity condition}, we consider three variants, all of which lead to different tasks with increasingly stronger output guarantees:
\begin{itemize}
  \item In  \emph{clique agreement},  validity requires that if the inputs $X$ form a clique of $G$, then $Y \subseteq X$.
  \item In \emph{monophonic agreement}, the validity condition requires that any output value lies on an \emph{induced path} between some input vertices in $X$.
    Put otherwise, the set $Y$ of outputs must be contained in the monophonic (also known as the minimal path) convex hull of $X$.
  \item In \emph{geodesic agreement}, the validity condition requires that any output value lies on a \emph{shortest path} between some input vertices in $X$.
    Put otherwise, the set $Y$ of outputs must be contained in the geodesic (also known as the shortest path) convex hull of $X$.
\end{itemize}
Clearly, clique agreement is the weakest task, as the other two variants also solve clique agreement.
Furthermore, geodesic agreement is at least as hard as monophonic agreement,
because any shortest path is an induced path.
In distance-hereditary graphs, such as trees and block graphs, geodesic and monophonic agreement are the same task, because in these graphs all induced paths are also shortest paths. However, it has remained open whether these tasks are always solvable on the same class of graphs.

\subparagraph{Prior work on approximate agreement on graphs.}
In the wait-free setting,
clique agreement was first studied in trees by Casta{\~n}eda, Rajsbaum, and Roy~\cite{castaneda2018convergence} and on general graphs by Alcántara et al.~\cite{alcantara2019topology} under the name 1-gathering. The monophonic and geodesic variants were introduced by Nowak and Rybicki~\cite{nowak2019byzantine} in the Byzantine message-passing setting.
When $G$ is a path, geodesic and monophonic approximate agreement correspond to the natural discretisation of approximate agreement on real values~\cite{coan1988compiler,abraham2005optimal,fekete1990asymptotically,fekete1994asynchronous,ddolev1986reaching,schenk1995faster}.

In the case of general graphs, these approximate agreement tasks have so far been studied on several different graph classes, such as trees~\cite{castaneda2018convergence,nowak2019byzantine,erbes2026asynchronous,fuchs2026optimal}, graphs that have a clique tree~\cite{alcantara2019topology}, chordal graphs~\cite{nowak2019byzantine,constantinescu2024convex}, nicely bridged graphs~\cite{alistarh2023wait} and triangulations of spheres~\cite{liu2023impossibility}.
In the Byzantine message-passing setting, monophonic agreement has asynchronous protocols for chordal graphs~\cite{nowak2019byzantine,constantinescu2024convex} whenever $n > (\omega+1)t$, where $\omega$ is the size of the largest clique.
Recently, Fuchs et al.~\cite{fuchs2026optimal} obtained round- and resilience-optimal synchronous protocols on trees and block graphs whenever $n > 3t$.
A variant on trees, known as connected consensus, corresponds to multivalued generalisations of many fundamental communication~primitives~\cite{arteaga2026time,attiya2023multi}.

The \emph{exact} variants of these problems correspond to the convex consensus task on the given convexity space. This problem is now well-understood with asymptotically optimal round complexity~\cite{nowak2019byzantine} and optimal resilience thresholds~\cite{constantinescu2024convex} in the synchronous setting.

\subparagraph{The open problem: How hard is approximate agreement on graphs?}
Despite substantial progress in recent years on approximate agreement on graphs,
the class of graphs in which \emph{any} of the above tasks is solvable has remained elusive.
Indeed, it is not even known on which graph classes these problems are wait-free solvable~\cite{ledent2021brief,alistarh2023wait,liu2023impossibility}, let alone $t$-resilient~solvable.

Furthermore, even seemingly simpler questions, such as whether monophonic or geodesic agreement is strictly harder than clique agreement, have been open. Indeed,
all previously known impossibility results hold for the weakest variant, i.e., clique agreement~\cite{alistarh2023wait,alcantara2019topology,liu2023impossibility,fuchs2026optimal},
whereas the state-of-the-art wait-free protocols~\cite{alistarh2023wait} work even under the strongest geodesic validity condition in nicely bridged graphs (which include all chordal graphs).

\subsection{Simplex agreement and Ledent's conjecture}

Ledent~\cite{ledent2021brief} identified an elegant connection between approximate agreement on graphs and approximate agreement on higher-dimensional objects called \emph{simplicial complexes}.
An \emph{abstract simplicial complex} is a finite collection $\A$ of non-empty sets that is downward closed: if $\sigma \in \A$, then $\tau \in \A$ for any non-empty $\tau \subseteq \sigma$.
The sets in $\A$ are called the \emph{simplices} of the complex $\A$ and the elements of $V(\A) = \bigcup_{\sigma \in \A} \sigma$ are called the \emph{vertices} of the complex.

\subparagraph{Simplex agreement.}
Ledent~\cite{ledent2021brief} defined the \emph{simplex agreement} task on an abstract simplicial complex $\A$ as follows. Each process $j$ receives as input a vertex $x_j \in V(\A)$ and must output a vertex $y_j \in V(\A)$. As before, let $X$ be the set of vertices given as inputs to processes and $Y$ the set of outputs decided by the processes.
Simplex agreement has the following constraints:
\begin{enumerate}[(i)]
  \item Agreement: The decided outputs form a simplex of $\A$. That is, $Y \in \A$.

  \item Validity:  If the inputs form a simplex, then the decided outputs must lie on that simplex. That is,  if $X \in \A$, then $Y \subseteq X$.
\end{enumerate}
We note that in the literature ``simplex agreement'' has sometimes been used for other tasks, such as barycentric agreement~\cite{herlihy1993asynchronous}, variants of loop agreement~\cite{herlihy2003classification}, and trivial tasks~\cite{liu2009classifying}.
All these are different tasks; in this work, we use simplex agreement in the sense of~Ledent~\cite{ledent2021brief}.

\begin{figure}[t]
  \centering

  \makebox[\textwidth][c]{
    \begin{tabular}{@{}c@{\hspace{\AdditionalGraphColumnSep}}c@{\hspace{\AdditionalGraphColumnSep}}c@{\hspace{\AdditionalGraphColumnSep}}c@{}}
      \shortstack{\GraphSubdividedTriangle\\[-1em](a)}
      &
      \shortstack{\GraphSubdividedTriangleNoYZ\\[-1em](b)}
      &
      \shortstack{\GraphOctahedronNoBottom\\[-1em](c)}
      &
      \shortstack{\GraphOctahedronFull\\[-1em](d)}
  \end{tabular}}
  \caption{Graphs and their clique complexes. The black points represent vertices and the lines represent edges of the graph. The shaded triangles correspond to cliques of size three, which are simplices of the clique complex.
  (a) A chordal graph with a contractible clique complex. (b) A graph whose clique complex is not contractible. (c) A non-chordal graph with contractible clique complex. (d) The octahedron graph. The clique complex is 1-connected, but not contractible.}
  \label{fig:intro-graphs}
\end{figure}

\subparagraph{Ledent's conjecture.}
Ledent~\cite{ledent2021brief} observed that \emph{clique agreement} on any graph $G$ is equivalent to simplex agreement on its clique complex $\K(G)$, which is the simplicial complex
\[
  \K(G) = \{ \sigma \subseteq V(G) : \sigma \text{ is a clique of } G \}.
\]
He further conjectured the following interesting connection between the wait-free solvability of clique agreement on $G$ and the topological properties of its clique complex:
\begin{quote}
  \emph{Ledent's conjecture: }
  Clique agreement is wait-free solvable on a graph $G$ for any number of processes if and only if its clique complex $\K(G)$  is contractible.
\end{quote}
Contractibility is a topological notion, which does not have a purely combinatorial definition.
Informally, a simplicial complex is contractible if its \emph{geometric realisation}, i.e., its natural embedding in Euclidean space, can be continuously deformed into a single point.
\Cref{fig:intro-graphs} gives examples of graphs and their clique complexes.
The graphs in \Cref{fig:intro-graphs}a and \Cref{fig:intro-graphs}c have contractible clique complexes, whereas the   graphs in \Cref{fig:intro-graphs}b and \Cref{fig:intro-graphs}d do not.

\subparagraph{Prior work on Ledent's conjecture.}
Some supporting evidence supporting Ledent's conjecture previously existed. Alistarh et al.~\cite{alistarh2023wait} showed that there exist no wait-free protocols for $n \ge 3$ processes on graphs which admit a certain ``impossibility labelling''; \Cref{fig:intro-graphs}b gives an example of such a graph. Ledent noted that graphs admitting such labellings have non-simply connected clique complexes -- and hence, they are also non-contractible.
On the other hand, chordal graphs and bridged graphs have contractible clique complexes, and these graphs are known to have wait-free algorithms~\cite{alistarh2023wait}.

Ledent~\cite{ledent2021brief} identified the octahedron graph (\Cref{fig:intro-graphs}d) as an example that is (i) not covered by the impossibility result of Alistarh et al.~\cite{alistarh2023wait}, but (ii) has a non-contractible clique complex, because the graph can be obtained as a triangulation of the two-dimensional sphere. Following this, Liu~\cite{liu2023impossibility} gave further evidence in support of the conjecture by showing that indeed there are no wait-free clique agreement protocols for $n\ge 4$ processes on graphs which can be obtained as triangulations and stellations of $d$-dimensional spheres for $d \ge 2$. Liu also gave a new condition for graphs which do not have wait-free protocols, but did not provide a complete characterisation of the class of wait-free solvable graphs.

\subsection{Summary of our contributions}

In this work, we give a complete topological characterisation of the wait-free and $t$-resilient solvability of simplex agreement on simplicial complexes and the solvability of clique agreement, monophonic and geodesic agreement on graphs.
Thus, our work settles several open problems about approximate agreement on graphs raised in the literature~\cite{alistarh2023wait,nowak2019byzantine,ledent2021brief,liu2023impossibility,alcantara2019topology}.

In particular, we give a proof of Ledent's conjecture about solvability of clique agreement~\cite{ledent2021brief}.
Our proof covers not only the wait-free setting, but also the more general $t$-resilient case, and applies also to the more stringent monophonic and geodesic validity conditions rather than just the weakest clique agreement version.
We further show that clique agreement and monophonic agreement are solvable on the same class of graphs.
However, we show a separation between geodesic agreement and monophonic agreement.

Our results also imply resilience bounds for the asynchronous message-passing model and round complexity lower bounds for the synchronous message-passing model, improving the state-of-the-art bounds for large classes of graphs in the message-passing setting as well~\cite{alistarh2023wait,fuchs2026optimal}.
Finally, we show that it is undecidable to determine if a given graph $G$ has a $t$-resilient protocol for clique, monophonic or geodesic agreement on $G$.
We give a more detailed overview of our results in \Cref{sec:overview} after establishing the necessary technical background.

\section{Preliminaries}

\subparagraph{Graphs.}
Let $G = (V,E)$ be a graph with vertex set $V(G) = V$ and edge set $E(G) = E$.
For any $U \subseteq V$, we write $G[U]$ for the subgraph induced by $U$.
A path $\rho =  \{v_0, \ldots, v_k\}$ is \emph{induced} if $G[\rho]$ is a path.
The open neighbourhood of a vertex $v$ in $G$ is $N_G(v)= \{ u : \{u,v\} \in E \}$ and its closed neighbourhood is $N_G[v] = N_G(v) \cup \{v\}$.

A vertex $v$ is said to be \emph{dominated} in $G$ if there is some $u \neq v$  such that $N_G[v] \subseteq N_G[u]$.
The graph $G$ is \emph{dismantlable} if we can order the vertices $v_1, \ldots, v_k$ such that for any $1\le i < k$ the vertex $v_i$ is dominated in the subgraph induced by $\{v_{i}, v_{i+1}, \ldots, v_k\}$.

\subparagraph{Convexity spaces.}
A \emph{convexity space} on a finite set of values $V$ is a collection $\C$ of subsets of $V$ such that (i) $\emptyset, V \in \C$ and (ii) $\C$ is closed under intersections.
For $U \subseteq V$, the \emph{convex hull} $\langle U \rangle_\C$ of $U \subseteq V$ is the smallest set $S \in \C$ such that $U \subseteq S$.
Note that the convex hull operator is a closure operator.

There are many natural notions of convexity~\cite{farber1987local} that can be defined on a graph $G$. Let $U \subseteq V(G)$.
The set $U$ is \emph{monophonically convex} in $G$ if it contains all the vertices on all \emph{induced} paths between any pair of vertices in $U$.
The set $U \subseteq V(G)$ is \emph{geodesically convex} if it contains all the vertices on all \emph{shortest} paths between any pair of vertices in $U$.
We can also define the notion of clique convexity as follows: a set $U \subseteq V(G)$ is \emph{clique convex} if $U$ is a clique of $G$, $U=\emptyset$, or $U = V(G)$.
Observe that the validity property of monophonic, geodesic, and clique agreement uses the respective convex hull as the closure operation.

\subparagraph{Abstract simplicial complexes.}
Recall that an \emph{abstract simplicial complex} $\A$ is a finite collection of non-empty sets such that
if $\sigma \in \A$, then $\tau \in \A$ for all $\emptyset \neq \tau \subseteq \sigma$.
An element $\sigma \in \A$ is called a \emph{simplex} of $\A$.
The elements of the set $V(\A) = \bigcup_{\sigma \in \A} \sigma$ are called the \emph{vertices} of $\A$.
For two abstract simplicial complexes $\A$ and $\B$, we say that $\B$ is a \emph{subcomplex} of $\A$ if
$\B \subseteq \A$. For $U \subseteq V( \A)$, we write $\A[U]$ for the subcomplex $\{ \sigma \in \A : \sigma \subseteq U\}$ of $\A$ induced by $U$.

If $\sigma$ is a simplex consisting of $k+1$ vertices, then the \emph{dimension} of $\sigma$ is $\dim(\sigma) := k$. Given a finite set $V$ of size $k+1$,  we define the (abstract) \emph{standard $k$-simplex} on the
vertices $V$ as the simplicial complex
\[
  \Delta^V := 2^V \setminus \{\emptyset\} = \Set{\tau \subseteq V}{\tau \neq \emptyset}.
\]
That is, $\Delta^V$ is the collection of all non-empty subsets of $V$.
When the vertex set is $\{0,1,\ldots,k\}$, we write $\Delta^k$ instead of $\Delta^{\{0,1,\ldots,k\}}$. Moreover, given a simplex $\sigma \in \A$ we denote by $\Delta^\sigma \subseteq \A$ the subcomplex of $\A$
that is induced by $\sigma$. The \emph{$k$-skeleton} of a simplicial complex $\A$ is the subcomplex consisting of all simplices of dimension at most $k$, that is,
\begin{equation*}
  \skel^k \A := \Set{\sigma \in \A}{\dim(\sigma) \leq k}.
\end{equation*}
The $1$-skeleton of a complex $\A$ corresponds to a graph, called the \emph{underlying graph} $G$ of $\A$, with its 1-simplices being called edges. That is, $V(G) = V(\A)$ and $E(G) =
\Set{\sigma \in \A}{\dim(\sigma) = 1}$.
The \emph{boundary} $\partial\sigma$ of a simplex $\sigma$ of dimension $k$ is the simplicial complex
\[
  \partial\sigma := \skel^{k-1}\sigma :=\Delta^\sigma\setminus\set{\sigma}.
\]

\subparagraph{Geometric realisation of a simplicial complex.}
Let $\A$ be an abstract complex with vertices $V(\A) = \set{v_0, \ldots, v_d}$. Choose any bijection $v_i \mapsto e_i$ from the vertices of $\A$ to the $d+1$ standard coordinate
basis vectors $e_0, \ldots, e_d \in \R^{d+1}$. We define the \emph{geometric realisation}~of~$\A$~as
\begin{equation*}
  |\A| := \bigcup_{\sigma \in \A} \conv \Set{e_i}{v_i \in \sigma} \subseteq \R^{d+1},
\end{equation*}
where $\conv V$ denotes the Euclidean convex hull of a finite set $V \subseteq \R^{d+1}$ of vertices.
The geometric realisation of a simplex $\sigma \in \A$ is
$|\sigma| := \conv \Set{e_i}{v_i \in \sigma}$.
Note that $|\A|$ is a topological space, as it is a subspace of $\R^{d+1}$.
Moreover,
the choice of bijection does not affect the topological properties of $|\A|$, as different bijections result in homeomorphic spaces.

\subparagraph{Carrier maps.}
A \emph{carrier map} $\Phi \colon \A \to 2^\B$ maps each simplex $\sigma \in \A$ to a subcomplex $\Phi(\sigma) \subseteq \B$ such that
\(
  \Phi(\tau) \subseteq \Phi(\sigma)
\)
for any $\tau \subseteq \sigma \in \A$.
Given a carrier map $\Phi \colon \A \rightarrow 2^\B$ and a continuous map $f \colon |\A| \rightarrow |\B|$, we say that
$f$ is carried by $\Phi$ if $f(|\sigma|) \subseteq |\Phi(\sigma)|$ for all $\sigma \in \A$.
\subparagraph{Colourless tasks.}
A \emph{colourless task} is a tuple $(\I, \Ocal, \Phi)$, where
$\I$ is a simplicial \emph{input complex},
$\Ocal$ is a simplicial \emph{output complex}, and
$\Phi \colon \I \rightarrow 2^\Ocal$ is a carrier map.
The input complex $\I$ describes the valid input assignments, the output complex $\Ocal$ is the collection of all feasible outputs, and the carrier map $\Phi$ maps valid input assignments to feasible outputs:
If $X = \{x_i : i  \in P\} \in \I$ is the set of inputs given to the processes and $Y = \{y_i  : y_i \neq \bot, i \in P\}$ is the set of decided outputs, then the outputs are a solution to the task if and only if~$Y \in \Phi(X)$.

Simplex agreement on a simplicial complex $\A$ and the different variants of approximate agreement on a graph $G$ are examples of colourless tasks.

\subparagraph{The asynchronous computability theorem.}
As we consider only solvability, we can formally work in the
\emph{layered snapshot model},
because any task is solvable in this model if and only if it is solvable in the model with registers~\cite[Chapter 5]{herlihy2013distributed}.
We skip the precise definition of this model, as we only make use of it via the $t$-resilient variant of the celebrated asynchronous computability theorem~\cite{herlihy1999topological,herlihy1993asynchronous}; see, e.g.,~\cite{herlihy2013distributed} for a modern treatment of the result.

\begin{theorem}\label{thm:t-res-computability}
  Let $0 \le t < n$. The colourless task $(\mathcal{I}, \mathcal{O}, \Phi)$ has a $t$-resilient $n$-process protocol in the layered snapshot model if and only if there exists a continuous map
  \(    f \colon |\skel^{t} \mathcal{I}| \rightarrow |\mathcal{O}| \)
  carried by $\Phi$.
\end{theorem}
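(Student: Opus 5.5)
This statement is the $t$-resilient asynchronous computability theorem for colourless tasks, which the paper imports from the literature~\cite{herlihy1999topological,herlihy1993asynchronous,herlihy2013distributed}. My plan follows the standard two-direction argument in the layered snapshot model. The two central ingredients are as follows. First, the protocol complex of $r$ rounds of a $t$-resilient colourless layered snapshot protocol, restricted to an input simplex $\sigma$, is the iterated barycentric subdivision $\Bary^r$ of $\skel^t\sigma$, or at least contains it in a carrier-preserving way. Second, the simplicial approximation theorem applies.

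\textbf{Necessity.} Suppose a $t$-resilient protocol exists. Each process first writes its input. Since at most $t$ processes may crash, a process waits until it has seen at least $n-t$ inputs. Hence the set of inputs any process ever works with has size at most $t+1$ relative to what the adversary can force, and one can restrict attention to executions in which at most $t+1$ distinct inputs appear. In the colourless setting these correspond exactly to simplices of $\skel^t\I$.

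On such executions, running the immediate-snapshot iterations yields a simplicial decision map $\delta\colon \Bary^r(\skel^t\I)\to\Ocal$, where $r$ bounds the number of layers needed for termination; a compactness or König-style argument gives the uniform bound. This map is carried by $\Phi$. Since $|\Bary^r\skel^t\I|$ is homeomorphic to $|\skel^t\I|$ by a homeomorphism that preserves carriers, the realisation $|\delta|$ gives the required continuous map $f$.

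\textbf{Sufficiency.} Given $f$ carried by $\Phi$, I apply the simplicial approximation theorem to obtain $r$ and a simplicial map $\phi\colon \Bary^r\skel^t\I\to\Ocal$ that is still carried by $\Phi$. This step needs care: the star condition $f(\mathrm{St}(v))\subseteq \mathrm{St}(\phi(v))$ must be chosen so that each vertex $\phi(v)$ lies in $\Phi(\mathrm{carrier}(v))$. This follows because $f$ maps the closed carrier into $|\Phi(\cdot)|$ and $\Phi$ is monotone.

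The protocol then works as follows.
\begin{enumerate}
  \item Each process writes its input.
  \item It waits until it sees $n-t$ inputs and adopts a vertex encoding at most $t+1$ of the inputs it sees, using a "$t$-set agreement on inputs" style reduction. The target is a vertex of $\skel^t$ of the simplex spanned by the actual inputs.
  \item It runs $r$ rounds of colourless barycentric agreement, which is $t$-resilient solvable.
  \item It outputs $\phi$ of the resulting vertex.
\end{enumerate}

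\textbf{Main obstacle.} I expect the hardest step to be step 2, landing in $\skel^t\I$ rather than in all of $\I$. I would try to realise it as in Herlihy–Kozlov–Rajsbaum. Processes wait for $n-t$ values in a snapshot and use the snapshot containment property: the sets of seen inputs are totally ordered by inclusion. Then, by choosing the minimal $(t+1)$-element prefix in a fixed order, the chosen sets form a simplex of the $t$-skeleton. The remaining checks are a carrier-compatibility verification together with the correctness of barycentric agreement.
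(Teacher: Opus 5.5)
\textbf{Overall approach.} Your architecture is the standard Herlihy--Kozlov--Rajsbaum argument that the paper simply cites; the paper gives no proof of its own. For necessity, you restrict the decision map to a carrier-preserving copy of $\Bary^r\skel^t\I$ inside the protocol complex. For sufficiency, you take a simplicial approximation $\phi\colon\Bary^r\skel^t\I\to\Ocal$ carried by $\Phi$ and build a protocol that lands in $\Bary^r\skel^t\I$.

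\textbf{The gap: step 2 of your protocol.} The step you single out as the main obstacle is resolved incorrectly, in two ways.

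First, $t$-set agreement is not $t$-resiliently solvable. What you need is $(t+1)$-set agreement.

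Second, your concrete rule, where each process keeps the $t+1$ smallest inputs of its view in a fixed order, produces neither nested sets nor sets whose union has at most $t+1$ elements. Take $n=3$, $t=1$, and inputs $3,4,1$ at $p_1,p_2,p_3$:
\begin{itemize}
\item After $p_1$ and $p_2$ write, $p_1$ snapshots $\{3,4\}$. That is two inputs, so it may proceed, and it keeps $\{3,4\}$.
\item Then $p_3$ writes, snapshots $\{1,3,4\}$, and keeps $\{1,3\}$.
\end{itemize}
The kept sets are incomparable, and together they span a $2$-simplex, so nothing lies in $\skel^1$. Containment of views is not preserved once each view is truncated to its $t+1$ smallest elements.

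\textbf{The fix.} Each process should adopt the single minimum of its view. Let $V_{\min}$ be the smallest snapshot. It holds the inputs of at least $n-t$ processes, and every other view contains it. So every adopted value is either $\min V_{\min}$ or the input of one of the at most $t$ processes outside $V_{\min}$. Hence the adopted values span some $\tau\in\skel^t\I$ with $\tau\subseteq\sigma$. Running $r$ rounds of barycentric agreement on them lands in $\Bary^r\tau$, and applying $\phi$ gives outputs in $\Phi(\tau)\subseteq\Phi(\sigma)$.

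\textbf{The necessity direction.} Your justification here is also off, though the conclusion is right.
\begin{itemize}
\item The claim that a process ``works with'' at most $t+1$ inputs is false. When $n-t>t+1$, a single snapshot can contain more than $t+1$ distinct inputs.
\item The model is not iterated immediate snapshot. It consists of layers in which every snapshot waits for $n-t$ writes.
\end{itemize}
Restricting to $\skel^t\I$ is just a choice, since $f$ is only needed on $|\skel^t\I|$. What actually needs proof is that, for $|\tau|\le t+1$, every simplex of $\Bary^r\tau$ occurs as a set of final views despite the waiting rule.

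This does hold. At most $t$ values lie outside any nonempty face of $\tau$, so you can give at least $n-t$ processes inputs from the smallest element of a prescribed chain. To carry this through the layers, induct on $r$ with the following invariant: for every simplex $\rho$ of $\Bary^{r}\tau$ and every nonempty $\rho_0\subseteq\rho$, there is an execution whose set of layer-$r$ views is exactly $\rho$ and in which at least $n-t$ processes hold views in $\rho_0$. With that lemma in place, restricting $\delta$ and taking its realisation gives $f$, as you say.
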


Because snapshot objects can be simulated in asynchronous message-passing~\cite{attiya1995sharing}, there exists a similar result for the standard asynchronous message-passing model; again see~\cite{herlihy2013distributed}.

\begin{theorem}\label{thm:t-res-computability-msg}
  Let $0 \le t < n/2$. The colourless task $(\mathcal{I}, \mathcal{O}, \Phi)$ has a $t$-resilient $n$-process asynchronous message-passing protocol if and only if there exists a continuous map
  \(  f \colon |\skel^{t} \mathcal{I}| \rightarrow |\mathcal{O}| \)
  carried by $\Phi$.
\end{theorem}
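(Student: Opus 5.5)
The statement to prove is \Cref{thm:t-res-computability-msg}: the message-passing analogue of \Cref{thm:t-res-computability} for $t < n/2$. The plan is to reduce it to \Cref{thm:t-res-computability} by simulating in both directions between asynchronous message passing and the layered snapshot (equivalently, read-write register) model.

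\emph{Sufficiency.} Suppose a continuous map $f \colon |\skel^t \I| \to |\Ocal|$ carried by $\Phi$ exists.
\begin{enumerate}
  \item By \Cref{thm:t-res-computability}, there is a $t$-resilient $n$-process protocol for $(\I,\Ocal,\Phi)$ in the layered snapshot model.
  \item By the cited equivalence, this yields a $t$-resilient protocol using atomic read-write registers.
  \item Since $t < n/2$, the ABD emulation of Attiya, Bar-Noy and Dolev~\cite{attiya1995sharing} implements each single-writer multi-reader atomic register in asynchronous message passing with up to $t$ crashes.
  \item Composing the register-based protocol with this emulation gives a message-passing protocol. Its outputs are outputs of some register execution, by linearisability, so the task specification is preserved.
\end{enumerate}
The one point needing care is that linearisability composes across the many emulated registers. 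This follows from the locality of linearisability.

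\emph{Necessity.} Suppose a $t$-resilient message-passing protocol exists. I will show that a $t$-resilient protocol exists in the register model, and then apply the ``only if'' direction of \Cref{thm:t-res-computability}. Message passing is simulated on shared memory as follows.
\begin{itemize}
  \item Each process $i$ has a single-writer register holding the append-only log of messages it has sent, tagged with recipients.
  \item A receive step by process $j$ reads the logs of all processes and delivers any messages addressed to $j$ that it has not yet delivered. Delivery may be postponed arbitrarily, but every message is eventually delivered to a correct reader.
\end{itemize}
Every execution of this simulation corresponds to an admissible asynchronous message-passing execution with the same crash set. Hence correct processes decide, and the decisions satisfy $\Phi$. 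The simulation needs no bound on $t$.

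The main subtlety is fairness: in the asynchronous message-passing model, every message sent to a correct process must eventually be delivered. This holds because correct processes keep taking receive steps, and the logs are never overwritten. With both directions established, the theorem follows from \Cref{thm:t-res-computability}.

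The hypothesis $t < n/2$ is needed only for the ABD emulation, that is, only in the sufficiency direction.
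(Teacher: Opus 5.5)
Your proposal is correct and follows the route the paper indicates. The paper obtains this theorem from \Cref{thm:t-res-computability} via the Attiya--Bar-Noy--Dolev simulation of shared memory in message passing for $t < n/2$, leaving the details to Herlihy, Kozlov and Rajsbaum; your converse simulation of message passing by append-only single-writer logs is the standard easy direction that citation covers, and the one assumption you should state explicitly is that in the composed protocol every process keeps answering ABD requests after it decides, so a majority of servers stays responsive whenever at most $t$ processes crash.
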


We can lift impossibility results obtained using the asynchronous computability theorem to get synchronous message-passing lower bounds using Gafni's simulation result~\cite{gafni1998round}.

\begin{theorem}
  \label{thm:round-by-round}
  Let $0 < t < f < n$ such that $t + f < n$. If there exists a synchronous $f$-resilient $n$-process message-passing protocol that solves a colourless task
  in at most $f/t$ rounds, then the task has a $t$-resilient asynchronous shared-memory protocol.
\end{theorem}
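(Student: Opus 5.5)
The plan is a direct simulation argument in the style of Gafni's round-by-round fault detector: given the synchronous protocol $\Pi$, we build a $t$-resilient asynchronous shared-memory protocol. Let $R \le f/t$ be the number of rounds of $\Pi$. Each real process $i$ simulates the process $i$ of $\Pi$ on its own input $x_i$. For each round $r = 1,\dots,R$ we use a fresh snapshot object $S_r$ (available in the layered snapshot model).

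\textbf{Simulating one round.} In simulated round $r$, process $i$ writes its round-$r$ message into $S_r$, together with its current set $D_i$ of suspected processes. It then repeatedly scans $S_r$ until it sees at least $n-t$ entries, which is guaranteed to terminate since at most $t$ processes crash. Process $i$ treats every process whose entry is missing as having crashed in round $r$. It adds to $D_i$ all processes suspected by anyone whose message it received, and from then on ignores messages from processes in $D_i$. It then applies the round-$r$ transition of $\Pi$ to the messages it accepted. After round $R$ it decides whatever $\Pi$ decides. Termination of all correct processes is immediate, since each of the $R$ rounds only waits for $n-t$ writes.

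\textbf{Bounding the simulated faults.} Snapshot views within one object are totally ordered by containment. Hence the process with the smallest view in round $r$ misses at most $t$ processes, and every other process's missing set in round $r$ is contained in that set. The set $F_r$ of processes newly suspected by anyone in round $r$ therefore has $|F_r| \le t$. Over all rounds, $F = \bigcup_r F_r$ has size at most $Rt \le f$.

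\textbf{Matching a synchronous execution.} We then argue that the joint simulated states form a legal synchronous execution of $\Pi$ with faulty set $F$ and at most $f$ crashes. A process in $F_r$ behaves like a process that crashes during round $r$: some receivers got its round-$r$ message and some did not. The relay of suspicion sets ensures that nobody uses its messages in later rounds; here we use $f < n$ and $t+f<n$ to check that the simulated execution stays within the failure bound of $\Pi$. The inputs of this execution are a subset of $X$, so by monotonicity of the carrier map the outputs are a simplex of $\Phi(X)$.

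\textbf{Main obstacle.} The delicate step is the consistency of the crash pattern. A real process $j \in F_r$ that was suspected by some processes may still be alive, keep simulating, and decide. Moreover, suspicion spreads only gradually through the relayed sets, so a process might use $j$'s round-$(r+1)$ message before learning that $j$ was suspected in round $r$. I would first try to fix this by having each process, at round $r+1$, discard any message whose sender appears in the $D$-set of any entry in its round-$(r+1)$ snapshot, and then check that the snapshot ordering makes this discard consistent. For the decisions of suspected-but-alive processes, I would argue as follows: for any single such process $j$, there is a synchronous execution with at most $f$ crashes, in which $j$ is correct and the same faulty-set bound holds, that is indistinguishable to $j$ and to the other deciders. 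Since the task is colourless, it then suffices that all decided values appear together in some such valid execution, and I expect obtaining one execution covering all deciders simultaneously to be the crux.
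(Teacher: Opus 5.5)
Your counting step is sound: snapshot views are totally ordered, so at most $t$ new suspects appear per round and $|F|\le Rt\le f$. But the two points you defer to the end are the actual content of the statement. The paper does not prove this theorem; it imports it from Gafni's round-by-round fault detector simulation. As written, your proposal has a genuine gap.

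\textbf{The crash pattern is not consistent.} Relaying and discarding does not yield a legal crash execution, not even for a single decider. Consider the following interleaving:
\begin{itemize}
\item In round $r$, $i$ scans $S_r$ before $j$ writes, so $j\in D_i$.
\item In round $r+1$, $k$ scans after $j$'s write but before $i$'s write, and $k'$ scans after both.
\end{itemize}
Then $k$ accepts $j$'s round-$(r+1)$ message. Meanwhile $k'$ discards it but accepts $i$'s message, whose content (for a full-information $\Pi$) shows that $j$ was silent to $i$ in round $r$. Only one new process is suspected per round, so $k$ and $k'$ can both be never suspected. Once $k$ hears from $k'$ in round $r+2$, its state records both that $j$ was silent in round $r$ and that $j$ spoke in round $r+1$. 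No crash execution of $\Pi$ produces this state, so the correctness of $\Pi$ says nothing about $k$'s output.

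Your proposed patch, discarding senders listed in the $D$-sets of the current snapshot, is exactly what $k$ already does. It cannot help, because $i$'s entry is not in $k$'s snapshot. By the time $k$ learns of the suspicion, $j$'s message has already been absorbed into the simulated state.

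\textbf{The handling of decisions misplaces the key hypothesis.} Termination and $|F|\le f$ hold without $t+f<n$, yet the theorem is false without it. Take $n=3$, $f=2$, $t=1$. Floodset solves consensus in $2=f/t$ rounds (the $f+1$ lower bound needs $n\ge f+2$), while $1$-resilient asynchronous consensus for three processes is impossible. So any completion must use $t+f<n$ essentially, and nothing in your argument does; it is not what keeps the simulated faults within $f$.

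The hypothesis naturally enters at exactly the suspected-but-alive processes. At most $f$ processes are ever suspected, so at least $n-f\ge t+1$ are never suspected. Hence at least one of them is correct and decides, and colourlessness lets other processes adopt such an output instead of their own simulated decision.

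Your alternative does not work. You propose, for each suspected $j$, a separate execution in which $j$ is correct and which the other deciders cannot distinguish. But making $j$ correct requires everyone to have received the messages they discarded. Moreover, separate executions for separate deciders give no control over the joint output set, which is exactly what a colourless task constrains (think of set agreement).

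A complete proof needs two things:
\begin{itemize}
\item a simulation in which the (possibly reinterpreted) views of the never-suspected processes are provably those of correct processes in one common crash execution with at most $f$ faults;
\item a non-blocking rule by which everybody else adopts a valid output.
\end{itemize}
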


\subparagraph{Homotopy equivalence.}
For our main result, we recall some basic concepts from homotopy theory; see, e.g.,~\cite{Hatcher} for an introduction to algebraic topology.
Let $A,B$ be topological spaces and let \mbox{$f, g \colon A \rightarrow B$} be two continuous maps.
The maps $f$ and $g$ are \emph{homotopic}, written as $f \he g$, if there exists a continuous map $H \colon [0,1] \times A  \rightarrow B$ such that
\begin{enumerate}
  \item $H(0,x) = f(x)$ for all $x \in A$, and
  \item $H(1,x) = g(x)$ for all $x \in A$.
\end{enumerate}
Such a map $H$ is called a \emph{homotopy}.
Two topological spaces $A, B$ are homotopy equivalent, written as $A \he B$, if there exist continuous maps $f \colon A \rightarrow B$ and $g: B \rightarrow A$ such that their
compositions $f \circ g$ and $g \circ f$ are respectively homotopic to the identity maps $\id_B$ and $\id_A$. In this case we call $f$ a \emph{homotopy equivalence}. The relation
$\he$ is an equivalence relation for both maps and spaces.

\subparagraph{Homotopy groups and induced maps.}
For a topological space $A$, we use $\pi_m(A)$ to denote its $m$-th \emph{homotopy group} (we refer to \cite[Section 4.1]{Hatcher} for a thorough introduction of homotopy groups).
While $\pi_m(A)$ is always a group for $m>0$, technically, $\pi_0(A)$ is not a group, but rather a set.
We say that $\pi_m(A)$ is \emph{trivial} if it consists of a single element.
In the case of $m>0$, this means that $\pi_m(A)$ is the trivial group.
When $\pi_m(A)$ is trivial, we write $\pi_m(A)=0$.
Geometrically, the $m$th homotopy group measures $m$-dimensional holes in the space $A$: if it is trivial,
there are no ``$m$-dimensional holes'' in $A$.

One of the most important properties of homotopy groups is their functoriality:
Any continuous map $f\colon A \rightarrow B$ induces for each $m \geq 0$ a map of homotopy groups
$f_\ast^{(m)}\colon \pi_m(A) \to \pi_m(B)$ which is a group homomorphism for each $m \geq 1$.
For $m=0$, we will still use the term \emph{isomorphism} rather than \emph{bijection} for the sake of simplicity in the exposition. The induced map
satisfies the following properties for all $m \geq 0$ (see, e.g.,~\cite[Section 4.1]{Hatcher}):
\begin{enumerate}
  \item If $f\colon A \to B$ and $g\colon B \to C$ are continuous maps, then $(g \circ f)_\ast^{(m)} = g_\ast^{(m)} \circ f_\ast^{(m)}$.
  \item If $f,g\colon A \to B$ are continuous maps such that $f \simeq g$, then $f_\ast^{(m)} = g_\ast^{(m)}$.
  \item If $f \colon A \to B$ is a homotopy equivalence then $f_\ast^{(m)}$ is an isomorphism for all $m \geq 0$.
\end{enumerate}

\subparagraph{Homotopical connectivity.}
Fix a base point $x_0 \in A$. A topological space $A$ is \emph{contractible} if the identity map $\id_A$ is homotopic to the constant map \mbox{$\Tilde{x}_0 \colon A \rightarrow \set{x_0}$}, that is, there exists a homotopy
\( H\colon [0,1] \times A \rightarrow A \)
such that $H(0, x) = x$ and $H(1, x) = x_0$ for all $x \in A$. The topological space $A$ is called \emph{$m$-connected} if its first $m+1$ homotopy groups are trivial.
For notational convenience, we define that any topological space is always $(-1)$-connected.
We say that an abstract simplicial complex is $m$-connected if its geometric realisation $|\A|$ is $m$-connected.

It is a standard fact in topology that contractible spaces are always $m$-connected for
every $m \geq 0$. The converse, however, is a special case of the Whitehead Theorem; see \cite[Theorem 4.5]{Hatcher}.
It does not hold for all topological spaces, but rather for a restricted class of spaces, which includes  the class of simplicial complexes.

\begin{theorem} \label{Whitehead}
  Let $\A$ be a simplicial complex which is $m$-connected for every $m \geq 0$. Then $\A$ is contractible.
\end{theorem}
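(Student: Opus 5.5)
The plan is to build the contraction directly, extending it one skeleton at a time over $|\A|$. Each extension is justified by the vanishing of one homotopy group. (Equivalently, one could note that $|\A|$ is a finite CW complex and that the constant map to a point is a weak homotopy equivalence, then quote the general Whitehead theorem; the argument below is the concrete version of that.) Fix a base point $x_0 \in |\A|$; it exists because $\pi_0(|\A|)$ trivial means $|\A|$ is non-empty and path-connected. We want a homotopy $H\colon [0,1]\times|\A| \to |\A|$ with $H(0,x)=x$ and $H(1,x)=x_0$. We construct $H$ on $[0,1]\times|\skel^k\A|$ by induction on $k$, keeping it compatible with the previously defined pieces.

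\textbf{Base case $k=0$.} For every vertex $v$, path-connectedness gives a path $\gamma_v$ in $|\A|$ from $v$ to $x_0$. We set $H(s,v)=\gamma_v(s)$.

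\textbf{Inductive step.} Suppose $H$ is already defined and continuous on $[0,1]\times|\skel^{k-1}\A|$, and let $\sigma$ be a $k$-simplex. Then $H$ is already prescribed on the subset
\[
  B_\sigma := \bigl(\{0\}\times|\sigma|\bigr)\cup\bigl(\{1\}\times|\sigma|\bigr)\cup\bigl([0,1]\times|\partial\sigma|\bigr):
\]
it is the identity on $\{0\}\times|\sigma|$, the constant $x_0$ on $\{1\}\times|\sigma|$, and the inductive $H$ on the sides. The pair $([0,1]\times|\sigma|, B_\sigma)$ is homeomorphic to $(D^{k+1}, S^k)$, so we get a map $S^k \to |\A|$. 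Since $\pi_k(|\A|)=0$ and $|\A|$ is path-connected, every map $S^k\to|\A|$ is null-homotopic as a free map. Free versus based homotopy causes no trouble here, because path-connectedness lets us move the base point. Hence the map extends continuously over $D^{k+1}$, and we define $H$ on $[0,1]\times|\sigma|$ to be this extension.

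\textbf{Continuity and conclusion.} Distinct $k$-simplices meet only in lower-dimensional faces, where the extensions agree by construction. Since $|\A|$ has the weak (here finite, hence closed-cover) topology given by its simplices, the pasting lemma gives continuity on $[0,1]\times|\skel^k\A|$. As $\A$ is finite, the induction stops at $k=\dim\A$, and the resulting $H$ is the desired contraction.

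\textbf{Main obstacle.} The key step is the extension over $[0,1]\times|\sigma|$. Two points need care there. First, we must identify $B_\sigma$ with $S^k$ and $[0,1]\times|\sigma|$ with $D^{k+1}$ via explicit homeomorphisms of convex bodies. Second, we must justify that triviality of $\pi_k$ (a statement about based maps) implies that every free map $S^k\to|\A|$ extends over the disk. For the second point I would first reduce to the based case: homotope the map so that a chosen point of $S^k$ goes to $x_0$ along a path, using the homotopy extension property of $(D^{k+1},S^k)$.
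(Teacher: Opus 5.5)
Your proof is correct, but it takes a different route from the paper, which does not prove this statement at all. The paper quotes it as the special case of Whitehead's theorem (Hatcher, Theorem~4.5) for the constant map $|\A|\to\{x_0\}$: that map induces isomorphisms on all homotopy groups between CW complexes, so it is a homotopy equivalence.

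You instead reprove this special case directly. You build the null-homotopy skeleton by skeleton on the prism $[0,1]\times|\A|$, extending over each $[0,1]\times|\sigma|\cong D^{k+1}$ from its boundary sphere using $\pi_k(|\A|)=0$. This is the same inductive extension scheme the paper uses for the direction (b)$\Rightarrow$(a) of \Cref{thm:main-tool}, with \Cref{lemma:extension} applied to the boundary of a prism instead of $\partial\Delta^m$. So your route keeps the argument self-contained and within the paper's own toolkit.

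Your argument also gives slightly more than the citation. Your induction only uses $\pi_k$ for $k\le\dim\A$, so a finite complex that is $(\dim\A)$-connected is already contractible. Consequently, wait-free solvability of simplex agreement for $n=\dim\A+2$ processes already forces solvability for every $n$. The citation, in turn, is one line and covers arbitrary weak equivalences between CW complexes.

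The base-point issue you flag is easier than your plan suggests. A map $f\colon S^k\to|\A|$ with $f(s_0)=y$ already represents an element of $\pi_k(|\A|,y)$. That group is trivial by path-connectedness and the change-of-base-point isomorphism, so $f$ is null-homotopic rel $s_0$ and extends over $D^{k+1}$. No detour through the homotopy extension property is needed.
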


\section{Overview of our results}\label{sec:overview}

\subsection{Solvability of convex simplex agreement tasks}

In this work, we define a new notion of \emph{convex simplex agreement} on simplicial complexes. With this task, we can easily generalise Ledent's observation about the equivalence of clique agreement on $G$ and simplex agreement on the clique complex $\K(G)$ to cover a wider variety of approximate agreement tasks on graphs, such as monophonic and geodesic agreement.

\begin{definition}\label{def:convex-agreement}
  Let $\A$ be a simplicial complex and $\C$ be a convexity space on $V(\A)$.
  The $\C$-convex simplex agreement task is the colourless task $(\Deltava, \A, \PhiC)$, where
  \[
    \PhiC(\sigma) :=
    \begin{cases}
      \Delta^\sigma & \text{if } \sigma  \in \A \\
      \A[\langle \sigma \rangle_\C] & \text{otherwise.}
    \end{cases}
  \]
\end{definition}
In words, the $\C$-convex simplex agreement task on $\A$ has the following constraints:
\begin{enumerate}[(i)]
  \item The outputs $Y$ form a simplex of $\A$. That is, $Y \in \A$.
  \item If the inputs $X$ satisfy $X \in \A$, then $Y \subseteq X$.
  \item The outputs are in the $\C$-convex hull of the inputs. That is, $Y \subseteq \langle X  \rangle_\C$.
\end{enumerate}
With this definition, clique, monophonic, and geodesic agreement on $G$ are now instances of convex simplex agreement on the clique complex $\K(G)$ with the appropriate convexity space.

Our main technical tool is a topological result that can be used to characterise the solvability of convex simplex agreement tasks.
In \Cref{sec:main-tool}, we prove the following:

\begin{restatable}{theorem}{maintool}\label{thm:main-tool}
  Let $(\Deltava, \A, \PhiC)$ be the $\C$-convex simplex agreement task on $\A$ and let $t \geq 0$.
  Then the following are~equivalent:
  \begin{enumerate}[(a)]
    \item There exists a continuous map
      \(
        f \colon |\skel^t \Deltava| \rightarrow |\A|
      \)
      carried by $\PhiC$.

    \item The complex $\PhiC(\sigma) \subseteq \A$ is $(t-1)$-connected for each $\sigma  \in \Deltava$.
  \end{enumerate}
\end{restatable}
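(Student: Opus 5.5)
The plan is to prove the two implications separately. For (b)$\Rightarrow$(a) I would use the standard skeleton-by-skeleton extension argument. For (a)$\Rightarrow$(b) I would restrict $f$ to a convex subset of vertices and compare it with the identity on the $t$-skeleton.

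\emph{(b)$\Rightarrow$(a).} We build $f$ inductively on $|\skel^k \Deltava|$ for $k = 0, 1, \ldots, t$, keeping $f$ carried by $\PhiC$.
\begin{itemize}
\item[$k=0$.] Every singleton $\{v\}$ with $v \in V(\A)$ is a simplex of $\A$, so $\PhiC(\{v\}) = \Delta^{\{v\}}$. We are therefore forced to set $f(v) = v$.
\item[$k \ge 1$.] Suppose $f$ is defined on $|\skel^{k-1}\Deltava|$ and carried by $\PhiC$. Let $\sigma$ be a $k$-simplex with $k \le t$. Each face $\tau \subsetneq \sigma$ satisfies $f(|\tau|) \subseteq |\PhiC(\tau)| \subseteq |\PhiC(\sigma)|$ by monotonicity of the carrier map. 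Hence $f$ restricts to a map from $|\partial\sigma| \cong S^{k-1}$ into $|\PhiC(\sigma)|$.
\item[Extension.] Since $k-1 \le t-1$, hypothesis (b) makes $\pi_{k-1}(|\PhiC(\sigma)|)$ trivial. So this map extends over $|\sigma| \cong D^k$ with image in $|\PhiC(\sigma)|$. For $k=1$ this is path-connectedness, i.e.\ $0$-connectivity. Gluing these extensions over all $k$-simplices gives a continuous map carried by $\PhiC$.
\end{itemize}
When $\sigma \in \A$ the target $\Delta^\sigma$ is contractible anyway, so only non-simplices actually use (b).

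\emph{(a)$\Rightarrow$(b).} Fix $\sigma$. If $\sigma \in \A$, then $\PhiC(\sigma) = \Delta^\sigma$ is contractible and there is nothing to prove. Otherwise set $U = \langle \sigma\rangle_\C$ and $\B = \A[U] = \PhiC(\sigma)$. Note that $V(\B) = U$, because all singletons of $V(\A)$ lie in $\A$.
\begin{enumerate}
\item \emph{$f$ maps into $|\B|$.} For every $\tau \subseteq U$ we have $\PhiC(\tau) \subseteq \A[U]$. Indeed, if $\tau \in \A$ then $\Delta^\tau \subseteq \A[U]$. If $\tau \notin \A$, then $\langle\tau\rangle_\C \subseteq U$ because $U$ is convex and the hull is monotone. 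Hence $f$ restricts to $g\colon |\skel^t \Delta^U| \to |\B|$.
\item \emph{$g$ agrees with the inclusion up to homotopy.} Let $i\colon |\skel^t\B| \hookrightarrow |\skel^t\Delta^U|$ and $j\colon |\skel^t\B| \hookrightarrow |\B|$ be the inclusions. For each $\tau \in \skel^t\B$ we have $g(|\tau|) \subseteq |\Delta^\tau| = |\tau|$. So $g\circ i$ and $j$ map each $|\tau|$ into the convex set $|\tau|$. The straight-line homotopy therefore stays inside $|\B|$, giving $g\circ i \simeq j$.
\item \emph{Killing spheres.} Take $m \le t-1$ and a map $h\colon S^m \to |\B|$. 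By cellular (or simplicial) approximation, $h$ is homotopic to a map $h'$ into $|\skel^m\B| \subseteq |\skel^t\B|$. Then $h \simeq j\circ h' \simeq g\circ i\circ h'$. It remains to show $i\circ h'$ is null-homotopic in $|\skel^t\Delta^U|$; applying $g$ then makes $h$ null-homotopic, so $\pi_m(|\B|) = 0$.
\end{enumerate}

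The step I expect to be the main obstacle is the remaining input: that $|\skel^t\Delta^U|$ is $(t-1)$-connected. I would prove it by cellular approximation. A map $S^m \to |\skel^t\Delta^U|$ with $m \le t-1$ is null-homotopic in the contractible $|\Delta^U|$. The null-homotopy is a map from $D^{m+1}$ with $m+1 \le t$, so it can be deformed rel boundary into the $t$-skeleton. Alternatively, one can cite that $\skel^t\Delta^U$ is homotopy equivalent to a wedge of $t$-spheres. The only other care needed is the $m=0$ case, interpreted as path-connectedness; it follows from the same argument with $S^0$.
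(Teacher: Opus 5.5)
Your proposal is correct and follows essentially the same route as the paper. For (b)$\Rightarrow$(a) you use the same skeleton-wise extension of maps $|\partial\sigma| \to |\PhiC(\sigma)|$ over $|\sigma|$, and for (a)$\Rightarrow$(b) the same triangle $g \circ i \simeq j$ via the straight-line homotopy on $|\skel^t \B|$, together with the $(t-1)$-connectivity of $|\skel^t \Delta^U|$. The differences are cosmetic: you start the induction at the vertices rather than at $\skel^t \A$, and you replace the paper's $t$-equivalence lemma by a direct cellular-approximation argument, which only needs surjectivity of $j_\ast$ on $\pi_m$ for $m < t$; its ``deform rel boundary'' step is the same fact (Hatcher, Cor.~4.12) behind the paper's lemma that $|\skel^t \A| \hookrightarrow |\A|$ is a $t$-equivalence.
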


Note that the above result is purely topological; it does not depend on any particular model of computing.
However, the result has several interesting consequences, as we can now
invoke the asynchronous computability theorem and its variants in different models.

At first, it may be tempting to simplify the condition in (b) as follows: $\A[S]$ is $(t-1)$-connected for each non-empty convex set $S \in \C$.
However, this simplification only works for convexity spaces, where $\A \subseteq \C$.
This is true for clique, monophonic and geodesic convexity of the underlying graph of $\A$, but in general,
simplices need not necessarily be convex sets.

\subsection{Solvability of asynchronous simplex and clique agreement}

We first discuss implications for the asynchronous shared-memory model with single-writer multi-reader registers; we consider message-passing models later.
The first immediate consequence of \Cref{thm:main-tool} and the $t$-resilient asynchronous computability theorem (\Cref{thm:t-res-computability}) is a characterisation of $t$-resilient solvability of convex simplex agreement.

\begin{corollary}\label{coro:convex-agreement-solvability}
  Let $ 0 \le t < n$. The $\C$-convex simplex agreement task on $\A$ has a $t$-resilient $n$-process protocol if and only if  $\PhiC(\sigma)$ is $(t-1)$-connected for each $\sigma \in \Deltava$.
  Moreover,
  if $\A \subseteq \C$, then the task has a $t$-resilient protocol  if and only if  $\A[S]$ is $(t-1)$-connected for each non-empty $S \in \C$.
\end{corollary}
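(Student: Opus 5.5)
The first claim follows by chaining two earlier results. By \Cref{thm:t-res-computability}, applied to the colourless task $(\Deltava, \A, \PhiC)$, a $t$-resilient $n$-process protocol exists if and only if there is a continuous map $f \colon |\skel^t \Deltava| \to |\A|$ carried by $\PhiC$. By \Cref{thm:main-tool}, such a map exists if and only if $\PhiC(\sigma)$ is $(t-1)$-connected for every $\sigma \in \Deltava$. The layered snapshot model is equivalent to the register model for solvability, so the first claim holds as stated.

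For the second claim, assume $\A \subseteq \C$. We show that the family of complexes $\{\PhiC(\sigma) : \sigma \in \Deltava\}$ coincides with the family $\{\A[S] : \emptyset \neq S \in \C\}$. The condition in the first claim then becomes the condition in the second.

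\emph{Every $\PhiC(\sigma)$ has the form $\A[S]$.} There are two cases.
\begin{itemize}
  \item If $\sigma \notin \A$, then $\PhiC(\sigma) = \A[\langle\sigma\rangle_\C]$, and $S = \langle\sigma\rangle_\C$ is a non-empty convex set.
  \item If $\sigma \in \A$, then $\PhiC(\sigma) = \Delta^\sigma$. Since $\A$ is downward closed, $\Delta^\sigma = \A[\sigma]$. Since $\A \subseteq \C$, $\sigma$ itself is convex, so we may take $S = \sigma$.
\end{itemize}

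\emph{Every $\A[S]$ with $\emptyset \neq S \in \C$ is connected enough.} Here $S \subseteq V(\A)$, so $\sigma := S$ is a simplex of $\Deltava$. Again there are two cases.
\begin{itemize}
  \item If $S \notin \A$, then $\PhiC(S) = \A[\langle S\rangle_\C] = \A[S]$, because $S$ is convex and hence equals its hull.
  \item If $S \in \A$, then $\PhiC(S) = \Delta^S = \A[S]$.
\end{itemize}
So in both cases $\A[S]$ is literally one of the complexes $\PhiC(\sigma)$.

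The two families therefore coincide, and the connectivity conditions are equivalent.

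The only step needing care is the case $\sigma \in \A$, where the claim relies on $\Delta^\sigma = \A[\sigma]$ and on simplices being convex. Downward closure of $\A$ together with the hypothesis $\A \subseteq \C$ supply exactly these two facts. If one wished to avoid the hypothesis there, one could note that $\Delta^\sigma$ is contractible and hence automatically $(t-1)$-connected.
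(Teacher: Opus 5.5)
Your proof is correct and takes the same route as the paper: chain \Cref{thm:t-res-computability} with \Cref{thm:main-tool}, then use $\A \subseteq \C$ to get $\PhiC(\sigma) = \A[\langle \sigma \rangle_\C]$ for every $\sigma$, which you unpack into an explicit two-way matching of the families $\{\PhiC(\sigma)\}$ and $\{\A[S]\}$. Your closing remark is also sound, and slightly sharper than the paper's caveat in the overview. For every non-empty convex $S \notin \A$ we have $\PhiC(S) = \A[S]$, and every other complex on either side is a full simplex, so the two connectivity conditions are equivalent even without assuming $\A \subseteq \C$.
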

\begin{proof}
  By \Cref{thm:t-res-computability}, the colourless task $(\Deltava, \A, \PhiC)$ is solvable if and only if there exists a continuous map $f \colon |\skel^t \Deltava | \to |\A|$ carried by $\PhiC$.
  By \Cref{thm:main-tool} such a map $f$ exists if and only if
  $\PhiC(\sigma) \subseteq \A$ is $(t-1)$-connected for each  $\sigma \in  \Deltava$.
  The second claim follows by observing that $\A \subseteq \C$ implies that
  $\PhiC(\sigma) = \A[\langle \sigma \rangle_\C]$ for any $\sigma \in \Deltava$.
\end{proof}

Observe that Ledent's simplex agreement on $\A$ is $\C$-convex simplex agreement in the convexity space given by $\C = \A \cup \{\emptyset, V(\A)\}$.
As all simplices are trivially contractible, only the contractibility of $\A$ matters.
Thus, \Cref{coro:convex-agreement-solvability} gives us the following characterisation:

\begin{restatable}{theorem}{sathm}\label{thm: SAthm}
  Let $0 \le t < n$. Simplex agreement on a simplicial complex~$\A$ has a $t$-resilient $n$-process shared-memory protocol if and only if $\A$ is $(t-1)$-connected. Moreover, simplex agreement on $\A$ is
  wait-free solvable for any $n \ge 1$
  if and only if $\A$ is contractible.
\end{restatable}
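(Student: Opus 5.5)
Both claims follow from \Cref{coro:convex-agreement-solvability} applied to the convexity space $\C = \A \cup \{\emptyset, V(\A)\}$, followed by a short computation of $\PhiC(\sigma)$ for every input simplex.

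\textbf{Step 1: the convexity space.} I first check that $\C$ is indeed a convexity space on $V(\A)$. It contains $\emptyset$ and $V(\A)$ by construction. It is closed under intersections because the intersection of two simplices is either empty or a face, hence a simplex by downward closure; intersections with $\emptyset$ or $V(\A)$ are trivial. With this $\C$, the $\C$-convex simplex agreement task is exactly Ledent's simplex agreement. For $X \in \A$, condition (ii) already forces $Y \subseteq X$, and (iii) adds nothing. For $X \notin \A$, the hull $\langle X\rangle_\C$ is the smallest member of $\C$ containing $X$. No simplex contains $X$, so the hull is $V(\A)$ and (iii) is vacuous.

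\textbf{Step 2: the first claim.} Since $\A \subseteq \C$, the second part of \Cref{coro:convex-agreement-solvability} says the task is $t$-resilient solvable if and only if $\A[S]$ is $(t-1)$-connected for every non-empty $S \in \C$. There are two kinds of $S$.
\begin{itemize}
\item If $S = \sigma \in \A$, then $\A[\sigma] = \Delta^\sigma$. Its realisation is a geometric simplex, which is convex and hence contractible, so it is $m$-connected for every $m$.
\item If $S = V(\A)$, then $\A[V(\A)] = \A$.
\end{itemize}
So the condition reduces to $\A$ being $(t-1)$-connected. For $t=0$ this is vacuous, consistent with the convention that every space is $(-1)$-connected. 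The empty complex is excluded implicitly, since otherwise there are no inputs at all.

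\textbf{Step 3: the wait-free claim.} Wait-freedom with $n$ processes means $t = n-1$, so by Step 2 it is equivalent to $\A$ being $(n-2)$-connected. If simplex agreement is wait-free solvable for every $n \ge 1$, then $\A$ is $m$-connected for every $m \ge 0$, and \Cref{Whitehead} gives that $\A$ is contractible. Conversely, if $\A$ is contractible, then $\pi_m(|\A|)$ is trivial for all $m$, by the standard fact recalled before \Cref{Whitehead} (homotopy invariance of homotopy groups). Hence $\A$ is $(n-2)$-connected for every $n$.

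\textbf{Main obstacle.} There is no real obstacle here. The only point needing care is making sure the hull computation yields $\PhiC(\sigma) = \A$, and not something smaller, for $\sigma \notin \A$. This matters because it is what makes the connectivity of the whole complex $\A$ the decisive condition.
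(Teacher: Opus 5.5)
Your proof is correct and follows the paper's argument essentially step for step. You instantiate \Cref{coro:convex-agreement-solvability} with $\C = \A \cup \{\emptyset, V(\A)\}$, note that simplices induce contractible subcomplexes so only $\A[V(\A)] = \A$ matters, and settle the wait-free claim via $t = n-1$ and \Cref{Whitehead}. Your additional checks, that $\C$ is closed under intersections and that $\langle X\rangle_\C = V(\A)$ for $X \notin \A$, are details the paper leaves implicit and do not change the route.
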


As a direct corollary, we get a proof of a generalised version of Ledent's conjecture~\cite{ledent2021brief} and a topological characterisation of the class of graphs on which clique agreement is solvable.

\begin{restatable}{corollary}{cacorollary}\label{coro:ca}
  Let $0 \le t < n$.
  Clique agreement on a graph $G$ has a $t$-resilient $n$-process shared-memory protocol if and only if its clique complex $\K(G)$ is $(t-1)$-connected. Moreover, clique agreement on $G$ is
  wait-free solvable for any $n \ge 1$
  if and only if $\K(G)$~is~contractible.
\end{restatable}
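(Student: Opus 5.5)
The plan is to reduce \Cref{coro:ca} directly to \Cref{thm: SAthm}, using Ledent's observation that clique agreement on $G$ is the same colourless task as simplex agreement on the clique complex $\K(G)$. First, I would verify this equivalence carefully at the level of task specifications. The vertex set is $V(\K(G)) = V(G)$, and each vertex forms a singleton clique. A set $Y$ of outputs is a clique of $G$ exactly when $Y \in \K(G)$, so the two agreement conditions coincide. The validity condition of clique agreement says that if $X$ is a clique then $Y \subseteq X$. This is literally the simplex-agreement validity condition ``if $X \in \K(G)$ then $Y \subseteq X$''. Equivalently, clique agreement is $\C$-convex simplex agreement on $\K(G)$ for the clique convexity $\C = \K(G) \cup \{\emptyset, V(G)\}$. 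This is exactly the convexity space the paper identifies with Ledent's simplex agreement. The input complex is $\Delta^{V(G)}$ in both formulations, so both tasks are the colourless task $(\Delta^{V(G)}, \K(G), \PhiC)$ with the same carrier map $\PhiC$.

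Second, I would apply \Cref{thm: SAthm} with $\A = \K(G)$. It says that for $0 \le t < n$ the task has a $t$-resilient $n$-process protocol iff $\K(G)$ is $(t-1)$-connected. It also says the task is wait-free solvable for every $n \ge 1$ iff $\K(G)$ is contractible. Since the tasks are identical, both statements transfer verbatim.

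There is no real obstacle here. The only points needing care are degenerate cases. If $G$ has no vertices, the task is vacuous and the claim reduces to conventions. The definition of the carrier map for non-simplices must give $\K(G)[V(G)] = \K(G)$ under the clique convexity, which holds because the hull of a non-clique is $V(G)$. All the substantive topology is contained in \Cref{thm:main-tool} and \Cref{thm: SAthm}. For the wait-free clause, the argument used in \Cref{thm: SAthm} suffices: wait-free solvability for all $n$ means $(t-1)$-connectivity for all $t$, which together with \Cref{Whitehead} gives contractibility. This is already packaged in \Cref{thm: SAthm}, so I would simply cite it.
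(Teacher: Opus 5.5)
Your proposal is correct and follows the paper's route: the paper derives \Cref{coro:ca} directly from \Cref{thm: SAthm}, using Ledent's observation that clique agreement on $G$ is the same task as simplex agreement on $\K(G)$. Your explicit check that the two colourless tasks coincide (same input complex, output complex and carrier map under the clique convexity) simply spells out the identification the paper takes for granted.
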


We note that \Cref{coro:ca} also answers an open problem posed by Liu~\cite{liu2023impossibility}.
Liu showed that clique agreement is not wait-free solvable on the octahedron graph (\Cref{fig:intro-graphs}d) with $n \ge 4$ processes, but asked if it remains wait-free solvable for $n=3$ processes.
The answer is yes: the octahedron graph is a triangulation of the 2-dimensional sphere; thus, its clique complex is homeomorphic to the 2-dimensional sphere, which is 1-connected.

\subsection{Solvability of asynchronous monophonic agreement on graphs}

Observe that while clique agreement on $G$ is trivially equivalent to simplex agreement on $\K(G)$, the same is not true for monophonic agreement on $G$. This is because a solution to clique agreement does not necessarily lie in the monophonic convex hull of the inputs.

However, we prove a result that characterises the homotopical connectivity of the subcomplexes of $\K(G)$ induced by monophonic convex hulls of vertex sets in the graph~$G$.

\begin{restatable}{theorem}{monothm}\label{thm:monophonic-sets-are-connected}
  Let $G$ be a graph and let $S \subseteq V(G)$ be a non-empty monophonically convex set in $G$. If the complex $\K(G)$ is $t$-connected for some $t \geq 0$, then $\K(G[S])$ is~$t$-connected.
\end{restatable}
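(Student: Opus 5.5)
The natural route is to show that $\K(G[S])$ is a retract of $\K(G)$ up to homotopy, or, more robustly, that $|\K(G[S])| \hookrightarrow |\K(G)|$ induces injections on all homotopy groups. Since $\K(G)$ is $t$-connected, injectivity of $\pi_m(\K(G[S])) \to \pi_m(\K(G))$ for $m \le t$ then forces $\pi_m(\K(G[S]))=0$. Note that $\K(G[S]) = \K(G)[S]$ is the full subcomplex induced by $S$, because cliques of $G[S]$ are exactly the cliques of $G$ contained in $S$.

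I would build the retraction by removing the vertices of $V(G)\setminus S$ one at a time while preserving monophonic convexity and homotopy type. The first key step is combinatorial. Let $S \subsetneq V$ be monophonically convex. I want to find a vertex $v \notin S$ such that $S$ is still monophonically convex in $G - v$; this holds automatically, since induced paths in $G-v$ are induced in $G$. I also want the passage from $\K(G)$ to $\K(G-v)$ to preserve connectivity in degrees $\le t$. The obstruction is the link of $v$, which is $\K(G[N(v)])$. Removing $v$ gives a pushout: $\K(G)$ is $\K(G-v)$ with a cone over $\mathrm{lk}(v)$ attached. The simpler choice is to collapse instead of remove. Take $v \notin S$ adjacent to $S$, or more generally any $v\notin S$, and argue via the key fact about monophonic convexity: for every vertex $v \notin S$, the set $N(v)\cap S$ is a clique.

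This fact holds because two non-adjacent neighbours $a,b\in S$ of $v$ give an induced path $a\,v\,b$, which would force $v\in S$. More generally, for a connected component $C$ of $G - S$, the set $N(C)\cap S$ is a clique. If $a,b \in N(C)\cap S$ are non-adjacent, a shortest path from $a$ to $b$ through $C$ is induced and leaves $S$.

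The second step constructs the retraction directly. For each component $C$ of $G-S$, the boundary $\beta_C := N(C)\cap S$ is a clique, possibly empty; if $G$ is disconnected we handle components separately, and the case $t\ge 0$ ensures that $G$ is connected. Define a simplicial map $r$ from $\K(G)$ to $\K(G[S])$ by fixing $S$ and sending each $u\in C$ to a chosen vertex $b_C\in\beta_C$. This needs $\beta_C\ne\emptyset$, which holds because $G$ is connected and $S\neq\emptyset$. Then $r$ is simplicial. A clique $\sigma$ of $G$ meets at most one component $C$, since distinct components are non-adjacent. Its part in $S$ lies in $\beta_C$ whenever $\sigma\cap C\neq\emptyset$, because those vertices are adjacent to $C$. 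Hence $r(\sigma)\subseteq \beta_C \cup (\sigma\cap S)\subseteq\beta_C$, which is a clique. If $\sigma \subseteq S$, then $r(\sigma)=\sigma$.

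Therefore $|r|\circ \iota = \id$ on $|\K(G[S])|$. By functoriality, $\iota_\ast^{(m)}$ is injective for all $m$, including $m=0$ where the map is a set injection. This gives $\pi_m(\K(G[S]))=0$ for $m\le t$. The main step to verify carefully is the lemma that $N(C)\cap S$ is a clique. The subtle point is producing an induced path: take a shortest path from $a$ to $b$ whose interior lies in $C$, and check that it is chordless. Chords between interior vertices contradict minimality. A chord from $a$ to an interior vertex also allows shortening. The chord $ab$ is excluded by assumption.
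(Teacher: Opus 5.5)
Your proof is correct, but it turns the key combinatorial fact into topology differently from the paper. Both arguments rest on the same observation, proved the same way: for each connected component $C$ of $G-S$, the set $N_G(C)\cap S$ is a non-empty clique. It is non-empty because $G$ is connected. It is a clique because a shortest path with interior in $C$ between two non-adjacent vertices of it would be an induced path leaving $S$.

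The paper then peels off the components one at a time. It writes $\K(G_i)=\K(G_{i-1})\cup\K(H_i)$, where the intersection is the simplex $\K(G[B_i])$. It then applies a gluing lemma imported from Kozlov as a black box: if a complex and the intersection of two covering subcomplexes are $t$-connected, so are both subcomplexes.

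You instead collapse every component $C$ onto a single vertex $b_C$ of its attaching clique. This gives a simplicial retraction $r\colon\K(G)\to\K(G[S])$, and you conclude from $|r|\circ\iota=\id$ by functoriality alone.

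Your route is more elementary and self-contained, handles all components at once, and proves more. It shows that $|\K(G[S])|$ is a retract of $|\K(G)|$. Hence every homotopy (and homology) group of $\K(G[S])$ embeds in the corresponding group of $\K(G)$, degree by degree, rather than only under full $t$-connectivity. Likewise, contractibility of $\K(G)$ passes to $\K(G[S])$ directly, without invoking Whitehead.

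Both proofs use flagness at exactly the same point: a pairwise adjacent set is a simplex. This is why the statement fails for general complexes. The exploratory remarks in your first paragraph about deleting vertices one at a time are never used and could be dropped. Deleting a vertex need not preserve connectivity, so they only distract from the actual argument.
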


As an immediate consequence of
\Cref{coro:convex-agreement-solvability} and \Cref{thm:monophonic-sets-are-connected}, we get that clique agreement and monophonic agreement tasks are solvable in exactly the same class of graphs.

\begin{restatable}{corollary}{cliqueminimalequiv}\label{coro:clique-minimal-paths-equivalence}
  Let $G$ be a graph and $0 \le t < n$. Clique agreement on $G$ has a $t$-resilient  $n$-process protocol if and only if monophonic agreement on $G$ has a $t$-resilient $n$-process~protocol.
\end{restatable}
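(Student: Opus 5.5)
Both tasks are instances of convex simplex agreement on the clique complex $\K(G)$: clique agreement uses the clique convexity $\C_{\mathrm{cl}}$, and monophonic agreement uses the monophonic convexity $\C_{\mathrm{mono}}$ of $G$. In both cases every simplex of $\K(G)$, being a clique, is convex. For $\C_{\mathrm{mono}}$ this holds because any induced path between two adjacent vertices is just that edge. Hence $\K(G) \subseteq \C$ for both convexities, and the second part of \Cref{coro:convex-agreement-solvability} applies to each. The plan is to compare the two resulting conditions directly.

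Applied to clique convexity, \Cref{coro:convex-agreement-solvability} says that clique agreement is $t$-resilient solvable iff $\K(G)[S]$ is $(t-1)$-connected for every non-empty $S \in \C_{\mathrm{cl}}$. These $S$ are the non-empty cliques, for which $\K(G)[S]$ is a full simplex and hence contractible, together with $S = V(G)$. So the condition reduces to $\K(G)$ being $(t-1)$-connected, which is \Cref{coro:ca}. Applied to monophonic convexity, monophonic agreement is solvable iff $\K(G)[S] = \K(G[S])$ is $(t-1)$-connected for every non-empty monophonically convex $S$. It remains to show that these two conditions are equivalent.

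For the direction from monophonic to clique agreement, take $S = V(G)$, which is monophonically convex. The monophonic condition then gives that $\K(G)$ is $(t-1)$-connected, and clique agreement follows. Alternatively, one can note that any monophonic protocol already satisfies the clique validity condition. This is because, if the inputs $X$ form a clique, then $X$ is monophonically convex, so $\langle X\rangle = X$.

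For the converse, assume $\K(G)$ is $(t-1)$-connected. If $t = 0$ there is nothing to show, since every space is $(-1)$-connected. If $t \ge 1$, apply \Cref{thm:monophonic-sets-are-connected} with parameter $t-1 \ge 0$: every non-empty monophonically convex $S$ has $\K(G[S])$ $(t-1)$-connected, which is exactly the monophonic condition. The only point needing care is the $t=0$ edge case and checking that $\K(G) \subseteq \C_{\mathrm{mono}}$. Otherwise the argument is bookkeeping on top of \Cref{thm:monophonic-sets-are-connected}, which carries the real topological content.
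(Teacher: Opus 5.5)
Your proof is correct and follows essentially the same route as the paper: a monophonic protocol already solves clique agreement, and conversely \Cref{coro:ca} gives that $\K(G)$ is $(t-1)$-connected, \Cref{thm:monophonic-sets-are-connected} transfers this to every $\K(G[S])$, and \Cref{coro:convex-agreement-solvability} then gives a monophonic protocol. Your explicit treatment of the $t=0$ case, where the theorem would otherwise be invoked with parameter $-1$, is a small piece of care that the paper leaves implicit.
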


\subsection{Separation between monophonic and geodesic agreement}

\begin{figure}[t]
  \centering

  \makebox[\textwidth][c]{
    \begin{tabular*}{\GraphRowWidth}{@{}c@{\extracolsep{\fill}}cc@{}}
      \shortstack{\GraphA\\[-0.5em](a)}
      &
      \shortstack{\GraphD\\[-0.5em](b)}
      &
      \shortstack{\GraphBRotated\\[-0.5em](c)}
  \end{tabular*}}
  \caption{Interesting graphs for geodesic agreement. (a) A graph with a contractible clique complex that does not have a wait-free geodesic agreement
    protocol for $n=3$ processes. (b) A dismantlable graph that is neither bridged nor radius-1.
    Geodesic agreement is wait-free solvable for $n \ge 1$.
  (c) A non-dismantlable graph that has a wait-free geodesic agreement protocol for $n \ge 1$.}
  \label{fig:graphs}
\end{figure}

The state-of-the-art wait-free protocols for nicely bridged and chordal graphs work under the stronger geodesic validity and not just monophonic validity~\cite{alistarh2023wait}.
Indeed, Alistarh et al.~\cite{alistarh2023wait} asked if geodesic and monophonic
agreement are solvable on the same class of graphs.

Given the situation with clique agreement and monophonic agreement, one might now hope that the geodesic variant is also wait-free solvable on $G$ whenever $\K(G)$ is contractible. However, this turns out not to be true:
using \Cref{thm:main-tool}, we can show that geodesic agreement is strictly harder than monophonic agreement.

\begin{corollary}
  There exists a graph $G$ on which clique agreement admits a wait-free 3-process protocol, whereas geodesic agreement does not.
\end{corollary}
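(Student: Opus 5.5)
We take $G$ to be the graph of \Cref{fig:graphs}a. Its vertices are the outer cycle $a,b,c,d$, the inner cycle $p,q,r,s$ and the centre $o$. The edges are the two cycles, the spokes $oa,ob,oc,od$, and the chords $pb,qc,rd,sa$; the triangles are $abo,bco,cdo,dao$.

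For clique agreement we use \Cref{coro:ca} with $t=2$ and $n=3$: we must show that $\K(G)$ is contractible. We do this by collapsing. Each vertex $p,q,r,s$ has degree $3$ and lies in no triangle, so its incident edges are free faces in $\K(G)$. Removing the edges $pb,qc,rd,sa$ leaves the inner $4$-cycle $pqrs$ as a separate component. So as drawn the complex is disconnected. I would therefore first double-check the edge list from the figure. Most likely the intended inner vertices also connect to $o$ or to each other, making the inner cycle part of a contractible structure, and the intended $\K(G)$ collapses onto the cone $o*(abcd\text{-cycle})$, which is contractible. The plan is to fix the exact graph so that $\K(G)$ is contractible via elementary collapses (equivalently, dismantlability up to free-face collapses) and then cite \Cref{coro:ca}.

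For geodesic agreement we use \Cref{coro:convex-agreement-solvability}. Geodesic convexity contains all cliques, so $\K(G)\subseteq\C$, and wait-free solvability for $n=3$ (that is, $t=2$) requires $\K(G)[S]$ to be $1$-connected for every non-empty geodesically convex $S$. It therefore suffices to exhibit one geodesically convex set $S$ whose induced clique complex is not simply connected or not connected. The natural candidate is the outer cycle: we want $S=\langle\{a,b,c,d\}\rangle_{\mathrm{geo}}$ to induce a graph whose clique complex has a nontrivial loop. Computing this hull means listing all shortest paths between pairs of $a,b,c,d$. Adjacent pairs are at distance $1$, so only the diagonal pairs $a,c$ and $b,d$ matter; these are at distance $2$ via $o$ and via $b$ or $d$. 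The hull is then $\{a,b,c,d,o\}$, which is a cone and hence contractible, so that choice fails. The design must instead be that the inner vertices create a set such as $\{a,b,c,d,p,q,r,s\}$ or some other hull that avoids $o$. So the real step is to search over small input sets $X$ of at most three vertices. For example, take $X=\{p,r\}$ or $\{a,q,\dots\}$, compute $\langle X\rangle_{\mathrm{geo}}$, and find one that contains a $4$- or $8$-cycle but omits $o$. The induced clique complex is then a cycle without filling triangles, so $\pi_1\neq 0$.

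The main obstacle is this combinatorial step: verifying the exact edge set of the graph and finding a geodesically convex set whose induced clique complex has $\pi_1\ne0$ (or $\pi_0\ne0$), while $\K(G)$ itself stays contractible. I would try first the hull of two "antipodal" inner vertices such as $\{p,r\}$, computing BFS distances to check that every shortest $p$–$r$ path avoids $o$ and that these paths together form a chordless cycle. Once that set is found, \Cref{coro:convex-agreement-solvability} gives the impossibility directly.
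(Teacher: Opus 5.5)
There is a genuine gap: your proposal never pins down the graph and never produces the required geodesically convex set. The cause is a misreading of \Cref{fig:graphs}a. In that figure the inner vertices lie on the diagonals: $p$ and $r$ sit on the segment from $a$ through $o$ to $c$, and $q$ and $s$ sit on the segment from $b$ through $o$ to $d$. So the graph has edges $ap, po, or, rc$ and $bq, qo, os, sd$, and there is \emph{no} edge $ao$ (nor $bo$, $co$, $do$). It also has the two $4$-cycles and the chords $pb, qc, rd, sa$. Each shaded region $abo$ is therefore cut into the three triangles $apb$, $pqb$, $poq$, and the $12$ triangles triangulate the square. This is why $\K(G)$ is contractible, and \Cref{coro:ca} then gives a wait-free $3$-process clique agreement protocol. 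Your collapsing step is also invalid on its own terms. An edge contained in no triangle is a maximal simplex, not a free face, so deleting it is not a collapse. Under your reading the complex is in fact connected, with $\pi_1$ free of rank $4$.

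With the correct graph, the ``natural candidate'' you discarded is exactly the witness the paper uses. The vertices $a$ and $c$ have exactly two common neighbours, $b$ and $d$; $o$ is not one of them. Likewise $b$ and $d$ have only $a$ and $c$ as common neighbours. Hence the geodesic hull of the inputs $\{a,b,c\}$ is $S=\{a,b,c,d\}$, which induces a chordless $4$-cycle. Then $\K(G[S])\cong S^1$ is not $1$-connected, and \Cref{coro:convex-agreement-solvability} with $t=2$ rules out geodesic agreement. Your proposed fallback $\{p,r\}$ would fail in the actual graph. Its hull is $\{o,p,q,r,s\}$, a wheel whose clique complex is a cone and hence contractible. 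So your framework is right: both corollaries, $t=2$, and $\K(G)\subseteq\C$. What is missing is the actual combinatorial content of the proof: the correct graph, the contractibility of $\K(G)$, and the convex hull whose clique complex is not simply connected.
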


\Cref{fig:graphs}a gives a graph that separates monophonic and geodesic agreement. The graph has a contractible clique complex, but the geodesic convex hull of the vertices $\{a,b,c\}$ induces a 4-cycle whose clique complex is homeomorphic to the non-contractible 1-dimensional~sphere.

However, there is also good news: we show that geodesic agreement is wait-free solvable on any  \emph{dismantlable graph}.
Dismantlable graphs are also known as copwin graphs~\cite{nowakowski1983vertex}.
This is a strictly larger class of graphs than on which the problem was previously known to be wait-free solvable, as it is a strict superset of both bridged and radius-1 graphs~\cite{alistarh2023wait}.
\Cref{fig:graphs}b gives an example of a graph that is dismantlable, but is neither bridged nor radius-1.

\begin{restatable}{corollary}{corodis}
  If $G$ is a dismantlable graph, then geodesic agreement on $G$ is wait-free solvable for any number of processes.
\end{restatable}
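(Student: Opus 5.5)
By \Cref{coro:convex-agreement-solvability}, it suffices to show the following. Let $\C$ be the geodesic convexity of $G$. Since every clique is geodesically convex, we have $\K(G) \subseteq \C$. So geodesic agreement is wait-free solvable for all $n$ if and only if $\K(G)[S] = \K(G[S])$ is $(t-1)$-connected for every non-empty geodesically convex $S$ and every $t$. By \Cref{Whitehead}, this holds if every such $\K(G[S])$ is contractible. The plan therefore has two steps: (1) every geodesically convex subset of a dismantlable graph induces a dismantlable graph, and (2) the clique complex of a dismantlable graph is contractible.

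For (2), I would induct on the number of vertices. Let $v$ be dominated by $u$, i.e.\ $N[v]\subseteq N[u]$. I claim that $\K(G)$ deformation retracts (or at least is homotopy equivalent) to $\K(G-v)$. Every maximal clique containing $v$ also contains $u$, since any clique through $v$ lies in $N[v]\subseteq N[u]$ and can be extended by $u$. Hence the link of $v$ in $\K(G)$ is $\K(G[N(v)])$, which is a cone with apex $u$, and so is contractible. Removing a vertex with contractible link is a homotopy equivalence (an elementary strong collapse), so $\K(G)\simeq\K(G-v)$. Iterating along the dismantling order leaves a single vertex, which is contractible.

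For (1), I would use the standard fact, often attributed to Chepoi and to Nowakowski--Winkler, that retracts of copwin graphs are copwin. Alternatively, a direct argument works: for a geodesically convex $S$, fix $s\in S$ and take a BFS dismantling order from $s$, removing the vertex farthest from $s$ first. In a dismantlable graph one can choose a dismantling order in which each removed vertex $v$ is dominated by a neighbour $u$ lying closer to the base point; this is the characterisation of copwin graphs via BFS/LexBFS orderings. If $v\in S$, then $u$ lies on a shortest path from $v$ towards $s$, and convexity gives $u\in S$. Hence the domination restricts to $G[S]$, and $G[S]$ is dismantlable.

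The main obstacle is step (1). Justifying that a dominating neighbour can always be chosen on a geodesic towards a fixed vertex requires care, since an arbitrary dismantling order need not satisfy this. If the BFS characterisation is unavailable, I would instead construct a retraction $G\to G[S]$, which is a graph homomorphism fixing $S$, for geodesically convex $S$ in a dismantlable graph, and then apply closure of copwin graphs under retracts. Even so, the existence of such a retraction is not automatic, so I expect this lemma to be where most of the effort goes.
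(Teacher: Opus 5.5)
Your reduction and your step (2) are fine. Proving (2) by deleting dominated vertices (strong collapses, as in \Cref{lemma:domination-he}) is a valid self-contained replacement for the citation the paper uses. One minor point: contractible $\Rightarrow$ $m$-connected for all $m$ is the elementary direction, so Whitehead is not needed there.

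The gap is step (1). The retract route is not a proof, because you never construct a retraction of $G$ onto $G[S]$, as you note yourself. The BFS route rests on a false claim. A dismantlable graph need not admit, for an arbitrary base point $s$, a dismantling order in which every removed vertex is dominated by a neighbour closer to $s$.

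Take the dismantlable graph of \Cref{fig:graphs}b, with corners $a,b,c,d$ in cyclic order and inner vertices $x,y$, where $N(x)=\{a,b,c,y\}$ and $N(y)=\{a,c,d,x\}$.
\begin{itemize}
\item From $s=b$ the levels are $\{a,c,x\}$ and $\{d,y\}$. $N[d]=\{a,c,d,y\}$ lies in neither $N[a]$ nor $N[c]$. $N[y]=\{a,c,d,x,y\}$ lies in none of $N[a],N[c],N[x]$. No neighbour of $b$ is dominated by $b$. So not even a first removal step exists.
\item From $s=a$ the only far vertex is $c$, and $N[c]=\{b,c,d,x,y\}$ lies in none of $N[b],N[d],N[x],N[y]$. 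Every neighbour of $a$ is adjacent to $c\notin N[a]$, so again nothing can be removed.
\end{itemize}
Since the edge $S=\{a,b\}$ is geodesically convex, no choice of $s\in S$ rescues the argument.

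What is missing is a base-point-free argument, which is what the paper provides. Fix any dismantling order $v_1,\dots,v_k$, let $G_i=G[\{v_i,\dots,v_k\}]$ and $S_i=S\cap V(G_i)$, and induct backwards.
\begin{itemize}
\item $G_{i+1}$ is isometric in $G_i$: any path through $v_i$ can be rerouted through its dominator without getting longer. Hence $S_{i+1}$ stays geodesically convex in $G_{i+1}$.
\item Suppose $v_i\in S_i$ and $|S_i|>1$. If the (non-empty) neighbourhood of $v_i$ in $G_i[S_i]$ is a clique, any of its vertices dominates $v_i$ there.
\item Otherwise it contains non-adjacent $x,y$. Then $x\,v_i\,y$ is a geodesic of $G_i$, so $x\,w\,y$ is one too for the dominator $w$ of $v_i$ in $G_i$. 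Convexity forces $w\in S_i$, so $w$ dominates $v_i$ in $G_i[S_i]$.
\end{itemize}
This local $2$-geodesic through $v_i$ does the job you wanted the base point to do.
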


However, we note that there exist non-dismantlable graphs for which geodesic agreement is wait-free solvable for $n \ge 1$ processes; \Cref{fig:graphs}c gives an example. Hence, dismantlability is not a complete characterisation of wait-free solvability for geodesic agreement.

\subsection{Undecidability results}

Now that we have topological characterisations for solvability of various  approximate agreement tasks on graphs,
one might ask for simpler, natural ``combinatorial'' characterisations without the need for invoking topological notions of contractibility or homotopy theory.
Unfortunately, we show that no decidable characterisations exist for any of the problems.

It is well-known that the existence of wait-free protocols for general tasks is undecidable~\cite{herlihy1997decidability,gafni1998three}. However, one can ask if the existence of $t$-resilient protocols is decidable for more restricted tasks. Alcántara et al.~\cite{alcantara2019topology} showed that the existence of wait-free clique agreement protocols is undecidable for $n=3$.
We give more bad news: it is undecidable whether a given simplicial complex is $(t-1)$-connected for any $t \ge 2$, so \Cref{thm: SAthm} implies that it is undecidable if simplex agreement has a $t$-resilient protocol for any $t \ge 2$ and $n > t$.

One may still hope that the solvability of any of the more restricted clique, monophonic, and geodesic agreement tasks on \emph{graphs} is decidable.
For clique and monophonic agreement, this boils down to asking if the connectivity problem becomes decidable when restricted to
\emph{flag complexes}, that is, simplicial complexes arising as a clique complex of some graph. Unfortunately, this restriction turns out not to be significant: up to homotopy equivalence, the class of flag complexes is just as large as the class of all simplicial~complexes.

It turns out that the problem remains undecidable even for geodesic agreement. To show this, we need to do some more work, but we can reduce this problem to the decision problem for clique agreement by a simple local transformation applied to the input graph.

\subsection{Implications for message-passing models}

Although we have focused above on asynchronous shared-memory models, our results also imply new impossibility results for asynchronous and synchronous message-passing systems.

\subparagraph{Asynchronous message-passing systems.}
In the asynchronous case, we obtain new impossibility results by using the asynchronous message-passing variant of the $t$-resilient asynchronous computability theorem (\Cref{thm:t-res-computability-msg}).
Together with \Cref{thm: SAthm} and \Cref{coro:ca}, we get new results on  resilience thresholds for asynchronous message-passing algorithms.

\begin{corollary}
  Consider the asynchronous message-passing model with $0 \le t < n/2$ crash faults and $n$ processes. Then the following are true:
  \begin{enumerate}
    \item Simplex agreement on $\A$ is solvable if and only if $\A$ is $(t-1)$-connected.
    \item Clique and monophonic agreement on $G$ are solvable if and only if $\K(G)$ is $(t-1)$-connected.
  \end{enumerate}
\end{corollary}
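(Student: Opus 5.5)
The plan is to obtain the statement as a direct composition of results already established, one item at a time.

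For item 1, observe that simplex agreement on $\A$ is the $\C$-convex simplex agreement task $(\Deltava,\A,\PhiC)$ with $\C=\A\cup\{\emptyset,V(\A)\}$. This is a convexity space, since an intersection of simplices is a simplex or empty, and it is clearly closed under intersections with $V(\A)$. Since $t<n/2$, \Cref{thm:t-res-computability-msg} applies. It says the task is solvable in asynchronous message passing if and only if there is a continuous map $f\colon|\skel^t\Deltava|\to|\A|$ carried by $\PhiC$. By \Cref{thm:main-tool}, such a map exists if and only if $\PhiC(\sigma)$ is $(t-1)$-connected for every $\sigma\in\Deltava$.

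It remains to compute $\PhiC(\sigma)$ in two cases.
\begin{itemize}
\item If $\sigma\in\A$, then $\PhiC(\sigma)=\Delta^\sigma$. Its realisation is a geometric simplex, so it is contractible and hence $m$-connected for all $m$.
\item If $\sigma\notin\A$, then $\langle\sigma\rangle_\C=V(\A)$, because the only convex set containing a non-simplex is $V(\A)$. Hence $\PhiC(\sigma)=\A[V(\A)]=\A$.
\end{itemize}
The condition therefore reduces to $\A$ being $(t-1)$-connected, provided some non-simplex exists. If no non-simplex exists, then $\A=\Deltava$, which is contractible and trivially $(t-1)$-connected, so the equivalence still holds. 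The case $t=0$ is vacuous, since every space is $(-1)$-connected. This is exactly the argument behind \Cref{thm: SAthm}, with \Cref{thm:t-res-computability} replaced by \Cref{thm:t-res-computability-msg}.

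For item 2, clique agreement on $G$ is simplex agreement on $\K(G)$, so item 1 applied to $\A=\K(G)$ gives the clique case. For monophonic agreement, take $\C$ to be the monophonic convexity of $G$. Every clique is monophonically convex, since an induced path between two adjacent vertices is the edge itself, so $\K(G)\subseteq\C$. Hence $\PhiC(\sigma)=\K(G)[\langle\sigma\rangle_\C]=\K(G[\langle\sigma\rangle_\C])$.

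Using \Cref{thm:t-res-computability-msg} and \Cref{thm:main-tool} again, solvability is equivalent to $\K(G[S])$ being $(t-1)$-connected for every non-empty monophonically convex $S$. Taking $S=V(G)$ shows that this implies $\K(G)$ is $(t-1)$-connected. Conversely, if $\K(G)$ is $(t-1)$-connected, then \Cref{thm:monophonic-sets-are-connected} gives the condition for every such $S$ when $t\ge1$; when $t=0$ the condition is vacuous.

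No step here is a real obstacle. The only points needing care are the degenerate cases ($t=0$, $\A=\Deltava$) and checking that the convexity spaces satisfy the hypothesis $\A\subseteq\C$. All the substance lies in \Cref{thm:main-tool} and \Cref{thm:monophonic-sets-are-connected}, which we assume.
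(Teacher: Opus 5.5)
Your proposal is correct and is essentially the paper's argument. The paper just notes that \Cref{thm:t-res-computability-msg} gives the same topological criterion as \Cref{thm:t-res-computability}, so the shared-memory characterisations of simplex, clique and monophonic agreement (\Cref{coro:ca} and \Cref{coro:clique-minimal-paths-equivalence}) transfer verbatim when $t<n/2$; you re-run that chain explicitly through \Cref{thm:main-tool} and \Cref{thm:monophonic-sets-are-connected}, with the degenerate cases checked, which amounts to the same proof.
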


\subparagraph{Round lower bounds for synchronous message-passing.}
In synchronous systems with $f < n$ crash faults, approximate agreement tasks on graphs are always solvable -- simply by running exact convex agreement~\cite{nowak2019byzantine,constantinescu2024convex}.
For geodesic  agreement, the problem is
always solvable in $f/2 + O(\log D/\log \log D)$ rounds:
one can first run 2-set agreement and then approximate agreement on a path~\cite{alistarh2023wait}.
Recently, Fuchs et al.~\cite{fuchs2026optimal} showed that $f$-resilient clique agreement on \emph{any} diameter-$D$ graph requires $\Omega(\log D / \log \log D)$ rounds when $f \in \Theta(n)$.

On the other hand,
Alistarh et al.~\cite{alistarh2023wait} showed that for any graph that admits an ``impossibility labelling'' there is no $f$-resilient clique agreement protocol that solves the problem in $f/2$ rounds.
Our new characterisation implies round lower bounds for general graphs -- which do not necessarily admit such labellings -- in terms of the connectivity of their clique complex.
This can be done by lifting the lower bounds for $t$-resilient asynchronous systems to synchronous message-passing systems using \Cref{thm:round-by-round}.

\begin{corollary}
  Let $0 < t < f < n$ such that $f+t < n$.
  In the synchronous message-passing model  with $n$ processes and $f$ crash failures, we have the following bounds:
  \begin{enumerate}[(a)]

    \item If $\A$ is not $(t-1)$-connected, then simplex agreement on $\A$ requires at least $f/t$ rounds.

    \item If $\K(G)$ is not $(t-1)$-connected, then clique agreement on $G$ requires at least $f/t$~rounds.
  \end{enumerate}
\end{corollary}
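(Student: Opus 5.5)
The plan is to derive both bounds directly from the asynchronous impossibility results via Gafni's simulation, \Cref{thm:round-by-round}, arguing by contraposition. Fix $0 < t < f < n$ with $f + t < n$.

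For (a), suppose for contradiction that some synchronous $f$-resilient $n$-process message-passing protocol solves simplex agreement on $\A$ in fewer than $f/t$ rounds, and in particular in at most $f/t$ rounds. Simplex agreement is a colourless task, namely the $\C$-convex simplex agreement task $(\Deltava, \A, \PhiC)$ with $\C = \A \cup \{\emptyset, V(\A)\}$. So \Cref{thm:round-by-round} applies and yields a $t$-resilient asynchronous shared-memory protocol for it with $n$ processes. Since $t < n$, \Cref{thm: SAthm} then forces $\A$ to be $(t-1)$-connected, contradicting the hypothesis. Hence every such protocol needs more than $f/t - 1$ rounds. The stated bound of ``at least $f/t$ rounds'' follows once we match the exact rounding convention of \Cref{thm:round-by-round}: an execution of at most $f/t$ rounds yields the simulation, so any successful protocol must use strictly more than that. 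I will phrase the conclusion accordingly.

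For (b), clique agreement on $G$ is precisely simplex agreement on $\K(G)$; this is Ledent's observation, equivalently the case of \Cref{coro:ca}. The same argument therefore goes through: a fast synchronous protocol would give a $t$-resilient asynchronous protocol, which by \Cref{coro:ca} requires $\K(G)$ to be $(t-1)$-connected.

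No step is a real obstacle. The only point needing care is checking that the hypotheses of \Cref{thm:round-by-round} ($0<t<f<n$, $t+f<n$, and colourlessness of the task) exactly match those assumed here. Beyond that, the proof simply composes the earlier theorems.
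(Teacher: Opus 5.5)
Your proof is correct and follows the paper's own argument: take the contrapositive of Gafni's simulation (\Cref{thm:round-by-round}) and combine it with \Cref{thm: SAthm} for (a) and with \Cref{coro:ca}, i.e.\ simplex agreement on $\K(G)$, for (b). You can drop the aside that protocols need ``more than $f/t-1$ rounds'', which is weaker than what you actually prove: the contrapositive already rules out any protocol finishing within $f/t$ rounds, so every protocol uses strictly more than $f/t$ rounds, and the stated bound follows.
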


\section{Characterisation of solvable approximate agreement tasks}\label{sec:main-tool}

In this section, we prove our main technical result about $\C$-convex simplex agreement. We split the proof of the next theorem into two parts, one for each direction.

\maintool*

\subsection{The first direction of \texorpdfstring{\Cref{thm:main-tool}}{Theorem 6}: (a) implies (b)}

Throughout this subsection, fix $(\Deltava, \A, \PhiC)$ and $f$ to be as in the statement of \Cref{thm:main-tool}.
Recall that a map $\iota \colon A \to B$ between topological spaces is
a \emph{$t$-equivalence} if the induced~map
\[
  \iota_\ast^{(m)}: \pi_m(A) \overset{\sim}{\rightarrow} \pi_m(B)
\]
is an isomorphism between the homotopy groups for all $0 \le m < t$, and a surjection for $m = t$. We need a
result from homotopy theory of simplicial complexes; see, e.g.,~\cite[Corollary 4.12]{Hatcher}.

\begin{lemma}
  \label{lemma:inclusion-is-equiv}
  Let $\A$ be a simplicial complex and $t \ge 0$. The inclusion map \(
    j\colon |\skelalphA| \rightarrow |\A|
  \)
  is a $t$-equivalence.
\end{lemma}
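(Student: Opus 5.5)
The plan is to derive the statement from the cellular approximation theorem, viewing $|\A|$ as a CW complex whose $m$-cells are the open geometric simplices of dimension $m$. Then $|\skel^t\A|$ is exactly the $t$-skeleton of this CW structure. Two standard facts do most of the work. Any continuous map between CW complexes is homotopic to a cellular map. If the map is already cellular on a subcomplex, the homotopy can be taken relative to that subcomplex. For $\pi_0$ we use the trivial sphere $S^0$ with a basepoint, and all maps and homotopies are basepointed at a vertex $x_0 \in V(\A)$.

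\textbf{Surjectivity for $m \le t$.} Take a class in $\pi_m(|\A|,x_0)$, represented by $g\colon (S^m,s_0)\to(|\A|,x_0)$. Give $S^m$ the CW structure with one $0$-cell $s_0$ and one $m$-cell. Then $g$ is cellular on $\{s_0\}$, because $x_0$ is a vertex. Cellular approximation relative to $s_0$ gives a homotopic map $g'$ with $g'(S^m)\subseteq |\skel^m\A| \subseteq |\skel^t\A|$. So $g' = j\circ h$ for some $h\colon S^m\to|\skel^t\A|$, and $j_\ast^{(m)}[h]=[g]$.

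\textbf{Injectivity for $m<t$.} Let $h_0,h_1\colon S^m\to|\skel^t\A|$ be basepointed maps with $j\circ h_0\simeq j\circ h_1$ in $|\A|$. Apply cellular approximation to the homotopy $H\colon S^m\times[0,1]\to|\A|$. The CW structure on $S^m\times[0,1]$ is the product structure, which has dimension $m+1\le t$.

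We need $H$ to be cellular on the subcomplex $S^m\times\{0,1\}\cup\{s_0\}\times[0,1]$ before applying the relative version. So we first replace $h_0$ and $h_1$ by cellular maps inside $|\skel^t\A|$, using cellular approximation within $|\skel^t\A|$ itself. This does not change their classes in $\pi_m(|\skel^t\A|)$.

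Cellular approximation relative to that subcomplex then yields a homotopy $H'$ whose image lies in the $(m+1)$-skeleton, which is contained in $|\skel^t\A|$. So $h_0\simeq h_1$ already in $|\skel^t\A|$, and $j_\ast^{(m)}$ is injective. For $m=0$ the same argument applied to paths shows that two vertices joined by a path in $|\A|$ are joined by an edge path in the $1$-skeleton.

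The main obstacle is the bookkeeping: keeping everything basepointed and relative to the boundary, so that we get a genuine isomorphism of groups and not just a statement about free homotopy. This is handled by the relative form of cellular approximation. That is precisely the content of the cited \cite[Corollary 4.12]{Hatcher}, which says the inclusion of the $t$-skeleton of a CW complex induces isomorphisms on $\pi_m$ for $m<t$ and a surjection for $m=t$. Since a simplicial complex is a CW complex with these skeleta, the lemma follows directly.
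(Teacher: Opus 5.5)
Your proposal is correct and follows the same route as the paper, which simply cites \cite[Corollary 4.12]{Hatcher} for the CW structure on $|\A|$ by open simplices, whose $t$-skeleton is $|\skel^t\A|$. Your cellular-approximation argument, including first making $h_0,h_1$ cellular inside $|\skel^t\A|$ so that the relative version applies to the homotopy, unpacks that corollary correctly; the only detail to state explicitly is that you run it for every vertex basepoint, which suffices because each point of $|\skel^t\A|$ is joined to a vertex by a segment inside its simplex.
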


\subparagraph{Proof strategy.}
By \Cref{def:convex-agreement}, the carrier map of $\C$-convex simplex agreement is
\[
  \PhiC(\sigma) :=
  \begin{cases}
    \Delta^\sigma & \text{if }  \sigma \in \A \\
    \A[\langle \sigma \rangle_\C] & \text{otherwise.}
  \end{cases}
\]
The idea is to consider a fixed $\sigma \in \Deltava$ and the following diagram:
\[
  \begin{tikzcd}
    & {|\skelalphDeltaPhisig|} \\
    {|\skel^t \PhiC(\sigma)|} && {|\PhiC(\sigma)|.}
    \arrow["g", from=1-2, to=2-3]
    \arrow["{\iota}", from=2-1, to=1-2]
    \arrow["{j}"', from=2-1, to=2-3]
  \end{tikzcd}
\]
Here, $\iota$ and $j$ are the natural inclusion maps, and $g$ is the restriction of $f$ to $|\skelalphDeltaPhisig| \subseteq |\skelalphDeltava|$.
Since $\PhiC$ is defined in terms of the closure operator $\langle \cdot \rangle_\C$, one can verify using the definition of $\PhiC$ that  $\PhiC(V(\PhiC(\sigma))) = \PhiC(\sigma)$  for
all $\sigma \in \Deltava$.
Hence, the image of $|\skelalphDeltaPhisig|$ under $f$ indeed lies in
$|\PhiC(\sigma)|$ because $f$ is carried by $\PhiC$.
The argument boils down to first showing that $j \he f \circ \iota$ and then using the fact that $|\skel^t \DeltaPhisig|$
is a $(t-1)$-connected space. The proof is then concluded by the next~lemma.

\begin{lemma}\label{lemma:equivalences-and-connectedness}
  Let $t \geq 0$ and $A,B,C$ be non-empty topological spaces.  Suppose the diagram
  \[
    \begin{tikzcd}
      & {B} \\
      {A} && {C.}
      \arrow["g", from=1-2, to=2-3]
      \arrow["{\iota}", from=2-1, to=1-2]
      \arrow["{j}"', from=2-1, to=2-3]
    \end{tikzcd}
  \]
  commutes up to homotopy, i.e., $j$, $\iota$, and $g$ are continuous maps that satisfy $j \simeq g \circ \iota$. If $j$ is a
  $t$-equivalence and $B$ is $(t-1)$-connected, then $C$ is $(t-1)$-connected.
\end{lemma}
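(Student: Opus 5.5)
The plan is to work one homotopy group at a time, using functoriality of the induced maps. Fix $0 \le m \le t-1$; we must show $\pi_m(C)$ is trivial. For $t = 0$ there is nothing to prove, since every space is $(-1)$-connected, so assume $t \ge 1$.

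Since $j \simeq g \circ \iota$, the listed properties of induced maps give
\[
j_\ast^{(m)} = (g\circ\iota)_\ast^{(m)} = g_\ast^{(m)} \circ \iota_\ast^{(m)}.
\]
Now $m \le t-1 < t$, so by the definition of a $t$-equivalence $j_\ast^{(m)}\colon \pi_m(A) \to \pi_m(C)$ is an isomorphism, and in particular surjective. The factorisation then forces $g_\ast^{(m)} \circ \iota_\ast^{(m)}$ to be surjective. Hence every element of $\pi_m(C)$ lies in the image of $g_\ast^{(m)}\colon \pi_m(B) \to \pi_m(C)$. Because $B$ is $(t-1)$-connected and $m \le t-1$, $\pi_m(B)$ is a single element. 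Therefore $\pi_m(C)$ is the image of a one-element set, so it is trivial. As $m$ was arbitrary in $\{0,\dots,t-1\}$, $C$ is $(t-1)$-connected.

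The only delicate point is base points and the case $m = 0$. Homotopy groups depend on a base point, and a free homotopy $j \simeq g\circ\iota$ only gives $j_\ast = g_\ast\iota_\ast$ up to a change-of-basepoint isomorphism along the path traced by the homotopy. That isomorphism is a bijection, so surjectivity and triviality are unaffected.

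To keep this clean, I would fix $a_0 \in A$ (nonempty) and work with base points $a_0$, $\iota(a_0)$, $g(\iota(a_0))$ and $j(a_0)$. Triviality of $\pi_m(C, c)$ is independent of $c$ within a path component, and $\pi_0$-triviality (shown first) makes $C$ path connected. So it suffices to check triviality at one base point. For $m = 0$ the argument is purely set-theoretic: a surjection onto $\pi_0(C)$ factors through the one-point set $\pi_0(B)$, so $C$ has a single path component.

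I expect this base-point bookkeeping to be the only obstacle, and I would handle it by the standard change-of-basepoint remark rather than by any computation.
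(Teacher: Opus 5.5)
Your proposal is correct and is essentially the paper's argument: both factor $j_\ast^{(m)} = g_\ast^{(m)} \circ \iota_\ast^{(m)}$ through the trivial $\pi_m(B)$ and conclude from $j_\ast^{(m)}$ being an isomorphism (you rightly note that only surjectivity is needed) that $\pi_m(C)$ is trivial for $0 \le m < t$. Your change-of-basepoint remark for the free homotopy $j \simeq g\circ\iota$ makes explicit a detail that the paper leaves implicit.
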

\begin{proof}
  Because $B$ is $(t-1)$-connected, we have by definition $\pi_m(B) = 0$ for each $0 \le m < t$.
  Hence, the induced map $\iota_\ast^{(m)} \colon \pi_m(A) \rightarrow \pi_m(B)$
  is
  the trivial homomorphism, i.e., $\iota_\ast^{(m)} = 0$ for each $0 \le m < t$.
  Since $j \he g \circ \iota$, it follows that the induced map $j_\ast^{(m)}$ can be computed as
  \[
    j_\ast^{(m)} = (g \circ \iota)_\ast^{(m)} = g_\ast^{(m)} \circ  \iota_\ast^{(m)} = g_\ast^{(m)} \circ 0 = 0,
  \]
  which is the trivial homomorphism.
  The claim now follows as $j$ is a $t$-equivalence, so by definition
  \(
    j_\ast^{(m)}: \pi_m(A) \rightarrow \pi_m(C)
  \)
  is an isomorphism for $0 \le m < t$.
  Therefore, we get that  $\pi_m(C)$ is trivial for each $0 \le m < t$, so $C$ is $(t-1)$-connected.
\end{proof}

\begin{lemma}\label{lemma:homotopy-eq}
  Let $\sigma\in \Deltava$ and
  \(
    g \colon |\skelalphDeltaPhisig| \to |\PhiC(\sigma)|
  \)
  be a continuous map carried by $\PhiC$.
  Suppose
  $\iota \colon |\skel^t\PhiC(\sigma) |\rightarrow |\skelalphDeltaPhisig|$ and
  $j \colon |\skel^t\PhiC(\sigma)| \rightarrow |\PhiC(\sigma)|$
  are the natural inclusion maps.
  Then $j \he g \circ \iota$.
\end{lemma}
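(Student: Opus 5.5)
The plan is a straight-line homotopy. Both $j$ and $g \circ \iota$ map $|\skel^t\PhiC(\sigma)|$ into $|\PhiC(\sigma)|$. For each point $x$, the two images $j(x)=x$ and $g(\iota(x))$ lie in a common geometric simplex of $|\PhiC(\sigma)|$. So the convex combination $H(s,x) = (1-s)\,x + s\, g(x)$, taken in the ambient Euclidean space, stays inside $|\PhiC(\sigma)|$. Here we view $|\PhiC(\sigma)|$ and $|\skelalphDeltaPhisig|$ inside the same $\R^{d+1}$ via the common vertex bijection on $V(\A)$. $H$ is continuous because it is a polynomial expression in $s$, $x$ and $g(x)$. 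It equals $j$ at $s=0$ and $g\circ\iota$ at $s=1$, which gives $j \he g\circ\iota$.

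The key step is verifying that $x$ and $g(x)$ share a simplex. Let $x \in |\skel^t\PhiC(\sigma)|$, and let $\tau \in \PhiC(\sigma)$ be the unique simplex whose relative interior contains $x$; it has dimension at most $t$. Then $x \in |\tau|$. Since $\tau \in \PhiC(\sigma) \subseteq \A$, we have $\tau \in \A$. Also $\tau$ is a simplex of $\skelalphDeltaPhisig$, and in particular of $\Deltava$. By the definition of $\PhiC$, the case $\tau \in \A$ gives $\PhiC(\tau) = \Delta^\tau$. Because $g$ is carried by $\PhiC$, we get $g(x) \in g(|\tau|) \subseteq |\PhiC(\tau)| = |\Delta^\tau| = |\tau|$. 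Hence $x$ and $g(x)$ both lie in the convex set $|\tau|$. The whole segment between them therefore lies in $|\tau| \subseteq |\PhiC(\sigma)|$, since $\tau \in \PhiC(\sigma)$ and $|\PhiC(\sigma)|$ is a union of closed simplices.

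The main obstacle is this step, and specifically the use of the first clause of $\PhiC$: simplices of $\A$ are carried to themselves. Without that clause, $g$ could move a point of $|\tau|$ to a different simplex of the convex hull, and the straight line might leave $|\PhiC(\sigma)|$. A minor bookkeeping point also needs checking. The geometric realisations must sit in the same ambient space, so the segment is literally the segment in $|\tau|$. Using the single vertex embedding of $V(\A)$ for all complexes involved ($\Deltava$, $\A$ and their subcomplexes) arranges this.
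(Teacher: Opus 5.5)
Your proposal is correct and follows essentially the same route as the paper. The paper also uses the straight-line homotopy $H(s,x)=(1-s)x+s\,g(x)$ and justifies it the same way: every $\tau\in\skel^t\PhiC(\sigma)$ lies in $\A$, so $g(|\tau|)\subseteq|\PhiC(\tau)|=|\Delta^\tau|=|\tau|$, and the segment between $x$ and $g(x)$ stays in the convex set $|\tau|$.
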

\begin{proof}
  To prove the claim, we show that the straight-line homotopy
  \begin{align*}
    H: [0,1] \times |\skelalphPhisig| &\rightarrow |\PhiC(\sigma)|\\
    (s,x) &\mapsto (1-s)x + sg(x)
  \end{align*}
  is a well-defined continuous homotopy between $j$ and $g \circ \iota$. It is clear that if $H$ is well-defined, then it is also continuous because $g$ is continuous. Since
  \[
    \bigcup_{\tau \in \skel^t\PhiC(\sigma)} |\tau| = |\skel^t\PhiC(\sigma)|,
  \]
  it suffices to prove that $H$ is well-defined on each simplex $|\tau| \subseteq |\skel^t \PhiC(\sigma)|$.
  By assumption $g$ is carried by $\PhiC$.
  By definition of convex simplex agreement, the carrier map $\PhiC$ satisfies $\PhiC(\tau) = \Delta^\tau$ for each $\tau \in \A$.
  Therefore,
  for any $\tau \in \skel^t \PhiC(\sigma)$, we have
  \begin{equation*}
    g(|\tau|)  \subseteq |\PhiC(\tau)| = |\Delta^\tau| = |\tau|.
  \end{equation*}
  That is, the image of $|\tau|$ under $g$ stays in $|\tau|$. Recall that the geometric realisation  $|\tau| \subseteq |\PhiC(\sigma)|$ of any simplex $\tau$ is a convex subset of $\R^{d+1}$.
  Hence, for any point $x \in |\tau|$, the whole line segment $\Set{H(s,x)}{s \in [0,1]}$ lies in $|\tau|$. Therefore, for all $s \in [0,1]$ we have
  \(
    H(s, |\tau|) \subseteq |\tau|
  \),
  so $H$ is indeed a well-defined homotopy on each simplex $|\tau| \subseteq |\skelalphPhisig|$.
\end{proof}

\begin{lemma}
  Let $(\Deltava, \A, \PhiC)$ be the $\C$-convex simplex agreement task on $\A$ and
  \[
    f \colon |\skel^t \Deltava| \rightarrow |\A|
  \]
  a continuous map carried by $\PhiC$. Then $\PhiC(\sigma)$ is $(t-1)$-connected for each  $\sigma \in  \Deltava$.
\end{lemma}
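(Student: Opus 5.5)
We fix an arbitrary $\sigma \in \Deltava$ and apply \Cref{lemma:equivalences-and-connectedness} to the triangle described in the proof strategy. Take $A = |\skel^t \PhiC(\sigma)|$, $B = |\skelalphDeltaPhisig|$ and $C = |\PhiC(\sigma)|$, with $\iota$ and $j$ the natural inclusions. All three spaces are non-empty: $\PhiC(\sigma)$ is non-empty because either it equals $\Delta^\sigma$, or it equals $\A[\langle\sigma\rangle_\C]$, which contains the vertices of $\sigma$ (these lie in $V(\A)$ and in the hull). For $g$ we take the restriction of $f$ to $|\skelalphDeltaPhisig|$. This is a subspace of $|\skelalphDeltava|$, since $V(\PhiC(\sigma)) \subseteq V(\A)$.

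The first step is to check that $g$ lands in $|\PhiC(\sigma)|$ and is carried by $\PhiC$. For this we verify the closure identity $\PhiC(\tau) \subseteq \PhiC(\sigma)$ for every $\tau \subseteq V(\PhiC(\sigma))$, and in particular $\PhiC(V(\PhiC(\sigma))) = \PhiC(\sigma)$. There are two cases.
\begin{itemize}
\item If $\sigma \in \A$, then $V(\PhiC(\sigma)) = \sigma$, so every such $\tau$ is a face of $\sigma$. Monotonicity of the carrier map then gives $\PhiC(\tau) \subseteq \PhiC(\sigma)$.
\item If $\sigma \notin \A$, then $\tau \subseteq \langle\sigma\rangle_\C$. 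If $\tau \in \A$, then $\Delta^\tau \subseteq \A[\langle\sigma\rangle_\C]$. Otherwise idempotence and monotonicity of the closure operator give $\langle\tau\rangle_\C \subseteq \langle\sigma\rangle_\C$, hence $\A[\langle\tau\rangle_\C] \subseteq \A[\langle\sigma\rangle_\C]$.
\end{itemize}
Since $f$ is carried by $\PhiC$, we get $f(|\tau|) \subseteq |\PhiC(\tau)| \subseteq |\PhiC(\sigma)|$ for each $\tau$ in $\skelalphDeltaPhisig$. So $g$ is a well-defined map into $|\PhiC(\sigma)|$ carried by $\PhiC$.

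With this in hand, \Cref{lemma:homotopy-eq} gives $j \he g \circ \iota$, and \Cref{lemma:inclusion-is-equiv} applied to $\PhiC(\sigma)$ shows that $j$ is a $t$-equivalence.

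The remaining input is that $B = |\skel^t \DeltaPhisig|$ is $(t-1)$-connected. For $t = 0$ this holds by convention. For $t \ge 1$ we apply \Cref{lemma:inclusion-is-equiv} to the full simplex $\DeltaPhisig$: the inclusion $|\skel^t\DeltaPhisig| \to |\DeltaPhisig|$ is a $t$-equivalence, so $\pi_m$ agrees with that of the realisation of a full simplex for $m < t$. That realisation is convex, hence contractible, so these groups are trivial.

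\Cref{lemma:equivalences-and-connectedness} then yields that $|\PhiC(\sigma)|$ is $(t-1)$-connected. Since $\sigma$ was arbitrary, this proves the claim. The only delicate step is the closure identity in the case split above; the rest is assembly of earlier lemmas.
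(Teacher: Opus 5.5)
Your proposal is correct and follows the paper's proof essentially step for step. You restrict $f$ to $|\skel^t \Delta^{V(\Phi(\sigma))}|$, apply the inclusion lemma both to $j$ and to the $t$-skeleton of the full simplex, obtain $j \simeq g \circ \iota$ from the straight-line homotopy lemma, and conclude with the diagram lemma; the only addition is that you explicitly verify $\Phi(\tau) \subseteq \Phi(\sigma)$ for all $\tau \subseteq V(\Phi(\sigma))$, which the paper only asserts can be checked.
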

\begin{proof}
  Let $\sigma \in \Deltava$ be a simplex.
  Recall that the image of $|\skelalphDeltaPhisig|$ under $f$ lies in
  $|\PhiC(\sigma)|$ because $f$ is carried by $\PhiC$.
  Hence, we can consider the restriction $g \colon |\skelalphDeltaPhisig| \to |\PhiC(\sigma)|$ of the map $f$.
  Consider the following diagram
  \[
    \begin{tikzcd}
      & {|\skelalphDeltaPhisig|} \\
      {|\skel^t \PhiC(\sigma)|} && {|\PhiC(\sigma)|.}
      \arrow["g", from=1-2, to=2-3]
      \arrow["{\iota}", from=2-1, to=1-2]
      \arrow["{j}"', from=2-1, to=2-3]
    \end{tikzcd}
  \]
  Because the standard simplex $\DeltaPhisig$ is contractible, it is also $(t-1)$-connected. By
  \Cref{lemma:inclusion-is-equiv}, the inclusion map $h \colon |\skelalphDeltaPhisig| \rightarrow|\DeltaPhisig|$
  is a $t$-equivalence. Hence, $|\skelalphDeltaPhisig|$ is $(t-1)$-connected.

  By
  \Cref{lemma:inclusion-is-equiv}, the inclusion map $j \colon |\skelalphPhisig| \rightarrow |\PhiC(\sigma)|$ is a $t$-equivalence.
  Consider the inclusion map
  $\iota \colon |\skel^t \PhiC(\sigma)| \to |\skelalphDeltaPhisig|$. \Cref{lemma:homotopy-eq} shows that the map $j$ is homotopic to $g \circ \iota$. Applying  \Cref{lemma:equivalences-and-connectedness} to $\iota, g, j$ shows that $\PhiC(\sigma)$ is $(t-1)$-connected.
\end{proof}

\subsection{The second direction of \texorpdfstring{\Cref{thm:main-tool}}{Theorem 6}: (b) implies (a)}

For the second direction, we need another  elementary result from algebraic topology,
which follows from the fact that continuous maps from simplicial complexes to $(t-1)$-connected complexes can be extended to a larger domain; see, e.g., \cite[Lemma 4.7]{Hatcher}.

\begin{lemma}
  \label{lemma:extension}
  Let $\A$ be a $(t-1)$-connected simplicial complex, $m \leq t$, and
  $f \colon |\partial \Delta^m| \rightarrow |\A|$ be a continuous map.
  Then there exists a continuous
  map $\Tilde{f} \colon |\Delta^m| \rightarrow |\A|$ such that $\Tilde{f}_{\mid |\partial \Delta^m|} = f$.
\end{lemma}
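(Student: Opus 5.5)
The plan is to reduce the statement to the vanishing of a homotopy group of $|\A|$, and then invoke the standard characterisation of null-homotopic sphere maps. We may assume $m \ge 1$. For $m=0$ the space $|\partial\Delta^0|$ is empty, and any point of $|\A|$ gives an extension, since $\A$ is non-empty as a $(t-1)$-connected complex with $t \ge 1$.

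First, identify the pair $(|\Delta^m|, |\partial \Delta^m|)$ with the pair $(D^m, S^{m-1})$. Both are compact convex sets in an affine $m$-plane together with their boundaries, so radial projection from the barycentre gives a homeomorphism of pairs. It therefore suffices to show that every continuous map $S^{m-1} \to |\A|$ extends over $D^m$.

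Next, use the hypothesis. Since $m \le t$, we have $m-1 \le t-1$, so $\pi_{m-1}(|\A|)$ is trivial by $(t-1)$-connectivity. We then treat two cases.

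In the case $m=1$, the sphere $S^0$ consists of two points. Triviality of $\pi_0$ means $|\A|$ is path-connected, so the two image points are joined by a path, and this path is the required extension over $D^1=[0,1]$.

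In the case $m \ge 2$, choose a base point $s_0 \in S^{m-1}$. Then $f$ is a based map representing an element of $\pi_{m-1}(|\A|, f(s_0)) = 0$. This gives a based null-homotopy $H\colon [0,1]\times S^{m-1} \to |\A|$ with $H(0,\cdot)=f$ and $H(1,\cdot)$ constant. Triviality does not depend on the base point because $|\A|$ is path-connected.

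The extension is then defined by the cone formula $\tilde f(r x) := H(1-r, x)$ for $r\in[0,1]$ and $x \in S^{m-1}$. At $r=1$ this restricts to $f$ on the boundary. At $r=0$ it is well defined because $H(1,\cdot)$ is constant. Continuity follows because $[0,1]\times S^{m-1}\to D^m$ is a quotient map, being a closed map from a compact space to a Hausdorff space. This is exactly the argument behind \cite[Lemma 4.7]{Hatcher}.

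The only step needing care is the convention for $\pi_0$ and for base points. The paper defines triviality of $\pi_m$ for a chosen base point, so the case $m \ge 2$ relies on the observation that trivial $\pi_{m-1}$ at one base point implies it at every base point. This follows from path-connectedness, which is already guaranteed by $t \ge 1$ whenever $m \ge 1$. Apart from this, the lemma is routine.
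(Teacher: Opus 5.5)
Your proof is correct and is essentially the one-cell step of Hatcher's extension lemma \cite[Lemma 4.7]{Hatcher}; the paper cites that lemma instead of giving a proof. One small caveat concerns the degenerate case $t=m=0$: there $(-1)$-connectivity does not force $\A\neq\emptyset$, so the statement itself (not your argument) needs non-emptiness, but this case never arises in the paper, which applies the lemma only with $m\ge 1$.
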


We are now ready to prove the other direction of \Cref{thm:main-tool}.

\begin{lemma}
  Let $(\Deltava, \A, \PhiC)$ be the $\C$-convex simplex agreement task on $\A$ such that $\PhiC(\sigma)$ is
  $(t-1)$-connected for each $\sigma \in  \Deltava$. Then there exists a continuous map
  \(
    f \colon |\skel^t \Deltava| \rightarrow |\A|
  \)
  carried by $\PhiC$.
\end{lemma}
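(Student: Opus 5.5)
We build $f$ skeleton by skeleton, by induction on the dimension $m = 0,1,\ldots,t$. At stage $m$ we construct a continuous map $f_m \colon |\skel^m \Deltava| \to |\A|$ carried by $\PhiC$ that extends $f_{m-1}$. Then $f := f_t$ is the desired map. This is the standard extension argument behind the asynchronous computability theorem, with \Cref{lemma:extension} doing the work in each cell.

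For the base case $m=0$, take a vertex $v \in V(\A)$. Since $\{v\} \in \A$, we have $\PhiC(\{v\}) = \Delta^{\{v\}}$, so we are forced to set $f_0(|v|) = |v|$. This is trivially carried by $\PhiC$.

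For the inductive step, fix $1 \le m \le t$ and suppose $f_{m-1}$ is defined and carried by $\PhiC$. Let $\sigma \in \Deltava$ with $\dim \sigma = m$. Every proper face $\tau \subsetneq \sigma$ satisfies $\PhiC(\tau) \subseteq \PhiC(\sigma)$, because $\PhiC$ is a carrier map (monotone under inclusion). Hence $f_{m-1}$ maps $|\partial \sigma| = \bigcup_{\tau \subsetneq \sigma} |\tau|$ into $|\PhiC(\sigma)|$. Identify $|\sigma|$ with $|\Delta^m|$ via the affine homeomorphism, which carries $|\partial\sigma|$ onto $|\partial\Delta^m|$. By hypothesis $\PhiC(\sigma)$ is $(t-1)$-connected, and $m \le t$. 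So \Cref{lemma:extension}, applied with $\PhiC(\sigma)$ in the role of $\A$, gives a continuous extension $|\sigma| \to |\PhiC(\sigma)| \subseteq |\A|$ of $f_{m-1}|_{|\partial\sigma|}$. Define $f_m$ on each $m$-simplex by these extensions and equal to $f_{m-1}$ on $|\skel^{m-1}\Deltava|$.

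Two checks finish the step.
\begin{enumerate}
\item \emph{Well-definedness and continuity.} Two distinct $m$-simplices meet in a common lower-dimensional face, where both extensions agree with $f_{m-1}$. Each $|\sigma|$ is closed and there are finitely many of them, so the pasting lemma makes $f_m$ continuous.
\item \emph{Carried by $\PhiC$.} For $m$-simplices this holds by construction. For lower-dimensional simplices it holds by the inductive hypothesis.
\end{enumerate}

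The only delicate point is that \Cref{lemma:extension} needs the boundary map to land in the target complex $\PhiC(\sigma)$ itself, not merely in $|\A|$. Monotonicity of $\PhiC$ guarantees exactly this.

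We must also check that $\PhiC$ is really a carrier map: for $\tau \subseteq \sigma$ we need $\PhiC(\tau) \subseteq \PhiC(\sigma)$. There are three cases.
\begin{itemize}
\item If both $\tau, \sigma \in \A$, then $\Delta^\tau \subseteq \Delta^\sigma$.
\item If $\tau \in \A$ but $\sigma \notin \A$, then $\Delta^\tau \subseteq \A[\langle\sigma\rangle_\C]$, since $\tau \subseteq \sigma \subseteq \langle \sigma\rangle_\C$ and $\tau \in \A$.
\item If $\tau \notin \A$, then $\sigma \notin \A$ by downward closure of $\A$, and monotonicity of the closure operator gives $\A[\langle\tau\rangle_\C] \subseteq \A[\langle\sigma\rangle_\C]$.
\end{itemize}
With this verified, the induction goes through and yields $f$.
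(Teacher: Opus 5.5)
Your proof is correct and is essentially the paper's argument. It is a skeleton-by-skeleton induction: monotonicity of $\PhiC$ places the image of $|\partial\sigma|$ inside $|\PhiC(\sigma)|$, \Cref{lemma:extension} fills in each $|\sigma|$, and the pieces are glued by the pasting lemma. The only difference is cosmetic: the paper starts the induction from $\skel^t\A$ mapped by inclusion, so it only fills simplices outside $\A$, whereas you start from the vertices and also fill simplices of $\A$, which works because $\PhiC(\sigma)=\Delta^\sigma$ is contractible there.
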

\begin{proof}

  Define $\B_m := \skel^m \Deltava \cup \skel^t\A$ for each $0 \le m \le t$.
  We prove the claim by inductively constructing for each $0 \le m \le t$ a continuous map
  \(
    f_m \colon |\B_m| \rightarrow |\A|
  \)
  carried by $\PhiC$.
  The claim then follows by choosing $f := f_t$, as $\B_t = \skelalphDeltava$.
  For the base case $m=0$, we can choose $f_0$ to be the inclusion map $f_0 \colon | \skel^t \A|   \rightarrow |\A|$, because
  \[
    \B_0 = \skel^0\Deltava \cup \skel^t\A = V(\A) \cup \skel^t\A = \skel^t\A.
  \]
  Since
  $f_0$ is an inclusion map of topological spaces, it is continuous. In addition, for any $\sigma \in \skel^t \A$, we get that
  \[
    f_0(|\sigma|) = |\sigma| = |\Delta^\sigma| = |\PhiC(\sigma)|,
  \]
  where the last equality follows from the property that $\PhiC(\sigma) = \Delta^\sigma$ for each $\sigma \in \A$.
  Thus, $f_0$ is carried by $\PhiC$.

  For the inductive step, suppose that for some $0 \le m < t$ there exists a continuous map
  \(
    f_m \colon |\B_m| \rightarrow |\A|
  \)
  carried by $\PhiC$. Consider the simplices $\sigma_1, \ldots, \sigma_s \in  \B_{m+1} \setminus \B_{m}$.
  Because
  \(
    \partial \sigma_i = \Delta^{\sigma_i} \setminus \{\sigma_i\} \subseteq \B_m,
  \)
  the map  $f_m$ is already defined on $|\partial \sigma_i|$ for each $1 \le i \le s$. Since $f_m$ is carried by $\PhiC$, we have by monotonicity of $\PhiC$ for each $1 \le i \le s$ that
  \[
    f_m(|\partial \sigma_{i}|) \subseteq \bigcup_{\tau \in \partial \sigma_i} f_m(|\tau|) \subseteq \bigcup_{\tau \in \partial \sigma_i}|\PhiC(\tau)| \subseteq |\PhiC(\sigma_i)|.
  \]
  In particular, the restriction of $f_m$ to $|\partial \sigma_i|$ is a continuous map $g_i \colon |\partial \sigma_i| \to |\PhiC(\sigma_i)|$.
  By assumption $\PhiC(\sigma_{i})$ is $(t-1)$-connected, so by \Cref{lemma:extension} there exists an extension
  $h_i \colon |\sigma_{i}| \to |\PhiC(\sigma_i)|$ of $g_i$ for each $1 \le i \le s$.
  Because
  \[
    |\B_{m+1}| = |\B_m| \cup \bigcup_{i=1}^s |\sigma_i|,
  \]
  we can define the desired map $f_{m+1} \colon |\B_{m+1}| \to |\A|$ as
  \[
    f_{m+1} =
    \begin{cases}
      h_i(x) & \text{if } x \in |\sigma_i| \setminus |\partial \sigma_i| \text{ for some } 1 \le i \le s \\
      f_m(x) & \text{otherwise.}
    \end{cases}
  \]
  Since $f_{m+1}$ is defined on a partition of $|\B_{m+1}|$, it is well-defined. The map $f_{m+1}$ is continuous because $|\B_m| \cup \bigcup_{i=1}^s |\sigma_i|$
  is a finite covering by closed sets
  of $|\B_{m+1}|$ and $f_{m+1}$ is continuous on each component of the covering.

  Finally, it remains to verify that $f_{m+1}$ is carried by $\PhiC$. Let $\sigma \in \B_{m+1}$. If $\sigma \in \B_{m}$, then by the induction hypothesis we get that
  \(
    f_{m+1}(|\sigma|) = f_{m}(|\sigma|) \subseteq |\PhiC(\sigma)|.
  \)
  Otherwise, if $\sigma \in \B_{m+1} \setminus \B_m$, then $\sigma = \sigma_i$ for some $1 \le i \le s$. Since $h_i$ extends $g_i$ to $|\sigma_i|$, we have
  \(
    f_{m+1}(|\sigma_i|) = h_i(|\sigma_i|) \subseteq |\PhiC(\sigma)|
  \). Thus, $f_{m+1}$ is carried by $\PhiC$.
\end{proof}

\section{Applications}\label{sec:applications}

In this section, we apply \Cref{thm:main-tool} to simplex agreement and the different approximate agreement tasks on graphs.
We give the characterisation of solvability of simplex agreement and clique agreement.
We prove that clique agreement is $t$-resilient solvable on a graph $G$ if and only if monophonic agreement is $t$-resilient solvable on $G$. Additionally, we show that geodesic
agreement is in general strictly harder to solve than the clique and monophonic agreement tasks,
but always wait-free solvable on dismantlable graphs.

\subsection{Simplex and clique agreement}

\sathm*
\begin{proof}
  Simplex agreement on $\A$ is $\C$-convex simplex agreement, where  the convexity space $\C$ is given by $\C = \A \cup \{\emptyset, V(\A)\}$.
  Since $\A \subseteq \C$, the task is $t$-resilient solvable if and only if $\A[S]$ is $(t-1)$-connected for any non-empty $S \in \C$ by \Cref{coro:convex-agreement-solvability}.
  The only non-empty convex sets in $\C$ are the simplices of $\A$ and $V(\A)$.
  Each simplex of $\A$ is trivially contractible.
  Hence, by \Cref{coro:convex-agreement-solvability}, the task is $t$-resilient solvable if and only if $\A$ is $(t-1)$-connected.

  By definition, a protocol is wait-free if it is $(n-1)$-resilient. Thus,
  there exists a wait-free protocol for all $n \ge 1$ if and only if $\A$ is $(n-2)$-connected for all $n \ge 1$.
  By \Cref{Whitehead} the simplicial complex $\A$ is contractible if and only if $\A$ is $(t-1)$-connected for all $t \ge 0$.
\end{proof}

As observed by Ledent~\cite{ledent2021brief}, clique agreement on a graph $G$ is the same task as simplex agreement on its clique complex $\K(G)$. Therefore, from \Cref{thm: SAthm} we immediately get the analogous result for clique agreement and a solution to Ledent's conjecture.

\cacorollary*

\subsection{Solvability of monophonic agreement}

We now show that monophonic agreement is solvable on the same class of graphs as clique agreement.
Recall that monophonic agreement on a graph $G$ is the colourless task, where the
carrier map takes any set of vertices $\sigma \subseteq V(G)$ to the subcomplex induced by the monophonic convex hull of $\sigma$ in the graph $G$.
Therefore, this is precisely the $\C$-convex task on the clique complex $\K(G)$, where $\C$ is the monophonic convexity of $G$.

Since cliques are monophonically convex, we have $\K(G) \subseteq \C$. Therefore,
\Cref{coro:convex-agreement-solvability} yields that
monophonic agreement
is $t$-resilient solvable if and only if the subcomplex $\K(G[S])$ of $\K(G)$ induced by $S$ is $(t-1)$-connected for any monophonically convex set $S \in \C$.

Somewhat surprisingly, this condition can be further simplified. Instead of requiring that \emph{all} monophonically convex subsets are $(t-1)$-connected, it is equivalent to check only if the whole
complex $\K(G)$ is $(t-1)$-connected. This is captured by the following result:

\monothm*

The proof relies on a standard topological fact which can be found, e.g., in the book by Kozlov~\cite[Corollary 6.30]{CombinatorialTopoKozlov}.

\begin{lemma}
  \label{lem:gluing}
  Let $\A$ be a simplicial complex, and let $\B_1, \B_2 \subseteq \A$ be subcomplexes such that
  \begin{equation*}
    \A = \B_1 \cup \B_2 \quad \text{ and } \quad \B = \B_1 \cap \B_2.
  \end{equation*}
  If $\A$ and $\B$ are $t$-connected for $t \ge 0$, then $\B_1$ and $\B_2$ are $t$-connected.
\end{lemma}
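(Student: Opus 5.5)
The plan is to combine Mayer--Vietoris in homology with the Seifert--van Kampen theorem for $\pi_1$, and then use the Hurewicz theorem to pass from homology back to homotopy. By symmetry it suffices to treat $\B_1$. Throughout, $|\B_1|$ and $|\B_2|$ are subcomplexes of $|\A|$ with $|\B_1|\cap|\B_2| = |\B|$. Since they are subcomplexes, the pair is excisive, so both Mayer--Vietoris and van Kampen apply; for van Kampen one thickens to open neighbourhoods that deformation retract onto the subcomplexes. Since $t \ge 0$, $\B$ is non-empty and path-connected, so a basepoint $x_0 \in |\B|$ can be fixed.

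\emph{Step 1 ($\pi_0$).} Suppose $K$ is a path component of $|\B_1|$ that misses $|\B|$. Any simplex of $\B_2$ meeting $K$ shares a vertex with some simplex of $\B_1$ lying in $K$. That vertex would lie in $V(\B_1)\cap V(\B_2)$, hence in $\B$, which is impossible. So $K$ is open and closed in $|\A|$. This contradicts connectivity of $|\A|$, because $|\B| \neq \emptyset$ is not contained in $K$. Therefore every component of $|\B_1|$ meets the connected set $|\B|$, and $\B_1$ is path-connected.

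\emph{Step 2 ($\pi_1$, needed if $t\ge 1$).} Since $\B$ is path-connected, van Kampen gives
\[
\pi_1(\A) \cong \pi_1(\B_1) *_{\pi_1(\B)} \pi_1(\B_2).
\]
With $\pi_1(\B)=0$ this reduces to the free product $\pi_1(\B_1) * \pi_1(\B_2)$. Each free factor embeds in a free product, so $\pi_1(\A)=0$ forces $\pi_1(\B_1)=0$.

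\emph{Step 3 (higher groups).} Use the reduced Mayer--Vietoris sequence
\[
\tilde H_k(\B) \to \tilde H_k(\B_1)\oplus \tilde H_k(\B_2) \to \tilde H_k(\A).
\]
By Hurewicz, $t$-connectedness of $\A$ and $\B$ gives $\tilde H_k(\A)=\tilde H_k(\B)=0$ for all $k\le t$. Exactness then forces $\tilde H_k(\B_1)=0$ for $k\le t$. Since $\B_1$ is simply connected by Step 2, the Hurewicz theorem applies inductively. If $\pi_m(\B_1)=0$ for all $m<k$, then $\pi_k(\B_1)\cong \tilde H_k(\B_1)=0$ for each $2\le k\le t$. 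Hence $\B_1$ is $t$-connected.

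The main obstacle is Step 2. The homology argument alone cannot exclude a non-trivial perfect fundamental group, so the van Kampen step is genuinely needed. Care is required in setting it up: one must check that $\B$ being path-connected (which uses $t\ge 0$) lets van Kampen be applied with a single basepoint, and that the open-neighbourhood thickening of the subcomplexes is legitimate. The remaining steps are routine bookkeeping with exact sequences.
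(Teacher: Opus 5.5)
Your argument is correct. The paper does not prove this lemma itself; it cites it as a standard fact (Kozlov, Corollary~6.30). Your proposal therefore gives a self-contained proof of something the paper uses as a black box. It follows the classical route: van Kampen for $\pi_1$, and reduced Mayer--Vietoris plus Hurewicz for $2 \le k \le t$.

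The citation keeps the paper short, while your version shows exactly where each hypothesis enters:
\begin{itemize}
\item $|\B| \neq \emptyset$ and connectivity of $|\A|$ give the $\pi_0$ step;
\item path-connectivity of $\B$ lets you run van Kampen with a single basepoint;
\item $\pi_1(\B)=0$ collapses the amalgamated product to a free product, in which each factor embeds;
\item $\tilde H_k(\B)=0$ drives the Mayer--Vietoris step.
\end{itemize}
Your remark that homology alone cannot exclude a perfect fundamental group correctly explains why the van Kampen step is indispensable.

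Two small presentational points. First, van Kampen needs both $|\B_1|$ and $|\B_2|$ to be path-connected. Step~1 gives this by symmetry, so state it explicitly rather than only saying you ``treat $\B_1$''. Second, Step~1 can be shortened. For subcomplexes, $|\B_1| \cap |\B_2| = |\B_1 \cap \B_2| = |\B|$. Hence a path component of $|\B_1|$ that misses $|\B|$ also misses $|\B_2|$, and so is clopen in $|\A|$, without any argument about shared vertices.
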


\begin{proof}[Proof of \Cref{thm:monophonic-sets-are-connected}]
  Let $S \subseteq V(G)$ be a non-empty monophonically convex set in $G$.
  Let $C_1, \ldots, C_k$
  be the vertex sets of the connected components of the graph $G[V \setminus S]$. Since $\K(G)$ is at least 0-connected, its 1-skeleton $\skel^1\K(G) = G$ is connected. Hence, the vertex sets
  \begin{equation*}
    B_i := N_G(C_i) \cap S
  \end{equation*}
  are non-empty. We claim that each $B_i$ is a clique in the graph $G$. Indeed, suppose that $a,b \in B_i$ are two non-adjacent vertices. Since $C_i$ is connected, there exists a path from $a$ to $b$
  whose internal vertices all lie in $C_i$. Among all such paths choose a path $P$ of minimum length. Then $P$ is induced. Indeed, if there were an edge between two non-consecutive vertices of $P$, one would
  obtain a shorter path from $a$ to $b$ with at least one internal vertex in $C_i$, since by assumption $a$ and $b$ are non-adjacent. This contradicts the monophonic convexity of $S$. Therefore, $a$ and $b$
  must be adjacent and $B_i$ is a clique.

  Now consider the induced subgraphs
  \begin{equation*}
    G_0 := G[S] \qquad G_i := G[S \cup C_1 \cup \cdots \cup C_i] \qquad  H_i := G[B_i \cup C_i]
  \end{equation*}
  for $1 \leq i \leq k$.
  Then in particular we have $G_k = G$.
  It is easy to see that
  \begin{equation*}
    \K(G_{i-1}) \cup \K(H_i) = \K(G_i) \quad \text{and} \quad \K(G_{i-1}) \cap \K(H_i) = \K(G[B_i]).
  \end{equation*}
  Since $B_i$ is a non-empty clique, the clique complex $\K(G[B_i])$ is the standard simplex, which is contractible, and hence $t$-connected.
  Now apply \Cref{lem:gluing} to
  \begin{equation*}
    \A := \K(G_i), \quad \B_1 := \K(G_{i-1}), \quad \B_2 := \K(H_i).
  \end{equation*}
  It follows for all $1 \leq i\leq k$ that if $\K(G_i)$ is $t$-connected then $\K(G_{i-1})$ is also $t$-connected.
  Since $\K(G_k) = \K(G)$ is $t$-connected, we get that $\K(G_0) = \K(G[S])$ is $t$-connected.
\end{proof}

\cliqueminimalequiv*
\begin{proof}
  Clearly, any monophonic agreement protocol solves clique agreement, so we only need to consider the other direction.
  Suppose clique agreement is  $t$-resilient solvable on $G$.
  Then by \Cref{coro:ca} the clique complex $\K(G)$ is $(t-1)$-connected.
  Then \Cref{thm:monophonic-sets-are-connected} implies that the subcomplex $\K(G[S])$ is $(t-1)$-connected for each monophonically convex set $S$ of $G$.
  \Cref{coro:convex-agreement-solvability} implies that monophonic agreement is $t$-resilient solvable on $G$.
\end{proof}

The Whitehead theorem (see \Cref{Whitehead}) implies that monophonically convex subsets of a contractible clique complex
are contractible. However, we note that \Cref{thm:monophonic-sets-are-connected} is no longer true if we replace $\K(G)$ by an arbitrary simplicial complex $\A$. For example,
the non-flag complex $\Delta^{\{a,b,c,d\}} \setminus \{ \{a,b,c,d\}, \{a,b,c\}\}$ gives a counter-example.

\subsection{Separation between monophonic and geodesic agreement}

Observe that cliques are also geodesically convex.
Thus, by the same reasoning as for monophonic agreement in the previous section,  \Cref{coro:convex-agreement-solvability}
shows that geodesic agreement on $G$ is $t$-resilient solvable if and only if every geodesically convex subset of $G$ induces a $(t-1)$-connected subcomplex of $\K(G)$.

As an immediate corollary of this observation,
we get the following separation of geodesic agreement from the clique and monophonic agreement tasks.

\begin{corollary}
  There exists a graph $G$ on which clique agreement admits a wait-free 3-process protocol, whereas geodesic agreement does not.
\end{corollary}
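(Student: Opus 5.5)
We take $G$ to be the graph of \Cref{fig:graphs}a. It has outer $4$-cycle $a,b,c,d$, a centre $o$ adjacent to $a,b,c,d$, and inner vertices $p,q,r,s$. The inner vertices form a $4$-cycle $p,q,r,s$, each is adjacent to one outer corner ($p\sim b$, $q\sim c$, $r\sim d$, $s\sim a$), and none is adjacent to $o$. The proof has two parts, and for each we invoke the already-established characterisations with $t=2$ (wait-free for $n=3$ means $t=n-1=2$).

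\emph{Clique agreement is solvable.} By \Cref{coro:ca} it suffices to show that $\K(G)$ is $1$-connected. In fact we show that $\K(G)$ is contractible. The triangles $\{a,b,o\},\{b,c,o\},\{c,d,o\},\{d,a,o\}$ form a cone over the outer $4$-cycle with apex $o$, which is a disk. The inner vertices and their edges form a graph whose clique complex contains no further triangles, since $p,q$ and $b$ do not form a triangle (we check the adjacencies directly from the figure). So the clique complex is this disk with one edge from each corner to an inner vertex, together with the inner $4$-cycle. We would verify contractibility by a short sequence of collapses, or by showing that the graph is dismantlable in a suitable order: first remove dominated vertices where possible, and otherwise argue homotopically. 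This requires a careful reading of the actual edge set; if the inner $4$-cycle were not filled, it would create a loop. Hence we expect the real edges to include $p\sim a$ and similar, giving triangles $\{a,b,p\}$ etc. that fill the inner structure. The main check is exactly this: confirming from the figure's edges (in particular the triangles $\{a,p,s\}$-type fillings) that $\K(G)$ has no $1$-dimensional hole. If needed, we would instead compute $\pi_1$ via van Kampen on the cover by the cone at $o$ and the remaining part.

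\emph{Geodesic agreement is not solvable.} Since cliques are geodesically convex, \Cref{coro:convex-agreement-solvability} says that geodesic agreement is wait-free solvable for $n=3$ only if $\K(G[S])$ is $1$-connected for every nonempty geodesically convex $S$. We take $S=\langle\{a,b,c\}\rangle$, the geodesic hull. We compute distances: $d(a,c)=2$, and the shortest $a$–$c$ paths run through $b$, $d$ and $o$. The claim in the overview is that this hull induces a $4$-cycle. So we verify which vertices lie on shortest paths among $a,b,c$ and iterate the closure until it stabilises. Here the set $\{a,b,c,d\}$ is key: we must check that it is geodesically closed, i.e.\ that $d(a,c)=d(b,d)=2$ and that every length-$2$ path between them stays inside the set, which forces the hull not to contain $o$. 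We then show that $G[S]$ is the induced $4$-cycle $a,b,c,d$ with no chords. Its clique complex is homeomorphic to $S^1$, so $\pi_1\neq 0$, and the task has no wait-free $3$-process protocol.

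The main obstacle is the bookkeeping of the graph's exact adjacencies: confirming contractibility of $\K(G)$ while simultaneously confirming that the geodesic hull of $\{a,b,c\}$ avoids $o$ and is chordless. These two requirements pull in opposite directions, since $o$ must cone the outer cycle in $\K(G)$ yet lie off all geodesics. Once the edges are fixed, both parts reduce to the cited corollaries.
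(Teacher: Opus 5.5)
\textbf{There is a genuine gap: the graph you write down has neither property.}

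Your overall strategy is the paper's. With $t=2$, \Cref{coro:ca} gives a wait-free $3$-process clique agreement protocol once $\K(G)$ is contractible. \Cref{coro:convex-agreement-solvability} rules out geodesic agreement once the geodesic hull of $\{a,b,c\}$ induces a subcomplex that is not $1$-connected. But the adjacencies you fix for \Cref{fig:graphs}a make both halves false.

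\emph{The geodesic half fails.} With $o\sim a,b,c,d$ and $a\not\sim c$, the path $a,o,c$ is a geodesic; you even list $o$ among the shortest $a$--$c$ paths. So $o\in\langle\{a,b,c\}\rangle$, and the hull is $\{a,b,c,d,o\}$. This induces a wheel whose clique complex is a cone, so there is no obstruction. The tension you name at the end is therefore an outright contradiction, not a bookkeeping issue: any common neighbour of $a$ and $c$ lies in the hull, so $o$ cannot both cone the outer cycle and stay off the geodesics.

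\emph{The clique half fails too.} With only the edges you list, $\K(G)$ is the cone over the outer cycle plus the unfilled inner cycle $p,q,r,s$, attached by four pendant edges. Its Euler characteristic is $9-16+4=-3$, so it is homotopy equivalent to a wedge of four circles and is not even simply connected. Your claim that $p,q,b$ do not form a triangle is also wrong for the actual graph.

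\textbf{The fix is to read the figure correctly.} The segments from the corners to $o$ pass through the inner vertices. So the edges are $ap,po$, $bq,qo$, $cr,ro$, $ds,so$, together with $pb,qc,rd,sa$ and the inner and outer $4$-cycles. Thus $o$ is adjacent only to $p,q,r,s$, and $N_G(a)=\{b,d,p,s\}$, $N_G(b)=\{a,c,p,q\}$, $N_G(c)=\{b,d,q,r\}$, $N_G(d)=\{a,c,r,s\}$.

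For the geodesic side: $N_G(a)\cap N_G(c)=\{b,d\}$ and $N_G(b)\cap N_G(d)=\{a,c\}$. Hence $S:=\{a,b,c,d\}$ is geodesically convex and equals $\langle\{a,b,c\}\rangle$. It induces a chordless $4$-cycle, so $|\K(G[S])|\cong S^1$.

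For the clique side: the only $3$-cliques are the twelve triangles $\{a,b,p\},\{b,p,q\},\{b,c,q\},\{c,q,r\},\{c,d,r\},\{d,r,s\},\{d,a,s\},\{a,s,p\}$ and $\{o,p,q\},\{o,q,r\},\{o,r,s\},\{o,s,p\}$. There are no $4$-cliques, so $\K(G)$ is a triangulated square and hence contractible.

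The outer cycle is filled not by a cone at $o$ but by the triangulated annulus together with the cone over the inner cycle. This keeps every filling vertex off the corner-to-corner geodesics, and it is the idea your proposal is missing.
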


\begin{proof}
  Assume we have 3 processes that start with inputs on the vertices $a,b,c$ on the graph $G$ depicted below:

  \GraphAHighlighted

  \noindent
  The geodesic convex hull of the inputs $\set{a,b,c}$ is the set $S = \set{a,b,c,d}$ which induces the 4-cycle $G[S]$.
  It is well-known that geodesic agreement cannot be reached on cycles of length at least 4 (see \cite{alistarh2023wait}).
  Put otherwise,
  the clique complex $\K(G[S])$ of the geodesically convex set $S$ is now homeomorphic to the non-contractible 1-dimensional sphere.
  At the same time it is easy to see that $\K(G)$ is contractible since it is a
  triangulation of the square $[0,1] \times[0,1] \subset \R^2$ and hence, by \Cref{coro:ca}, clique agreement is wait-free solvable for $n=3$ processes.
\end{proof}

\subsection{Geodesic agreement is wait-free solvable on dismantlable graphs}

In this section, we show that the class of dismantlable graphs admits a wait-free solution for the geodesic agreement task. For this, we need the following lemma that shows geodesically convex sets of dismantlable graphs induce dismantlable subgraphs. Note that the same is not true for arbitrary induced subgraphs of a dismantlable graph.

\begin{restatable}{lemma}{lemmadismantlable}\label{lemma:dismantlable-convex-hulls}
  Let $G = (V,E)$ be a dismantlable graph and $S \subseteq V$.
  If $S$ is  geodesically convex in $G$, then
  the induced subgraph $G[S]$ is dismantlable.
\end{restatable}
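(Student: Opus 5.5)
The plan is to induct on $|V|$ along a dismantling order $v_1,\dots,v_k$ of $G$. Set $G' := G[V\setminus\{v_1\}]$ and let $u\neq v_1$ be a vertex with $N_G[v_1]\subseteq N_G[u]$. The order $v_2,\dots,v_k$ shows that $G'$ is dismantlable.

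\textbf{Key step: $G'$ is an isometric subgraph of $G$.}

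1. Take a shortest path in $G$ between two vertices $x,y\neq v_1$ that passes through $v_1$. Its two neighbours on the path, $p$ and $q$, lie in $N_G[v_1]\subseteq N_G[u]$.
2. Vertex $u$ cannot already lie on the path. If it did, then $u$ is adjacent to both $p$ and $q$, so the subpath between $u$ and $p$ (or $q$) could be shortcut through this edge, making the path shorter. That contradicts minimality.
3. So replacing $v_1$ by $u$ gives an $x$–$y$ path of the same length that avoids $v_1$.

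Hence $d_{G'}=d_G$ on $V\setminus\{v_1\}$. Moreover, every shortest path in $G'$ is also a shortest path in $G$. Consequently, whenever $T\subseteq V\setminus\{v_1\}$ is geodesically convex in $G$, it is geodesically convex in $G'$. This is the step I expect to need the most care, since the whole induction rests on it.

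\textbf{Case analysis.} Now let $S$ be non-empty and geodesically convex in $G$. The case $|S|=1$ is trivial.

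1. $v_1\notin S$. Then $S$ is convex in $G'$ by the key step. By induction $G'[S]=G[S]$ is dismantlable.

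2. $v_1\in S$. First, $S\setminus\{v_1\}$ is convex in $G'$. A shortest path in $G'$ between its vertices is a shortest path in $G$, so it lies in $S$, and it avoids $v_1$ by construction. By induction $G[S\setminus\{v_1\}]$ is dismantlable. It therefore suffices to show that $v_1$ is dominated in $G[S]$; prepending $v_1$ to the dismantling order of $G[S\setminus\{v_1\}]$ then gives one for $G[S]$.
   - If $u\in S$, then $u$ dominates $v_1$ in $G[S]$ directly.
   - If $u\notin S$, I first show $N_G[v_1]\cap S$ is a clique. Suppose $x,y\in N_G(v_1)\cap S$ were non-adjacent. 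Then $d(x,y)=2$, and $x,u,y$ is a shortest path, because $x,y\in N_G[u]$. Geodesic convexity then forces $u\in S$, a contradiction.
   - Since $S$ is convex and $G$ is connected, $G[S]$ is connected. As $|S|\ge 2$, $v_1$ therefore has some neighbour $w\in S$.
   - Because $N_G[v_1]\cap S$ is a clique containing $w$, we get $N_{G[S]}[v_1]\subseteq N_{G[S]}[w]$. So $w$ dominates $v_1$ in $G[S]$.

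Both cases yield that $G[S]$ is dismantlable, completing the induction.
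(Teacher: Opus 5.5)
Your proof is correct and follows essentially the same route as the paper. The paper also first shows that deleting the first vertex of the dismantling order gives an isometric subgraph (its Lemma~\ref{lemma:isometric}; you justify this slightly more carefully by ruling out that $u$ already lies on the path), and it then runs the same two-case analysis. Your split on whether the dominator $u$ lies in $S$ is just a rearrangement of the paper's split on whether the neighbourhood of $v_1$ in $G[S]$ is a clique, and you induct on $|V|$ via $G'$ where the paper uses reverse induction along the fixed chain $G_k \subseteq \dots \subseteq G_1$.
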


The above lemma seems to be a folklore result that follows from the fact that dismantlable graphs are closed under retractions~\cite{nowakowski1983vertex}. We give a simple self-contained proof in Appendix~\ref{apx:dismantlable}.

The next step is to observe that the clique complex $\K(G)$ of any dismantlable graph $G$ is contractible~\cite{larrion2008contractibility}, so geodesically convex sets of $G$ induce contractible subcomplexes of $\K(G)$. Therefore, \Cref{coro:convex-agreement-solvability} and \Cref{lemma:dismantlable-convex-hulls} together imply that geodesic agreement is wait-free solvable on dismantlable graphs.

\corodis*

\section{Undecidability results}

In general,  it is undecidable to determine if a $t$-resilient protocol for a given task exists in the asynchronous shared-memory model with read-write registers~\cite{gafni1998three,herlihy1997decidability}. For example, loop agreement tasks~\cite{herlihy2003classification} are colourless tasks for which the solvability question is undecidable.
Alcántara et al.~\cite{alcantara2019topology} showed that  it is undecidable to determine if clique agreement (1-gathering in their terminology) has a wait-free protocol for $n=3$ processes.

However, it is not immediately obvious if the solvability question remains undecidable when considering (a) non-wait-free protocols, and (b) the more constrained monophonic and geodesic agreement tasks.
In this section, we show that this remains the case.

\begin{theorem}\label{thm:undecidabilities}
  Let $t \geq 2$ and $n \geq t+1$.
  It is undecidable to determine if there exists an $n$-process $t$-resilient asynchronous shared-memory protocol for any of the following tasks:
  \begin{enumerate}[(a)]
    \item simplex agreement on a given simplicial complex $\A$,
    \item clique agreement on a given graph $G$,
    \item monophonic agreement on a given graph $G$, and
    \item geodesic agreement on a given graph $G$.
  \end{enumerate}
\end{theorem}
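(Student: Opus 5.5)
Our plan is to reduce from a classical undecidable problem: deciding whether a finite simplicial complex is simply connected (Novikov; equivalently, the triviality problem for finitely presented groups, via the standard realisation of any finite presentation as the fundamental group of a finite 2-dimensional complex). Since the problem is undecidable already for $2$-dimensional complexes, we would work with such complexes throughout.

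\textbf{Step 1: transfer to higher connectivity.} By \Cref{thm: SAthm}, for $t\ge 2$ and $n \ge t+1$, simplex agreement on $\A$ is $t$-resilient solvable if and only if $\A$ is $(t-1)$-connected. We therefore need that deciding $(t-1)$-connectivity is undecidable for every $t \ge 2$.

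For $t=2$ this is exactly simple connectivity, since connectivity itself is decidable. For larger $t$, we start from a finite presentation and build the standard presentation $2$-complex $\A$. We then attach cells so that all higher homotopy groups up to $\pi_{t-1}$ die whenever $\pi_1$ does. The concrete choice we would try first is to replace $\A$ by its $(t-2)$-fold suspension $\Sigma^{t-2}\A$ together with a modification that keeps $\pi_1$ visible. Since suspension kills $\pi_1$, the cleaner option is likely a homology argument: for a simply connected complex, $(t-1)$-connectivity is equivalent to the vanishing of $\tilde H_i$ for $i \le t-1$ (Hurewicz), and that is computable.

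So we would instead take a presentation complex whose $\tilde H_2$ is zero whenever $\pi_1$ is trivial. A balanced presentation, with equally many generators and relators, gives an acyclic $2$-complex whenever the group is perfect with trivial Schur multiplier, and triviality of groups remains undecidable in such restricted families. Under that choice, $\A$ is contractible if and only if it is simply connected. This means $\A$ is $(t-1)$-connected for some $t \ge 2$ if and only if it is for all $t\ge2$. Pinning down this undecidable family, with acyclicity guaranteed, is the main obstacle; we expect to cite a known result (e.g.\ on triviality of balanced presentations, or Adian--Rabin applied to a family of acyclic presentations).

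\textbf{Step 2: graphs (b), (c).} We replace $\A$ by the clique complex of its barycentric subdivision, $G=$ the comparability graph of the face poset. Then $\K(G)=\Bary\A$ is homeomorphic to $|\A|$, so \Cref{coro:ca} gives (b), and \Cref{coro:clique-minimal-paths-equivalence} gives (c).

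\textbf{Step 3: geodesic (d).} We want a local transformation $G\mapsto G'$ with two properties. First, $\K(G')\simeq\K(G)$. Second, every geodesically convex set $S\neq V(G')$ of $G'$ induces a contractible clique complex, so that by the geodesic reformulation of \Cref{coro:convex-agreement-solvability} solvability depends only on $\K(G')$.

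Our candidate is to subdivide further: apply barycentric subdivision twice. Alternatively, we would attach to each edge $uv$ a new vertex adjacent to $u,v$ (a cone over the edge, which is a homotopy-preserving elementary expansion). The aim is that geodesic hulls of any proper subset either stay within a small contractible neighbourhood or blow up to all of $V(G')$.

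Verifying this convexity property is the second technical hurdle. What we would try first is to show that in the transformed graph any two non-adjacent vertices have a geodesic hull that is either a cone (hence contractible) or already the whole graph. We would then argue by closure that hulls of larger sets behave the same way.
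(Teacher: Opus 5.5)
\textbf{Verdict: there is a genuine gap in part (d).} Parts (a)--(c) are fine and follow the paper.

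\textbf{Parts (a)--(c).} For (a) you combine \Cref{thm: SAthm} with undecidability of $(t-1)$-connectivity. Your acyclic-presentation route for $t\ge 3$ is a more explicit version of what the paper simply cites. It still rests on the known but nontrivial fact that triviality is undecidable for balanced presentations of perfect groups, which you must cite. For (b) and (c) you pass to $\skel^1\Bary\A$ with $\K(\skel^1\Bary\A)=\Bary\A$, exactly as the paper does.

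\textbf{Part (d): the property is never established, and both candidates lack it.} Step~3 names the property you need but does not prove it.

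\emph{The edge-cone construction provably fails.} A new vertex $w_{uv}$ adjacent only to $u$ and $v$ can only occur on a path as $u,w_{uv},v$, which the edge $uv$ shortcuts. Hence shortest paths between original vertices in $G'$ are exactly those of $G$, and every geodesically convex subset of $V(G)$ stays convex in $G'$. For the graph of \Cref{fig:graphs}a, $\{a,b,c,d\}$ stays convex and induces a $4$-cycle, while $\K(G')\simeq\K(G)$ is contractible. So for $t=2$, $n\ge 3$, clique agreement is solvable on $G'$ but geodesic agreement is not, and the reduction breaks.

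\emph{Barycentric subdivision does not obviously help.} In $\skel^1\Bary\K(G)$ for the same graph, the geodesic hull of $a$ and $c$ consists of three groups of vertices of $\Bary\K(G)$:
\begin{itemize}
\item the vertices $a,b,c,d$;
\item the four outer edges;
\item the four triangles of $\K(G)$ containing an outer edge.
\end{itemize}
One checks this set is convex; the only length-$4$ geodesics from $a$ to $c$ pass through $b$ or $d$. Its clique complex is four disks glued cyclically at $a,b,c,d$, hence homotopy equivalent to a circle, while $\Bary\K(G)$ is a disk. Subdivision essentially rescales the metric: original vertices at distance $k$ end up at distance exactly $2k$. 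So the short way around the hole persists, and you give no argument that a second round changes this. Your plan to ``argue by closure'' for larger sets is also unsupported.

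\textbf{The missing idea.} You need a gadget in which the geodesic hull of any non-clique is forced to be the whole vertex set, so that geodesic and clique agreement become literally the same task. The paper replaces each vertex $v$ by an induced path $P_v=v^0v^1v^2$ and, for every edge $uv$, joins $P_u$ and $P_v$ completely. Then:
\begin{itemize}
\item \emph{A non-clique convex set contains some full $P_w$.} Suppose a convex set $S$ contains two non-adjacent vertices. Either they are $v^0,v^2$, which forces $v^1$ into $S$. Or a shortest path $z_0z_1z_2\dots$ between them has every vertex of $P_{w_1}$ as a common neighbour of the non-adjacent pair $z_0,z_2$, so $P_{w_1}\subseteq S$.
\item \emph{A full $P_w$ forces everything.} Once $P_w\subseteq S$, every $z\in P_u$ with $u\in N_G(w)$ lies on the geodesic $w^0zw^2$. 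By connectivity of $G$, $S=V(H)$.
\item \emph{The homotopy type is unchanged.} $v^0$ and $v^2$ are dominated by $v^1$, so they can be removed by strong collapses (\Cref{lemma:domination-he}). This leaves a copy of $G$, giving $\K(H)\simeq\K(G)$.
\end{itemize}
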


We start by proving the first three results in \Cref{sec:clique-undecidability}. The last result requires some more work, so we deal with it separately in \Cref{sec:geodesic-undecidability}.
Throughout, we only consider connected graphs, as clearly, approximate agreement tasks are not solvable otherwise.

\subsection{Undecidability results for simplex, clique and monophonic agreement}\label{sec:clique-undecidability}

It is well-known that determining whether a given simplicial complex is $m$-connected for any $m \geq 1$ is  undecidable. This is because it reduces to the group
triviality problem, which is undecidable; see \cite{groupunsolv}. For an explicit construction that can be used for the reduction, we refer to \cite[Proposition 1.26]{Hatcher}.

\begin{theorem}\label{thm:undecidable-connected}
  Let $m \ge 1$.
  It is undecidable to determine if a simplicial complex is $m$-connected.
\end{theorem}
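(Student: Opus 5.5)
We reduce from the triviality problem for finitely presented groups, which is undecidable by the Adian--Rabin theorem \cite{groupunsolv}. For $m = 1$ the reduction goes as follows. Given a finite presentation $\langle g_1,\dots,g_k \mid r_1,\dots,r_\ell\rangle$ of a group $\Gamma$, build the presentation $2$-complex $X_\Gamma$: a wedge of $k$ circles with one $2$-cell attached along each relator word. By \cite[Proposition 1.26]{Hatcher}, $\pi_1(X_\Gamma) \cong \Gamma$, and $X_\Gamma$ is path-connected.

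We then triangulate $X_\Gamma$ effectively. First subdivide each circle into enough edges that each relator word becomes a simplicial loop. Then realise each $2$-cell as a cone over a sufficiently subdivided polygon whose boundary maps onto that loop. Finally take a barycentric subdivision so that the result is a genuine simplicial complex with no identified faces. This is a purely algorithmic construction, producing a finite simplicial complex $\A_\Gamma$ with $|\A_\Gamma| \simeq X_\Gamma$. Consequently $\A_\Gamma$ is $1$-connected if and only if $\Gamma$ is trivial. A decider for $1$-connectivity would therefore decide group triviality, which is impossible.

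For general $m \ge 1$ we lift the dimension by suspension. Let $\Sigma^{m-1}\A_\Gamma$ denote the $(m-1)$-fold simplicial suspension, i.e.\ the join with $m-1$ copies of a two-point complex. This is again a finite simplicial complex, computable from $\A_\Gamma$.
\begin{itemize}
\item If $\Gamma$ is trivial, then $X_\Gamma$ is simply connected. Its $H_2$ is free, and it is a $2$-dimensional complex, so $X_\Gamma$ is homotopy equivalent to a wedge of $2$-spheres. Suspending $m-1$ times yields a wedge of $(m+1)$-spheres, which is $m$-connected.
\item If $\Gamma$ is nontrivial, we want some homotopy group $\pi_j$ with $j \le m$ of the suspension to be nonzero.
\end{itemize}

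The nontrivial case is the main obstacle. Suspension kills $\pi_1$ and retains only the homology, so a perfect nontrivial $\Gamma$ would be invisible after suspension.

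To avoid this, we should instead reduce from a more robust undecidable problem: deciding whether a finitely presented group is trivial \emph{given} that it is perfect (i.e.\ $H_1 = 0$), or equivalently whether $H_1$ and $\pi_1$ vanish together. An alternative that avoids suspension altogether is to attach to $X_\Gamma$ a wedge summand that makes higher connectivity hinge on $\pi_1$. The cleanest instance is to take $\A_\Gamma \vee S^{m}$-type constructions, or to use the fact that a $2$-complex with trivial $\pi_1$ and $H_2 = 0$ is contractible.

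The approach I would try first is to restrict to balanced presentations ($k = \ell$) with $H_1(X_\Gamma) = 0$. This class suffices for the Adian--Rabin undecidability, since the standard reductions produce such presentations. For a $2$-complex in this class, the Euler characteristic gives $H_2 = 0$. Hence $X_\Gamma$ is contractible when $\Gamma = 1$, and is not $1$-connected otherwise. Therefore $X_\Gamma$ is $m$-connected for every $m \ge 1$ if and only if $\Gamma$ is trivial, with no suspension needed.

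With this, deciding $m$-connectivity of $\A_\Gamma$ would decide triviality within this undecidable class, which proves the theorem.
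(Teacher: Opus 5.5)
Your $m=1$ argument is the reduction the paper points to: Adian--Rabin undecidability of triviality \cite{groupunsolv}, combined with the presentation complex of \cite[Proposition 1.26]{Hatcher}, effectively triangulated. You are right that this alone does not settle $m\ge 2$, because a presentation of the trivial group can have presentation complex $\simeq \bigvee S^2$. You are also right to discard suspension. A suspension is simply connected, and for simply connected complexes $m$-connectivity is decidable via Hurewicz and a Smith normal form computation of homology.

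The gap is the step meant to rescue $m\ge 2$: the claim that triviality stays undecidable for balanced presentations with $H_1=0$, ``since the standard reductions produce such presentations''. They do not. Adian--Rabin-type constructions give no control over the deficiency and typically have many more relators than generators. Your own Euler characteristic argument makes the difficulty precise. A finite connected $2$-complex is contractible if and only if, after collapsing a maximal tree, it is the presentation complex of a balanced presentation of the trivial group. So the statement you need is equivalent to undecidability of contractibility for finite $2$-complexes. To my knowledge that is a long-standing open problem, closely tied to the Andrews--Curtis circle of questions. As written, the case $m\ge 2$ is therefore unproved.

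To close it, you need an effectively constructed family of complexes whose reduced homology vanishes through degree $m$ but whose $\pi_1$-triviality is undecidable, and this forces you out of dimension $2$. The standard source is Novikov's proof that $S^n$ is algorithmically unrecognisable for $n\ge 5$. It effectively produces triangulated homology $n$-spheres $M_w$ for which it is undecidable whether $\pi_1(M_w)$ is trivial. If $\pi_1(M_w)=1$, then $M_w$ is $(n-1)$-connected by Hurewicz; otherwise it is not even $1$-connected. Taking $n\ge\max(5,m+1)$ gives the reduction for every $m\ge 1$ at once.
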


This immediately implies the first result.

\begin{lemma}
  Solvability of simplex agreement is undecidable for any $t \ge 2$ and $n > t$.
\end{lemma}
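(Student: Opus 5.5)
The plan is a direct reduction from \Cref{thm:undecidable-connected} via \Cref{thm: SAthm}. Fix $t \ge 2$ and $n > t$, and set $m := t-1 \ge 1$. By \Cref{thm: SAthm}, simplex agreement on a simplicial complex $\A$ has a $t$-resilient $n$-process shared-memory protocol if and only if $\A$ is $(t-1)$-connected, i.e., $m$-connected. The hypothesis $n > t$ is exactly the condition $0 \le t < n$ required by \Cref{thm: SAthm}, and the equivalence holds for every finite complex $\A$.

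Suppose, for contradiction, that some algorithm decides, given a finite simplicial complex $\A$, whether a $t$-resilient $n$-process protocol for simplex agreement on $\A$ exists. The reduction is the identity map on inputs: given $\A$, run the hypothetical algorithm on the instance $(\A, t, n)$ with the fixed parameters. By the equivalence above, its answer is exactly whether $\A$ is $m$-connected. This would decide $m$-connectivity of arbitrary simplicial complexes for $m = t-1 \ge 1$, contradicting \Cref{thm:undecidable-connected}. Hence solvability is undecidable.

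There is essentially no obstacle beyond checking that the parameter ranges match. Specifically, $t \ge 2$ is needed so that $m = t-1 \ge 1$ lands in the range covered by \Cref{thm:undecidable-connected}. For $t = 1$ the condition becomes $0$-connectivity (path-connectedness), which is decidable, so that case must be excluded. The bound $t < n$ is needed so that \Cref{thm: SAthm} applies. One should also note that the reduction is computable and uniform: the input complex is passed through unchanged, and $t, n$ are fixed constants.
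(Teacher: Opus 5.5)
Your proposal is correct and follows essentially the same route as the paper: an identity reduction through \Cref{thm: SAthm}, which equates $t$-resilient solvability with $(t-1)$-connectivity of $\A$, combined with the undecidability of $m$-connectivity for $m = t-1 \ge 1$ from \Cref{thm:undecidable-connected}. Your extra parameter checks ($n > t$ for \Cref{thm: SAthm}, and $t = 1$ being excluded because path-connectedness is decidable) are accurate but add nothing beyond what the paper's proof already uses.
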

\begin{proof}
  By \Cref{thm: SAthm}, simplex agreement has a $t$-resilient $n$-process protocol if and only if $\A$
  is $(t-1)$-connected.
  By \Cref{thm:undecidable-connected}, we know that $(t-1)$-connectedness is undecidable for all $t \geq 2$, and hence, so is the solvability of the $t$-resilient simplex agreement.
\end{proof}

Recall that the \emph{barycentric subdivision} $\Bary\A$ of a simplicial complex $\A$ is the abstract simplicial complex defined as follows: The vertices of $\Bary \A$ are the simplices of $\A$ and a subset $\set{\sigma_0, \ldots, \sigma_h} \subseteq \A$ is a simplex of $\Bary\A$ if and only if the simplices can be ordered as a chain of subsets
\(
  \sigma_0 \subsetneq \sigma_1 \subsetneq \ldots \subsetneq \sigma_h
\).

\begin{lemma}\label{lemma:ca-undecidablility}
  Solvability of clique agreement is undecidable for any $t \ge 2$ and $n > t$.
\end{lemma}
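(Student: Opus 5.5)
We reduce from the undecidable problem of \Cref{thm:undecidable-connected}, with $m = t-1 \ge 1$. Given a finite simplicial complex $\A$, we compute the graph $G_\A$, the $1$-skeleton of $\Bary\A$. Its vertices are the simplices of $\A$, and two simplices are adjacent exactly when one is a proper subset of the other. This graph is clearly computable from $\A$, and its size is at most exponential in the size of $\A$.

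The key claim is that $\K(G_\A) = \Bary\A$, that is, $\Bary\A$ is a flag complex. Suppose $\{\sigma_0,\ldots,\sigma_h\}$ is a clique of $G_\A$. Then any two of its members are comparable under inclusion, so the set is totally ordered by inclusion. Since it is finite and its elements are distinct, it can be written as a chain $\sigma_0 \subsetneq \cdots \subsetneq \sigma_h$, which is exactly a simplex of $\Bary\A$. Conversely, every chain is pairwise comparable and hence a clique. So the clique complex of $G_\A$ coincides with $\Bary\A$.

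Next we use the standard fact that $|\Bary\A|$ is homeomorphic to $|\A|$: the barycentric subdivision is a subdivision of the geometric realisation. Consequently $\K(G_\A)$ is $(t-1)$-connected if and only if $\A$ is $(t-1)$-connected. By \Cref{coro:ca}, clique agreement on $G_\A$ has a $t$-resilient $n$-process protocol if and only if $\K(G_\A)$ is $(t-1)$-connected, and this holds if and only if $\A$ is $(t-1)$-connected.

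Therefore any decision procedure for $t$-resilient solvability of clique agreement, for some fixed $t \ge 2$ and $n > t$, would decide $(t-1)$-connectivity of arbitrary simplicial complexes. This contradicts \Cref{thm:undecidable-connected}. The same reduction, combined with \Cref{coro:clique-minimal-paths-equivalence}, also gives undecidability for monophonic agreement.

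There is no real obstacle here. The only points needing care are the flagness of $\Bary\A$ and the citation for the homeomorphism $|\Bary\A| \cong |\A|$, which we would quote from a standard reference such as Kozlov or Hatcher.
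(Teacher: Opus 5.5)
Your proposal is correct and follows essentially the same route as the paper. Both take the graph $\skel^1\Bary\A$, prove $\K(\skel^1\Bary\A)=\Bary\A$ via pairwise comparability of a clique's members, and combine this with \Cref{coro:ca} and \Cref{thm:undecidable-connected}. The only cosmetic difference is that you quote the homeomorphism $|\Bary\A|\cong|\A|$, whereas the paper quotes the weaker but sufficient homotopy equivalence $\Bary\A\he\A$.
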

\begin{proof}
  We reduce deciding $(t-1)$-connectivity of $\A$ to deciding if clique agreement is $t$-resilient solvable on a given graph.
  Let $\A$ be an arbitrary simplicial complex.
  Consider the underlying graph of its barycentric subdivision, i.e., the graph $\skel^1\Bary\A$.
  This is the graph $G=(V,E)$ defined by $V = V(\Bary \A)$ and $E = \{ \sigma \in \Bary \A : \dim \sigma = 1\}$.
  We claim~that
  \begin{equation*}
    \K(G) = \Bary \A .
  \end{equation*}
  Indeed, it suffices to prove that for every subset of vertices
  \[
    X \subseteq V(\Bary\A)
  \]
  the set $X$ is a clique in the graph $G$ if and only if $X$ is a simplex in $\Bary\A$. The ``if''-direction follows immediately by definition, since the
  edges of every simplex form a clique.

  For the other direction, assume $X$ is a clique in $G$.
  We show that $X$ is a simplex in $\Bary\A$.
  By definition of barycentric subdivision,
  a set $X = \set{\sigma_0, \ldots, \sigma_h} \subseteq \A$ is a simplex in $\Bary\A$ if one can
  relabel the indices to get a chain of inclusions
  \(
    \sigma_0 \subsetneq \sigma_1 \subsetneq\ldots \subsetneq \sigma_h
  \).
  Since every pair $\sigma, \sigma' \in X \subseteq \A$ forms an edge in $\Bary\A$, we have either $\sigma \subseteq \sigma'$ or $\sigma' \subseteq \sigma$.
  Hence, the set $\Set{\sigma_i}{0 \le i \leq h}$ is totally ordered by the inclusion relation.
  Thus, $X \in \Bary \A$.

  To complete the reduction, we have established that $\K(G) = \Bary \A$.
  We know that clique agreement is $t$-resilient solvable if and only if $\K(G)$ is $(t-1)$-connected by \Cref{coro:ca}.
  It is well-known that $\Bary \A \he \A$, that is, barycentric subdivision preserves the homotopy type. Therefore, any procedure for deciding if clique agreement is $t$-resilient solvable
  would decide if $\A$ is $(t-1)$-connected, which is not possible by \Cref{thm:undecidable-connected}.
\end{proof}

\begin{lemma}
  Solvability of monophonic agreement is undecidable for $t \ge 2$ and $n > t$.
\end{lemma}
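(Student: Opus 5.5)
The plan is to reduce from clique agreement, using \Cref{coro:clique-minimal-paths-equivalence}. That corollary states that for every graph $G$ and every $0 \le t < n$, clique agreement on $G$ has a $t$-resilient $n$-process protocol if and only if monophonic agreement on $G$ does. So the two decision problems, each taking a graph $G$ as input, have exactly the same yes-instances. The reduction is therefore the identity map on input graphs.

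Concretely, suppose toward a contradiction that some algorithm decides, for a given graph $G$, whether monophonic agreement has a $t$-resilient $n$-process protocol, for some fixed $t \ge 2$ and $n > t$. Running this algorithm unchanged on $G$ then also decides whether clique agreement on $G$ has such a protocol. This contradicts \Cref{lemma:ca-undecidablility}.

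We can also bypass clique agreement and state the chain explicitly. Given an arbitrary simplicial complex $\A$, form the graph $G = \skel^1 \Bary \A$. As shown in the proof of \Cref{lemma:ca-undecidablility}, $\K(G) = \Bary\A \he \A$. By \Cref{coro:ca} together with \Cref{coro:clique-minimal-paths-equivalence}, monophonic agreement on $G$ is $t$-resilient solvable if and only if $\A$ is $(t-1)$-connected. Since $t-1 \ge 1$, this property is undecidable by \Cref{thm:undecidable-connected}.

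There is no real obstacle here: all the topological work sits in \Cref{thm:monophonic-sets-are-connected}, which is already packaged into \Cref{coro:clique-minimal-paths-equivalence}. The only point to check is that the equivalence holds uniformly for the same fixed $t$ and $n$, which it does, so the reduction is valid for all $t \ge 2$ and $n > t$.
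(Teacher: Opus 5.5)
Your proposal is correct and matches the paper's proof: the paper also uses \Cref{coro:clique-minimal-paths-equivalence} to conclude that monophonic and clique agreement have the same yes-instances, and then appeals to \Cref{lemma:ca-undecidablility}. Your second paragraph, which goes through $\skel^1 \Bary \A$, only unfolds the proof of that lemma, so it is the same argument and not a different route.
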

\begin{proof}
  By \Cref{coro:clique-minimal-paths-equivalence} monophonic agreement has a  $t$-resilient $n$-process protocol on $G$ if and only if clique agreement has
  one on $G$. Hence, monophonic agreement is also undecidable.
\end{proof}

\subsection{Undecidability result for geodesic agreement}\label{sec:geodesic-undecidability}

We now reduce the problem of deciding the existence of clique agreement protocols to the problem of deciding the existence of geodesic agreement protocols.
Without loss of generality, it suffices to consider only connected graphs, as geodesic agreement can never be solved on disconnected graphs.
For the reduction, we transform any connected graph $G$ to a graph $H$ with the following properties:
\begin{enumerate}
  \item clique agreement and geodesic agreement are the same task on $H$, and
  \item $\K(G)$ and $\K(H)$ are homotopy equivalent.
\end{enumerate}

\newcommand{\vzero}{v^{0}}
\newcommand{\vone}{v^{1}}
\newcommand{\vtwo}{v^{2}}

\subparagraph{The construction.}
Let $G$ be any connected input graph.
We construct the graph $H$ from $G$ by replacing edges by a simple gadget
as follows:
\begin{itemize}
  \item Each vertex $v \in V(G)$ is replaced with a path $P_v = \{\vzero, \vone, \vtwo \}$ of length 2 from $\vzero$ to $\vtwo$.
  \item For each edge $\{u,v\} \in E(G)$, we connect in $H$ each vertex of $P_u$ with all vertices of $P_v$.
\end{itemize}
\begin{figure}
  \centering
  \includegraphics[width=\textwidth]{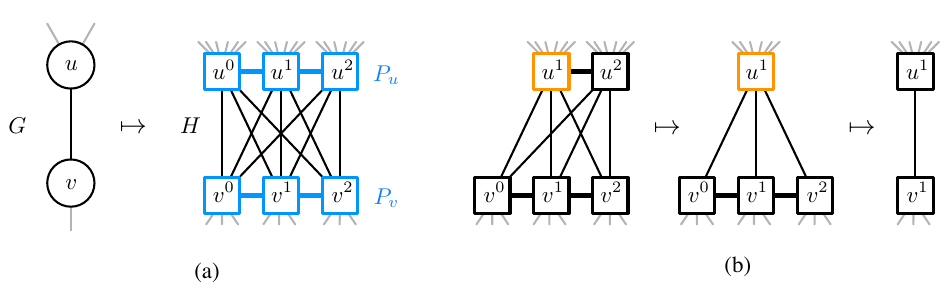}
  \caption{Gadget for the undecidability reduction. (a)
    The graph $H$ is constructed from $G$ by replacing each vertex $v$ of $G$ by a path $P_v$. For any edge $\{u,v\}$ in $G$, we add all edges from $P_u$ to $P_v$. (b) Local collapses in $H$. The vertex $u^1$ dominates
    $u^0$ in $H$, so we can remove the vertex $u^0$ from $H$ without changing the homotopy type of its clique complex. The same applies to $u^2$ after removing $u^0$. Similarly, we can remove $\vzero$ and $\vone$ for each $v$ to obtain a graph isomorphic to $G$.
  }\label{fig:undecidability}
\end{figure}
\Cref{fig:undecidability}a illustrates the construction.
Since $G$ is connected, the graph $H$ is also connected.
Next, we prove that the geodesically convex sets in $H$ are precisely cliques or $V(H)$.
This implies that the clique agreement and geodesic agreement tasks are the same task on $H$.

\begin{lemma}\label{lemma:geodesic-sets-in-h}
  Let $S$ be a geodesically convex set in $H$.
  If $S$ is not a clique, then $S = V(H)$.
\end{lemma}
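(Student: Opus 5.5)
The plan is to first pin down the distance structure of $H$, and then show that any pair of non-adjacent vertices in $S$ forces, by geodesic convexity, a complete path $P_v$ into $S$. From there the whole vertex set follows.

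\emph{Distances in $H$.} Write $\pi\colon V(H)\to V(G)$ for the projection $v^i\mapsto v$. By construction:
\begin{itemize}
  \item two vertices $x,y$ with $\pi(x)\neq\pi(y)$ are adjacent in $H$ iff $\pi(x)\pi(y)\in E(G)$;
  \item inside $P_v$, the only non-adjacent pair is $\{\vzero,\vtwo\}$.
\end{itemize}
Hence $\pi$ maps every walk in $H$ to a walk in $G$, up to deleting repeated consecutive vertices. Conversely, every path $u=z_0,z_1,\ldots,z_k=w$ in $G$ lifts to a path in $H$ by choosing any copies $z_j^{i_j}$, since consecutive blobs are completely joined. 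It follows that $d_H(x,y)=d_G(\pi(x),\pi(y))$ whenever $\pi(x)\neq\pi(y)$. Within a blob, $d_H(\vzero,\vtwo)=2$. I will record this as the first step.

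\emph{Case A: $S\supseteq\{\vzero,\vtwo\}$ for some $v$.} The path $\vzero,\vone,\vtwo$ has length $2$, so $\vone\in S$. For every neighbour $u$ of $v$ in $G$ and every $i$, the path $\vzero,u^i,\vtwo$ also has length $2$, so $P_u\subseteq S$. In particular $u^0,u^2\in S$, so the same argument applies to $u$. Since $G$ is connected, induction along $G$ gives $P_w\subseteq S$ for every $w$, that is, $S=V(H)$. This also covers the degenerate case where $G$ is a single vertex.

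\emph{Case B: $S$ contains non-adjacent $x,y$ with $\pi(x)=u\neq w=\pi(y)$.} Non-adjacency gives $k:=d_G(u,w)\ge 2$. Take a shortest $u$–$w$ path $u=z_0,z_1,\ldots,z_k=w$ in $G$. For each $i\in\{0,1,2\}$, the sequence $x,z_1^{i},z_2^{0},\ldots,z_{k-1}^{0},y$ is a path of length $k=d_H(x,y)$, hence a geodesic. Therefore $z_1^0,z_1^2\in S$, and we are reduced to Case A.

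Since $S$ is not a clique, it contains a non-adjacent pair, and by the two observations about adjacency every such pair falls into Case A or Case B. This completes the argument. The only real care needed is the distance formula $d_H=d_G\circ\pi$ across different blobs. I would justify it precisely via the projection argument above: a shortest $H$-path between different blobs projects to a $G$-walk of at most the same length, and lifts give the reverse inequality.
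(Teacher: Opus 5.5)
Your proof is correct and has the same structure as the paper's: first show that once $S$ contains a whole blob $P_v$ (equivalently $v^0$ and $v^2$), connectivity of $G$ forces $S = V(H)$; then split a non-adjacent pair into the same-blob case and the different-blob case. The only difference is in the second case: you state the formula $d_H(x,y)=d_G(\pi(x),\pi(y))$ explicitly and lift a shortest $G$-path through each vertex of $P_{z_1}$, whereas the paper projects a shortest $H$-path and argues that consecutive blobs are distinct, which is the same idea made slightly more explicit in your version.
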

\begin{proof}
  Observe that if $P_v \subseteq S$ for some $v \in V(G)$, then $P_u \subseteq S$ for any $u \in N_G(v)$. This is because for each $z \in P_u$, the path $\vzero, z, \vtwo$ is a shortest path between $\vzero$ and $\vtwo$ in $H$.
  In particular, it follows by induction that if $P_v \subseteq S$, then $S = V(H)$.
  Thus, to prove the lemma, it suffices to show that $P_v \subseteq S$ for some $v \in V(G)$ whenever $S$ is not a~clique.

  Suppose that $S$ is not a clique. Then there exist non-adjacent vertices $x,y \in S$.
  Consider the following two cases:
  \begin{enumerate}
    \item Suppose that $x,y \in P_v$ for some $v \in V(G)$. Because $x$ and $y$ are non-adjacent in $H$, we have that $\{x,y\} = \{\vzero, \vtwo\}$. Because $S$ is geodesically convex, it
      is closed under shortest paths, so we get that $\vone \in S$. This implies that $P_v \subseteq S$.
    \item Suppose that $x \in P_u$ and $y \in P_v$ for some $u \neq v$.
      Because $x$ and $y$ are non-adjacent, so are $u$ and $v$.
      Let $x=z_0, \ldots, z_\ell = y$ be a shortest path of length $\ell \ge 2$ in $H$.
      Consider the walk $u=w_0, \ldots, w_\ell = v$ in $G$, where $z_i \in P_{w_i}$.
      Observe that $w_i \neq w_{i+1}$ for each $0 \le i < \ell$.
      Otherwise, we would have $z_{i}, z_{i+1} \in P_{w_i}$, which would violate the minimality of the shortest path in $H$, as by construction of $H$ we could find a strictly shorter path.
      Now $z_0, z_1, z_2$ form a path of length 2.
      So each $q \in P_{w_1}$ forms a path $z_0, q, z_2$ of length 2 in $H$.
      Because $S$ is closed under shortest paths, we have $P_{w_1} \subseteq S$. \qedhere
  \end{enumerate}
\end{proof}

We say that a vertex $v$ is dominated in $\A$ if there exists a vertex $u$ such that all maximal simplices containing $v$ also contain $u$.
Let $\A-v$ denote the subcomplex of $\A$ obtained by deleting all simplices that contain $v$.
We make use of the fact that removing a dominated vertex does not change the homotopy type,
because this operation is a strong collapse; see, e.g.,~\cite[Section 2]{barmak2012strong}.

\begin{lemma}\label{lemma:domination-he}
  Let $\A$ be a simplicial complex.
  If a vertex $v$ is dominated in $\A$, then $\A \he \A - v$.
\end{lemma}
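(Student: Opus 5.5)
The plan is to show that the inclusion $i\colon |\A - v| \to |\A|$ is a homotopy equivalence by building an explicit retraction and a straight-line homotopy. Let $u \neq v$ be a vertex such that every maximal simplex containing $v$ also contains $u$. I define a simplicial map $r\colon V(\A)\to V(\A - v)$ by $r(v)=u$ and $r(w)=w$ for $w\neq v$. Note that $u$ itself lies in $\A - v$: any maximal simplex containing $v$ contains $u$, so $\{u\}$ is a face of it not containing $v$.

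The first step is to check that $r$ is simplicial into $\A - v$. Let $\sigma\in\A$. If $v\notin\sigma$, then $r(\sigma)=\sigma\in\A-v$. If $v\in\sigma$, pick a maximal simplex $\mu\supseteq\sigma$. Then $u\in\mu$, so $\sigma\cup\{u\}\subseteq\mu$ is a simplex of $\A$. Hence $r(\sigma)=(\sigma\setminus\{v\})\cup\{u\}$ is a face of $\sigma\cup\{u\}$ not containing $v$, so it lies in $\A - v$. It is non-empty because it contains $u$. Therefore $r$ induces a continuous map $|r|\colon|\A|\to|\A - v|$, affine on each simplex. Since $r$ fixes every vertex of $\A-v$, we get $|r|\circ i=\id_{|\A-v|}$.

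The second step is to show $i\circ|r|\he\id_{|\A|}$ via the straight-line homotopy $H(s,x)=(1-s)x+s\,|r|(x)$. This is the same device as in \Cref{lemma:homotopy-eq}. For $x\in|\sigma|$, both $x$ and $|r|(x)$ lie in the convex set $|\sigma\cup\{u\}|$ when $v\in\sigma$, and in $|\sigma|$ when $v\notin\sigma$. In either case this is a simplex of $\A$ by the first step. So the segment stays inside $|\A|$, and $H$ is well defined.

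Continuity of $H$ follows because it is continuous on each closed simplex of a finite closed cover, and $|r|$ is continuous. Hence $i$ and $|r|$ are mutually inverse up to homotopy, giving $\A\he\A-v$. The only point needing care is the first step, namely that $\sigma\cup\{u\}\in\A$. Domination through maximal simplices is exactly what guarantees this, so I expect no real obstacle.
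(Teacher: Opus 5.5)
Your proof is correct. The paper does not prove this lemma itself; it only cites Barmak and Minian for the fact that deleting a dominated vertex is a strong collapse.

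Your argument is essentially the standard proof behind that citation. The retraction $r$ with $v\mapsto u$ is simplicial into $\A - v$. The maps $i\circ r$ and $\id_{\A}$ are contiguous, since each $\sigma$ and $r(\sigma)$ lie in the common simplex $\sigma\cup\{u\}\in\A$. Contiguous maps are homotopic, and you make that homotopy explicit as the straight-line homotopy, reusing the device of \Cref{lemma:homotopy-eq}.

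What your version buys is self-containment. It also gives slightly more than the statement: because $|r|$ fixes $|\A - v|$ pointwise, your $H$ is stationary there, so $|\A - v|$ is a strong deformation retract of $|\A|$.

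You were right to insist on $u\neq v$. The paper's wording of domination leaves this implicit, but it is needed. Otherwise every vertex would count as dominated, and the statement would fail: for the boundary of a triangle, deleting a vertex leaves a contractible path rather than a circle. You also correctly checked that $u\neq v$ makes $\A - v$ non-empty.
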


With the above lemma, we show that the clique complexes of $G$ and $H$ are homotopy equivalent.
The idea of the proof is illustrated in \Cref{fig:undecidability}b: for each $u \in V(G)$, we can remove vertices $u^0$ and $u^2$ of $H$ one by one without changing the homotopy type of the complex of cliques. Eventually, we end up with a graph that is isomorphic to $G$.

\begin{lemma}\label{lemma:g-and-h-are-he}
  The graphs $G$ and $H$ satisfy $\K(G) \he \K(H)$.
\end{lemma}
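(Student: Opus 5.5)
We repeatedly apply \Cref{lemma:domination-he} to $\K(H)$, deleting the vertices $u^0$ and $u^2$ for every $u \in V(G)$ one at a time. The aim is to end with the clique complex of the subgraph induced by $\{u^1 : u \in V(G)\}$, which is isomorphic to $G$.

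The first step is to translate domination in a clique complex into graph terms. Let $\K(F)$ be the clique complex of a graph $F$. The maximal simplices of $\K(F)$ are the maximal cliques of $F$. If $N_F[v] \subseteq N_F[w]$ with $w \neq v$, then every maximal clique containing $v$ also contains $w$: otherwise we could add $w$, since $w$ is adjacent to $v$ and to all neighbours of $v$. So $v$ is dominated in $\K(F)$ in the sense of \Cref{lemma:domination-he}. Moreover $\K(F) - v = \K(F - v)$, because the simplices avoiding $v$ are exactly the cliques of $F - v$. Hence removing a graph-dominated vertex preserves the homotopy type of the clique complex.

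Next, fix an enumeration $u_1, \dots, u_k$ of $V(G)$. Let $H_0 = H$, and for $i \ge 1$ let $H_i$ be obtained from $H_{i-1}$ by deleting $u_i^0$ and then $u_i^2$. We check the two domination claims.

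\begin{itemize}
\item In $H_{i-1}$, the closed neighbourhood of $u_i^0$ is $\{u_i^0, u_i^1\}$ together with the surviving vertices of $P_w$ for $w \in N_G(u_i)$. This set is contained in the closed neighbourhood of $u_i^1$, which additionally contains $u_i^2$. So $u_i^0$ is dominated by $u_i^1$.
\item After deleting $u_i^0$, the closed neighbourhood of $u_i^2$ is $\{u_i^1, u_i^2\}$ together with the same surviving neighbour vertices. This is again contained in the closed neighbourhood of $u_i^1$, so $u_i^2$ is dominated by $u_i^1$.
\end{itemize}

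The only subtlety is that earlier deletions change neighbourhoods. This causes no problem: for each $w$, the vertex $w^1$ is never deleted, and the surviving parts of the paths $P_w$ appear identically in both neighbourhoods being compared.

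Finally, $H_k$ is the graph induced by $\{u^1 : u \in V(G)\}$. In it, $u^1$ and $v^1$ are adjacent if and only if $\{u,v\} \in E(G)$, since $u^1$ is only adjacent to vertices of $P_u$ and of $P_w$ for $w \in N_G(u)$. Thus $H_k \cong G$, and chaining the homotopy equivalences gives $\K(H) \he \K(H_k) \cong \K(G)$.

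I expect no serious obstacle. The step needing the most care is the domination-to-complex translation together with the identity $\K(F)-v=\K(F-v)$, but both are straightforward.
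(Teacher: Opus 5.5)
Your proof is correct and follows the paper's argument. For each $u \in V(G)$ you delete $u^0$ and then $u^2$ via \Cref{lemma:domination-he}, with $u^1$ as the dominating vertex, and the remaining graph on $\{u^1 : u \in V(G)\}$ is isomorphic to $G$; you also spell out two points the paper leaves implicit, namely that closed-neighbourhood domination in a graph $F$ gives domination in $\K(F)$ and that $\K(F)-v=\K(F-v)$.
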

\begin{proof}
  Let $v_1, \ldots, v_k$ be the vertices of $G$.
  Define $H_0 = H$. For $0 \le i < k$, let $H_{i+1}$ be the graph obtained from $H_i$ by deleting the vertices $\vzero_{i+1}$ and $\vtwo_{i+1}$.
  Note that  $\vzero_{i+1}$ and $\vtwo_{i+1}$ are dominated by $\vone_{i+1}$ in $H_i$, and thus, also in $\K(H_i)$.

  Applying \Cref{lemma:domination-he} to the complex $\K(H_i)$ and $\vzero_{i+1}$, and then applying \Cref{lemma:domination-he} again  with $\vtwo_{i+1}$ to the resulting complex yields
  \[
    \K(H_{i}) \he \K(H_{i+1})
  \]
  for all $0 \le i < k$. Thus, $\K(H) \he \K(H_k)$ by iteration.
  Note that the map $\vone_i \mapsto v_i$ gives an isomorphism between $H_k$ and $G$.
  Because $H_k$ and $G$ are isomorphic, so are their clique complexes, which implies that $\K(H_k)\he \K(G)$. Hence, we get
  \[
    \K(H) \he \K(H_k)   \he \K(G). \qedhere
  \]
\end{proof}

\begin{lemma}
  The solvability of geodesic agreement is undecidable for any $t \ge 2$ and $n > t$.
\end{lemma}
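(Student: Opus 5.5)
We argue by reduction from the decision problem for clique agreement, which is undecidable for $t \ge 2$ and $n > t$ by \Cref{lemma:ca-undecidablility}. Suppose, for contradiction, that some procedure decides whether geodesic agreement on a given graph has a $t$-resilient $n$-process protocol. Given an input graph $G$, we first decide whether $G$ is connected. If it is not, then $\K(G)$ is not even $0$-connected, so clique agreement on $G$ is unsolvable for $t \ge 2$ by \Cref{coro:ca}, and we answer accordingly. If $G$ is connected, we effectively construct the graph $H$ described above, replacing each vertex by a path $P_v$ and joining $P_u$ completely to $P_v$ for each edge. This construction is clearly computable, with $|V(H)| = 3|V(G)|$.

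Next, we show that geodesic agreement on $H$ is solvable exactly when clique agreement on $H$ is. By \Cref{lemma:geodesic-sets-in-h}, every geodesically convex set of $H$ is either a clique or all of $V(H)$. Cliques are also geodesically convex, so the geodesic convexity of $H$ contains $\K(H)$. Hence the second part of \Cref{coro:convex-agreement-solvability} applies: geodesic agreement on $H$ is $t$-resilient solvable if and only if $\K(H)[S]$ is $(t-1)$-connected for every non-empty geodesically convex $S$. For cliques $S$ this holds automatically, since $\K(H)[S]$ is a simplex and hence contractible. The condition therefore reduces to $(t-1)$-connectivity of $\K(H)$ itself. By \Cref{coro:ca}, this is equivalent to clique agreement on $H$ being solvable.

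Finally, by \Cref{lemma:g-and-h-are-he} we have $\K(H) \he \K(G)$. Homotopy groups are invariant under homotopy equivalence, so $\K(H)$ is $(t-1)$-connected if and only if $\K(G)$ is. Chaining the equivalences, geodesic agreement on $H$ is solvable if and only if clique agreement on $G$ is solvable. The assumed decision procedure for geodesic agreement would thus decide clique agreement, contradicting \Cref{lemma:ca-undecidablility}.

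The only point needing care is the collapse of the convexity condition to the single complex $\K(H)$. This rests on \Cref{lemma:geodesic-sets-in-h} being the complete list of geodesically convex sets, together with the observation that simplices are contractible. Once those are in hand, the rest is bookkeeping.
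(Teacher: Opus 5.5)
Your proposal is correct and follows the paper's proof. It uses the same reduction from clique agreement via the gadget graph $H$, with the same chain: geodesic agreement on $H$ $\iff$ clique agreement on $H$ $\iff$ $\K(H)$ is $(t-1)$-connected $\iff$ $\K(G)$ is $(t-1)$-connected $\iff$ clique agreement on $G$; the only differences are cosmetic (you handle disconnected inputs explicitly where the paper uses a without-loss-of-generality remark, and you get the equivalence on $H$ from the second part of \Cref{coro:convex-agreement-solvability} rather than by noting that the two carrier maps coincide).
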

\begin{proof}
  We reduce the decision problem for clique agreement (CA) to the decision problem for geodesic agreement (GA).
  Suppose there exists a procedure for deciding the existence of a $t$-resilient  geodesic agreement protocol for $n$ processes on connected graphs.
  Clearly, for any input graph $G$, we can compute $H$.
  Then by the above lemmas and \Cref{coro:ca}, we get that
  \begin{align*}
    \text{GA is solvable on $H$} &\iff \text{CA is solvable on $H$} && \text{(by \Cref{lemma:geodesic-sets-in-h})} \\
    &\iff \K(H) \text { is $(t-1)$-connected} && \text{(by \Cref{coro:ca})} \\
    &\iff \K(G) \text { is $(t-1)$-connected} && \text{(by \Cref{lemma:g-and-h-are-he})} \\
    &\iff \text{CA is solvable on $G$} && \text{(by \Cref{coro:ca})}.
  \end{align*}
  Thus, a procedure for deciding the existence of a $t$-resilient geodesic agreement protocol for $n$ processes on connected graphs would contradict \Cref{lemma:ca-undecidablility}.
\end{proof}

\section{Conclusions}

In this work, we gave a complete topological characterisation of the solvability of simplex agreement and commonly studied variants of approximate agreement on graphs. As a special case, we gave a proof of Ledent's conjecture~\cite{ledent2021brief} that had remained open for several years.
While we focused on clique, monophonic, and geodesic agreement on graphs, our main technical result applies  to other convexity spaces on graphs in which all cliques are convex.

Our characterisation also implies new impossibility results for message-passing models parameterised by the connectivity of the clique complex. In the asynchronous model, we obtained new resilience bounds, whereas in the synchronous model, we obtained new round lower bounds.
On the other hand, our positive results are non-constructive: on the solvable graph classes, we do not obtain concrete agreement protocols.
Indeed, the undecidability results imply that, e.g., the worst-case number of iterations in the layered snapshot model needed to solve clique, monophonic, or geodesic agreement is uncomputable.
It remains an interesting open problem to obtain protocols with optimal resilience and round complexity for graph classes beyond block graphs~\cite{fuchs2026optimal}.

\subparagraph{AI Disclosure:} We used GPT 5.0, GPT 5.4 and Opus 4.7 to assist with literature search,
proofreading the manuscript, preparing the illustrations in \Cref{fig:intro-graphs} and \Cref{fig:graphs}, and identifying proof strategies.
The tools suggested proof approaches, particularly for \Cref{thm:monophonic-sets-are-connected} and \Cref{lemma:dismantlable-convex-hulls} and the gadget used in \Cref{sec:geodesic-undecidability}.
However, these proofs were corrected and simplified by the authors.
The authors verified the correctness and originality of all content including~references.

\bibliography{references}

\newpage

\appendix

\section{Geodesic convex hulls of dismantlable graphs are dismantlable}
\label{apx:dismantlable}

In this appendix, we give a proof of the following lemma.

\lemmadismantlable*

Throughout, let $v_1, \ldots, v_k$ be the vertices of $G$ listed according to the dismantling order.
For each $1 \le i \le k$, let $G_i$ be the subgraph induced by $\{v_i, \ldots, v_k\}$.

\begin{lemma}\label{lemma:isometric}
  Let $u,v \in V(G_{i+1})$ for some $1 \le i < k$.
  If $\rho$ is a shortest path between $u$ and $v$ in $G_{i+1}$,
  then $\rho$ is a shortest path in $G_i$.
\end{lemma}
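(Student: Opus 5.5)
The plan is to show that deleting a dominated vertex cannot shorten distances. Since $G_{i+1}$ is an induced subgraph of $G_i$, every path in $G_{i+1}$ is also a path in $G_i$. Hence $d_{G_i}(u,v) \le d_{G_{i+1}}(u,v)$ for all $u,v \in V(G_{i+1})$. It therefore suffices to prove the reverse inequality $d_{G_{i+1}}(u,v) \le d_{G_i}(u,v)$. Once the two distances are equal, a shortest $u$--$v$ path $\rho$ in $G_{i+1}$ is a $u$--$v$ path in $G_i$ of length $d_{G_i}(u,v)$, and so it is a shortest path in $G_i$.

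For the reverse inequality, fix a shortest $u$--$v$ path $\rho'$ in $G_i$. If $\rho'$ avoids $v_i$, it already lies in $G_{i+1}$ and we are done. Otherwise $v_i$ is an internal vertex of $\rho'$, because $u,v \neq v_i$. Write its neighbours on the path as $a$ and $b$, so $\rho' = u,\ldots,a,v_i,b,\ldots,v$.

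By the dismantling order, $v_i$ is dominated in $G_i$: there is some $w \neq v_i$ in $G_i$ with $N_{G_i}[v_i] \subseteq N_{G_i}[w]$. Since $w \neq v_i$ and $w \in V(G_i)$, we have $w \in V(G_{i+1})$. Because $a,b \in N_{G_i}[v_i] \subseteq N_{G_i}[w]$, each of $a$ and $b$ is either equal to $w$ or adjacent to $w$.

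Now replace $v_i$ by $w$ in $\rho'$. The result is a $u$--$v$ walk in $G_{i+1}$ of the same length. If $w = a$ or $w = b$, the walk repeats a vertex and can be shortcut to something strictly shorter. In every case we obtain a $u$--$v$ walk in $G_{i+1}$ of length at most $|\rho'|$, and a walk of that length contains a path of at most that length. Hence $d_{G_{i+1}}(u,v) \le d_{G_i}(u,v)$. The vertex $v_i$ occurs at most once on $\rho'$, since shortest paths are simple, so a single substitution suffices.

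I do not expect a real obstacle. The only point needing care is the degenerate case where the dominating vertex $w$ already lies on $\rho'$, possibly as $a$ or $b$. Passing from walks to paths handles this, and it is exactly what the statement needs, since it is phrased in terms of distances.
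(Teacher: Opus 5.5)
Your proof is correct and uses the same idea as the paper: replace $v_i$ on a shortest $u$--$v$ path in $G_i$ by its dominator $w \in V(G_{i+1})$, which gives a path of no greater length lying inside $G_{i+1}$. The paper phrases this as a contradiction, while you prove $d_{G_{i+1}}(u,v) = d_{G_i}(u,v)$ directly and also treat the case where $w$ already lies on the path, but the argument is essentially the same.
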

\begin{proof}
  For the sake of contradiction,
  suppose that $\rho$ is not a shortest path between $u$ and $v$ in $G_i$.
  Then there is a strictly shorter path $\rho'$ in $G_i$ from $u$ to $v$ that visits $v_i$.
  However, then we can find a path $\rho''$ of length at most the same as $\rho'$ contained entirely in $G_{i+1}$ that visits the dominator $w$ of $v_i$ instead of $v_i$, contradicting that $\rho$ was a shortest path in $G_{i+1}$.
\end{proof}

\begin{proof}[Proof of \Cref{lemma:dismantlable-convex-hulls}]
  For each $1 \le i \le k$, define
  \(
    S_i = S \cap \{v_i, \ldots, v_k\}.
  \)
  We show that if $S_i$ is convex in $G_i$, then $G_i[S_i]$ is dismantlable. The lemma follows by observing that $G_1[S_1] = G[S]$.
  The proof is by reverse induction on $i$.
  The base case $i=k$ is trivial, because $S_k \subseteq \{v_k\}$.
  For the inductive step, suppose the claim holds for some $2 \le i+1 \le k$.
  Let $S_i$ be a convex set in $G_i$.
  If $|S_i|=1$, then $G_i[S_i]$ is trivially dismantlable. Hence, assume $|S_i|>1$.

  If $v_i \notin S_i$, then $S_i =  S_{i+1}$ is also convex in $G_{i+1}$, because by \Cref{lemma:isometric} any shortest path contained in $S_{i+1}$ is also a shortest path in $G_i$. Hence, $S_{i+1}$ is closed under taking shortest paths also in $G_i$.
  Thus, $G_{i}[S_i] = G_{i+1}[S_{i+1}]$
  is dismantlable by the induction hypothesis.

  Suppose instead that $v_i \in S_i$.
  Then  there exists a vertex $v_j \in S_{i}$ that dominates $v_i$ in $G_i[S_i]$.
  To see why, let $Q$ be the neighbours of $v_i$ in $G_i[S_i]$.
  Because $|S_i|>1$ and $G_i[S_i]$ is connected, the set $Q \neq \emptyset$.
  Consider the following two cases:
  \begin{enumerate}
    \item If $Q$ is a clique, then any $v_j \in Q$ dominates $v_i$ in $G_i[S_i]$.
    \item If $Q$ is not a clique, then there exist two non-adjacent vertices $x,y\in Q$ and $(x,v_i,y)$ is a shortest path in $G_i$.
      Since $G_i$ is dismantlable, there exists some $v_j$ that dominates $v_i$ in $G_i$.
      Thus, $(x,v_j,y)$ is a shortest path from $x$ to $y$ in $G_i$.
      This implies that $v_j \in S_{i}$. Hence, $v_j$ also dominates $v_i$ in $G_i[S_i]$.
  \end{enumerate}
  We observe that $S_{i+1}$ is convex in $G_{i+1}$.
  Let $\rho$ be a shortest path between any two vertices in $S_{i+1}$.
  By \Cref{lemma:isometric}, the path $\rho$ is a shortest path in $G_i$. Since $S_i$ is convex, this path is contained in $S_i$.
  Thus, $\rho$ is also contained in $S_i \setminus \{v_i\} = S_{i+1}$.
  Since $S_{i+1}$ is convex, the induction hypothesis yields that  $G_{i+1}[S_{i+1}]$ is dismantlable. Since $v_i$ is dominated in $G_i[S_i]$ by $v_j \in S_{i+1}$, it follows that $G_i[S_i]$ is dismantlable.
\end{proof}

\end{document}